\theoremstyle{thmstyleone}
\newtheorem{Theorem}{Theorem}[section]
\newtheorem{Corollary}{Corollary}[section]
\newtheorem{Proposition}{Proposition}[section]
\newtheorem{Lemma}{Lemma}[section]
\theoremstyle{thmstylethree}%
\newtheorem{Definition}{Definition}[section]
\newtheorem*{Assumptions}{Assumptions}
\theoremstyle{thmstyletwo}
\newtheorem{Example}{Example}[section]
\newtheorem{Remark}{Remark}[section]
\newcommand{\ind}{\mathds{1}}
\newcommand{\R}{\mathds{R}}
\newcommand{\N}{\mathds{N}}
\newcommand{\E}{\mathrm{E}}
\newcommand{\var}{{\mathrm{Var}}}
\newcommand{\wh}{\widehat}
\newcommand{\wt}{\widetilde}
\newcommand{\peq}{\phantom{=}}
\newcommand{\xix}{\bm X_{i, \bm x}}
\newcommand{\Xx}{\bm X_{\bm x}}
\newcommand{\bs}{\bm{s}}
\newcommand{\bt}{\bm{t}}
\newcommand{\bx}{\bm{x}}
\newcommand{\bz}{\bm{z}}
\newcommand{\bV}{\bm{V}}
\newcommand{\bX}{\bm{X}}
\newcommand{\bY}{\bm{Y}}
\newcommand{\Dcal}{\mathcal{D}}
\newcommand{\Fcal}{\mathcal{F}}
\newcommand{\Gcal}{\mathcal{G}}
\newcommand{\Hcal}{\mathcal{H}}
\newcommand{\Scal}{\mathcal{S}}
\DeclareMathOperator{\argmin}{\arg\,\min}
\newcommand*\colvec[3][]{
    \begin{pmatrix}\ifx\relax#1\relax\else#1\\\fi#2\\#3\end{pmatrix}
}
\numberwithin{equation}{section}
\def\boxit#1{\vbox{\hrule\hbox{\vrule\kern6pt
          \vbox{\kern6pt#1\kern6pt}\kern6pt\vrule}\hrule}}
\begin{document}

	\setcounter{page}{1}
	\pagenumbering{arabic}  
	
	
\title{On dimension reduction in conditional dependence models}


\author[1]{\fnm{Thomas} \sur{Nagler}}\email{t.nagler@lmu.de}

\author[2]{\fnm{Gerda} \sur{Claeskens}}\email{gerda.claeskens@kuleuven.be}

\author[3]{\fnm{Ir\`ene} \sur{Gijbels}}\email{irene.gijbels@kuleuven.be}

\affil[1]{\orgdiv{Department of Statistics}, \orgname{LMU Munich}, \orgaddress{\street{Ludwigstra\ss e 33}, \city{Munich}, \postcode{80539}, \country{Germany}}}

\affil[2]{\orgdiv{ORStat and Leuven Statistics Research Center}, \orgname{KU Leuven}, \orgaddress{\street{Naamsestraat 69, box 3555}, \city{Leuven}, \postcode{3000},  \country{Belgium}}}

\affil[3]{\orgdiv{Department of Mathematics and Leuven Statistics Research Center}, \orgname{KU Leuven}, \orgaddress{\street{Celestijnenlaan 200B, box 2400}, \city{Leuven}, \postcode{3001},  \country{Belgium}}}

\abstract{
Inference of the conditional dependence structure is challenging when many covariates are present. In numerous applications, only a low-dimensional projection of the covariates influences the conditional distribution. The smallest subspace that captures this effect is called the central subspace in the literature.
  We show that inference of the central subspace of a vector random variable $\bm Y$ conditioned on a vector of covariates $\bm X$ can be separated into inference of the marginal central subspaces of the components of $\bm Y$ conditioned on $\bm X$ and on the copula central subspace, that we define in this paper.
Further discussion addresses sufficient dimension reduction subspaces for conditional association measures. An adaptive nonparametric method is introduced for estimating the central dependence subspaces, achieving parametric convergence rates under mild conditions. Simulation studies illustrate the practical performance of the proposed approach.}

\keywords{dimension, reduction, conditional, copula, dependence, nonparametric}

\maketitle

    	
	\section{Introduction}

Dependence among the components of a random vector $\bm Y$ of length $q$ is an important concept in statistics and many applied disciplines. Medical scientists analyze the dependence in chemical profiles of tumor cells \citep{Helmlinger1997}, economists study the dependence between economic indicators \citep{rodriguez2007}, hydrologists model the dependence between flood characteristics \citep{zhang2006}, etc.

Often one also wants to investigate or control for the effect of other variables
$\bm X = (X_1, \dots, X_p)^\top$ on the dependence in $\bm Y$. Continuing the
examples from the fist paragraph, the medical scientist may be interested how
the dependence is influenced by the patient's age or the time passed since
outbreak of cancer, the economists may want to control for a country's size and
demography when analyzing the dependence between economic indicators, and the
hydrologists could ask for the influence of geographical properties on the
dependence between flood characteristics.

Such questions can be answered by conditional dependence models.
We approach this via modeling the conditional copula of $\bm Y$ given $\bm X$.
This is an active field of research. 
Several inference methods from conditional copula models have been investigated.
\citet{patton2001} considered fully parametric models where the copula's
parameter can be expressed as as a known, parametric function of the covariates.
More flexible semiparametric models were discussed by \citet{acar2011},
\citet{Abegaz2012}, \citet{vatter2015}, and \citet{fermanian2018single}. Here,
the copula parameter is allowed to exhibit an unknown, non- or semiparametric
relationship with the covariates. Fully nonparametric estimators of the
conditional copula and association measures were discussed by
\citet{veraverbeke2011} and \citet{gijbels2011}. In many of these contributions
the dimension $p$ of the covariate vector $\bm X$ is assumed to be small (often
$p = 1$). The reasons are diverse but evident: In parametric and semiparametric
models, structural assumptions such as linearity and additivity become more
restrictive when $p$ is large. Additionally, issues with identifiability and
numerical instability arise. For nonparametric methods, the \emph{curse of
dimensionality} deteriorates the accuracy of the estimators; a dimension $p \ge 5$
is often considered too large already.

A popular solution to issues with high dimensional covariates are techniques for
sufficient dimension reduction. The goal of such methods is to find a
projection of the covariate vector $\bm X$ onto a space of dimension $d \ll p$
such that all of the relevant information in $\bm X$ is preserved. In regression
problems, the theory of \emph{central subspaces} \citep{cook1994, cook1998,
cook2002} serves as the foundation of a broad variety of methods.
\citet{LiArtemiouLi2011} developed a principal support vector machine approach for use in linear and nonlinear sufficient dimension reduction. \citet{HaffendenArtemiou2024} used
the sliced inverse mean difference for dimension reduction
and \citet{ZhangXueLi2024} studied dimension reduction for Fr\'echet regression, to give just a few examples.

In this paper we study dimension reduction methods in conditional dependence models and find that this is closely related to dimension reduction in regression problems. The main definitions and properties of these are briefly revised in \hyperref[sec:background]{Section~\ref{sec:background}}, after introducing the copula notation in \hyperref[sec:notation-etc]{Section~\ref{sec:notation-etc}}.
Novel concepts, such as a central copula subspace, we define
in \hyperref[sec:sklar]{Section~\ref{sec:sklar}}
and we study some of its theoretical properties.
Further, we introduce and study sufficient dimension reduction in the context of conditional association measures and concordance measures in \hyperref[sec:etadr]{Section \ref*{sec:etadr}} and
\hyperref[subsec:differences]{Section \ref*{subsec:differences}}.
Estimation of the central subspaces is explained in 
\hyperref[sec:estimation]{Section \ref*{sec:estimation}}.
\hyperref[sec:sims]{Section~\ref{sec:sims}} contains simulation results. Longer proofs are collected in the Appendix.

\section{Notation, definitions and background on copulas} \label{sec:notation-etc}

The most general tool to statistically model the dependence in multivariate distributions is the \emph{copula}. According to Sklar's theorem \citep{sklar1959}, the joint distribution $F$ of the random variables $Y_1 \sim
F_1$, \dots, $Y_q \sim F_q$ can be expressed as
\begin{align} \label{eq:sklar}
 	F(\bm y) = C\bigl\{F_1(y_1), \dots,  F_q(y_q)\bigr\}, \qquad \mbox{for all }\bm y \in \R^q,
\end{align}
where the function $C\colon [0,1]^q \to [0,1]$ is called the copula of the random vector $\bm Y = (Y_1, \dots,  Y_q)^\top$.  By its definition, a copula is a joint distribution function of uniformly distributed random variables. If $Y_1, \dots, Y_q$ are continuous random variables, what we shall assume from here on, the copula $C$ is unique. More specifically, $C$ is the multivariate
distribution function of the random vector $\bm U = (U_1, \dots, U_q)^\top =
(F_1(Y_1), \dots,  F_q(Y_q))^\top$. Hence,
$ 
	C(\bm u) = P\bigl(U_{1} \le u_1, \dots, U_{q} \le u_q\bigr), \quad \mbox{for all } \bm u \in [0,1]^q.
$ 
The reverse is also true, for arbitrary marginals $F_1, \dots, F_q$ and
copula $C$, \eqref{eq:sklar} is a valid joint distribution.
Sklar's theorem shows that any joint
distribution can be decomposed into the marginal distributions and the copula.
Hence, the copula captures all of the non-marginal behavior of the random vector, i.e., the dependence between the random variables.


\citet{patton2001} stated a version of Sklar's theorem that includes a
conditioning on $\bm X$. Let $F_{\bm Y \mid \bm X}(\bm y) = P(\bm Y \le \bm y
\vert \bm X)$ be the distribution of a random vector $\bm Y$ conditional on $\bm
X$. Similarly, denote $F_{Y_j \mid \bm X}(y_j)= P(Y_j \le y_j \vert \bm X)$, $j
= 1, \dots, q$ as the conditional marginal distributions. Note that, due to
their dependence on $\bm X$, the conditional joint and marginal distributions
are random functions. Then for all $\bm y \in \R^q$ and almost surely
(\emph{a.s.} from here on),
\begin{align*}
	F_{\bm Y \mid \bm X}(\bm y) = C_{\bm X}\bigl\{F_{Y_1 \mid \bm X}(y_1), \dots, F_{Y_q \mid \bm X}(y_q)\bigr\},
\end{align*}
for some (random) copula function $C_{\bm X}\colon [0,1]^q \to [0,1]$ which is
called the conditional copula associated with the random vector $\bm Y \vert \bm
X$. It is the conditional multivariate distribution function of the random vector
\begin{align*}
	\bm U_{\bm X} = (U_{1, \bm X}, \dots, U_{q, \bm X})^\top = \bigl(F_{Y_1 \mid \bm X}(Y_1), \dots, F_{Y_q \mid \bm X}(Y_q)\bigr)^\top,
\end{align*}
conditional on $\bm X$, i.e., for all $\bm u \in [0,1]^q$,
\begin{align*}
	C_{\bm X}(\bm u) = P\bigl(U_{1, \bm X} \le u_1, \dots, U_{q, \bm X} \le u_q \vert\bm X\bigr) \quad a.s.
\end{align*}

  \section{Background on sufficient dimension reduction in regression problems}
\label{sec:background}

First, we revise the needed definitions and concepts before introducing the novel copula dimension reduction methods in Section~\ref{sec:sklar}.

Consider a regression problem with (possibly multivariate) response $\bm Y \in \R^q$, $q \in \N$, and covariate vector $\bm X \in \R^p$. We are interested in the conditional distribution of $\bm Y$ given $\bm X$.
Suppose there exists a projection matrix $\bm B \in \R^{p \times d}$, $d \le p$,  such that for all $\bm y \in \R^q$,
$ 
P(\bm Y \le \bm y \vert \bm X) = P(\bm Y \le \bm y \vert \bm B^\top \bm X) \quad a.s.,
$ 
or more concisely:
\begin{align}
	F_{\bm Y \mid \bm X} = F_{\bm B^\top \bm  X} \quad a.s.
    \label{eq:suffdr}
\end{align}
Then $\bm B^\top \bm X$ contains all the information of $\bm X$ that is relevant for predicting $\bm Y$. The $p \times p$ identity matrix trivially satisfies \eqref{eq:suffdr}, so a matrix $\bm B$ satisfying \eqref{eq:suffdr} always exists. When such a matrix exists for $d < p$, we can reduce the dimension of the covariate vector without losing any information about the conditional expectation. In other words, knowledge of the $d$-dimensional vector $\bm B^\top \bm X$ is \emph{sufficient} for predicting $Y$ and we speak of \emph{sufficient dimension reduction}.

If we can find such a matrix for $d \ll p$, we benefit in two ways. Inference of
the reduced regression problem becomes much easier than the original one. In
particular, the curse of dimensionality inherent in nonparametric regression
problems can be mitigated substantially. Further, if $q = 1$, plotting the
response variable $Y$ against the reduced predictor $\bm B^\top \bm X$ provides
a low-dimensional visualization of their relationship. The initial dimension
does not have to be large to make the reduction useful. Even when $p = 5$, the
curse of dimensionality drastically deteriorates the quality of nonparametric
estimates and the data is difficult to visualize. These problems disappear when
the dimension can be reduced to $d \le 2$ which is typically the goal.

\subsection{The central subspace in regression models}
\label{subsec:cs}

It is easy to see that the matrix $\bm B$ is not identifiable. In fact, $\bm B$ could be replaced by any matrix whose columns span the same space as the columns of $\bm B$. Hence, the goal is to identify $\mathcal{S}(\bm B)$, the column space of $\bm B$.
Whenever $\bm B$ satisfies \eqref{eq:suffdr}, the space $\mathcal{S}(\bm B)$ is called a \emph{dimension reduction subspace}. The ultimate goal is to find the smallest dimension reduction subspace, called \emph{central subspace}.

\begin{Definition}[Central subspace] Let  $\mathcal{\bm B}_{Y \vert \bm X} =
\{\bm B  \in \R^{p \times p} \colon F_{\bm Y \mid \bm X} = F_{\bm B^\top \bm  X}\; a.s.\}$. If
$\mathcal{S}_{Y \vert \bm X} = \bigcap_{\bm B \in \mathcal{\bm B}_{Y \vert \bm X}}
\mathcal{S}(\bm B) $ is a dimension reduction subspace, it is called
the {\bfseries central subspace}.
\end{Definition}

The set $\bigcap_{\bm B \in \mathcal{\bm B}_{Y \vert \bm X}} \mathcal{S}(\bm B)$
always contains at least the zero vector. But the central subspace may not
exist, namely when $\bigcap_{\bm B \in \mathcal{\bm B}_{Y \vert \bm X}}
\mathcal{S}(\bm B)$ is not a dimension reduction subspace. But existence can be
guaranteed under rather mild conditions; and whenever the central subspace
exists, it is also unique \citep[see,][]{chiaromonte1997foundations, cook1998}.
One simple condition is that the support of $\bm X$ is convex. More complex
conditions also applicable to discrete $\bm X$ can be found in
\citet{cook1998}.

Even if the central subspace does not exist, a dimension
reduction subspace always does. Hence, we can always find a matrix $\bm B$ that
satisfies \eqref{eq:suffdr}, which is what is most important in practice. But it
is possible that there are multiple dimension reduction subspaces whose
intersection is not a dimension reduction subspace. This phenomenon poses a
conceptual difficulty, but has only minor relevance in practice. We shall assume
throughout that central subspaces exist.

The following examples illustrate the new concepts for readers unfamiliar
with them.

\begin{Example}[Single index model]
\label{ex: SImodel}
Consider the single index model
\begin{align*}
Y =  g(\bm \beta^\top \bm X) + \sigma \epsilon, \quad\epsilon \sim \mathcal{N}(0, 1),
\end{align*}
where $\bm \beta \in \R^p$ and $\sigma > 0$ are model parameters
and $g\colon \R \to \R$ a possibly unknown function. The classical linear model
is recovered for $g(x) = x$. Equivalently, we can write $F_{Y \mid \bm X} = \Phi_{ g(\bm \beta^\top \bm X), \sigma^2}$, where $\Phi_{\mu, \sigma^2}$ denotes the normal distribution function with mean $\mu$ and variance $\sigma^2$.
 For any matrix $\bm B \in \R^{p \times d^\prime}$, $1 \le d^\prime \le p$, that has $\bm \beta$ in one of its columns, it holds $F_{\bm Y \mid \bm X} = F_{\bm B^\top \bm X}$ a.s.
Thus, $\mathcal{S}(\bm B)$ is a mean dimension reduction subspace. The intersection of all these spaces is $\mathcal{S}(\bm \beta)$, which is itself a dimension reduction subspace. Hence, $\mathcal{S}_{Y \vert \bm X} = \mathcal{S}(\bm \beta)$ is the central subspace. If $\bm \beta = \bm 0$, i.e., $Y = g(0) + \sigma \epsilon$, it holds $\mathcal{S}_{Y \vert \bm X} = \{\bm 0\}$ and the dimension of the
central subspace is $d = 0$. \qed
\end{Example}

\begin{Example}[Heteroscedastic multivariate model] \label{ex:mv}
Now consider
\begin{align*}
\bm Y =\bm g(\bm \beta^\top \bm X) + \bm A(\bm \gamma^\top \bm X)\bm \epsilon, \quad \bm \epsilon \sim \mathcal{N}(\bm 0, \bm I_{q \times q})
\end{align*}
where $\bm \beta, \bm \gamma \in \R^p$, $\bm g \colon \R \to \R^q$, 
$\bm A\colon \R \to \R^{q \times q}$, and $\bm I_{q \times q}$ denotes the identity matrix of dimension $q \times q$. The conditional distribution is  $F_{\bm Y
\mid \bm X} = \Phi_{\bm g(\bm \beta^\top \bm X), \bm A(\bm \gamma^\top \bm X)
\bm A(\bm \gamma^\top \bm X)^\top}$, where $\Phi_{\bm \mu, \bm \Sigma}$ is the
multivariate Gaussian cumulative distribution function. The corresponding central subspace is spanned
by the columns of $\bm B = (\bm \beta, \bm \gamma) \in \R^{p \times 2}$. \qed
\end{Example}


\subsection{The central mean subspace in regression models}
\label{subsec:cms}

In some cases we are only interested in the conditional mean, not the entire distribution. Suppose there exists a matrix $\bm B \in \R^{p \times d}$, $d \le p$,  such that
\begin{align}
	\E(\bm Y \vert \bm X) = \E(\bm Y \vert \bm B^\top \bm  X) \quad a.s.
    \label{eq:suffdr_mean}
\end{align}
For any matrix $\bm B$ that satisfies \eqref{eq:suffdr_mean}, $\mathcal{S}(\bm B)$ is called \emph{mean dimension reduction subspace} \citep{cook1994}. Again, the ultimate goal is to find the smallest mean dimension reduction subspace.
\begin{Definition}[Central mean subspace]
Let $\mathcal{B}_{\E(\bm Y \vert \bm X)} = \{\bm B  \in \R^{p \times p}\colon \bm \E(\bm Y \vert \bm X) = \E(\bm Y \vert \bm B^\top \bm  X) \; a.s\}.$ If $\mathcal{S}_{\E(\bm Y \vert \bm X)} = \bigcap_{\bm B \in \mathcal{\bm B}_{\E(\bm Y \vert \bm X)}} \mathcal{S}(\bm B)$ is a mean dimension reduction subspace, it is called the {\bfseries central mean subspace}.
\end{Definition}

Also the central mean subspace may not exist. But its existence can be
guaranteed under conditions similar to the ones for the central subspace. A
sufficient condition is that the support of $\bm X$ is convex
\citep[cf.,][]{cook1998,cook2002,cook2003}.

The central mean subspaces in Example \ref{ex: SImodel} is easily derived as
$\mathcal{S}_{\E(\bm Y\vert \bm X)} = \mathcal{S}(\bm \beta)$, which is equal to the
central subspace $\mathcal{S}_{\bm Y \vert \bm X}$. Equality holds because the
influence of $\bm X$ on $\bm Y$ is completely captured by the conditional mean. In
Example \ref{ex:mv}, the central mean subspace is also $\mathcal{S}_{\E(\bm Y\vert
\bm X)} = \mathcal{S}(\bm \beta)$. But now, $\mathcal{S}_{\E(\bm Y\vert \bm X)}
\neq \mathcal{S}_{Y\vert \bm X}$, because $\bm X$ also affects another other
aspect of the conditional distribution (the covariance). 

\subsection{Connection between central and central mean subspaces}
\label{subsec:connection}

Since the conditional expectation is a functional of the conditional distribution, it is clear that $\mathcal{S}_{\E(\bm Y \vert \bm X)} \subseteq
\mathcal{S}_{\bm Y \vert \bm X}$. The intuitive consequence is that the dimension can possibly be reduced further when only the conditional mean is of concern. Under a location model (the covariate vector $\bm X$ affects the conditional distribution of $\bm Y$ given $\bm X$ only through the mean) the two spaces coincide.

More generally, it is possible to characterize the central subspace as the union of central mean subspaces indexed by a family of functions $\mathcal{G}$. This fact was noticed in a series of papers \citep{yin2002,zhang2006,zeng2010} and
summarized in a general setting by  \citet[Theorem 2.1]{yin2011} that for completeness we repeat here:

\begin{Proposition} \label{thm:cover}
Consider  a random vector $\bm Y \in \R^q$ and a family of functions $\mathcal{G}$. If
        \begin{enumerate}[label = (\roman*)]
		\item all $g \in \mathcal{G}$ are measurable maps from $\R^q$ to $\R$ and satisfy $\mathrm{Var}\{g(\bm Y)\} < \infty$,
        \item $\mathcal{G}$ is dense in the set $\{\ind(\bm Y \in A)\colon A \mbox{ is a Borel set in } \R^q\}$,
	\end{enumerate}
    it holds that $\mathcal{S}_{\bm Y \vert \bm X} = \mathrm{span}\{S_{\E\{g(\bm Y)\vert \bm X\}}, g \in \mathcal{G}\}.$
\end{Proposition}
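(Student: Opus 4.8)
The plan is to prove the two inclusions
$\mathcal{T} \subseteq \mathcal{S}_{\bm Y \vert \bm X}$ and
$\mathcal{S}_{\bm Y \vert \bm X} \subseteq \mathcal{T}$ separately, where I abbreviate
$\mathcal{T} = \mathrm{span}\{\mathcal{S}_{\E\{g(\bm Y)\vert \bm X\}} \colon g \in \mathcal{G}\}$. Let $\bm B$ be a basis matrix of the central subspace $\mathcal{S}_{\bm Y\vert \bm X}$ and $\bm P$ a basis matrix of $\mathcal{T}$.

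For the easier inclusion $\mathcal{T} \subseteq \mathcal{S}_{\bm Y \vert \bm X}$, I would fix $g \in \mathcal{G}$ and show that $\mathcal{S}_{\bm Y \vert \bm X}$ is itself a mean dimension reduction subspace for $g(\bm Y)$. Since $\bm B$ satisfies $F_{\bm Y \vert \bm X} = F_{\bm B^\top \bm X}$ a.s., the regular conditional law of the measurable transform $g(\bm Y)$ given $\bm X$ depends on $\bm X$ only through $\bm B^\top \bm X$; integrating $g$ against this conditional law, and using $\var\{g(\bm Y)\} < \infty$ from hypothesis (i) to guarantee integrability, yields $\E\{g(\bm Y) \vert \bm X\} = \E\{g(\bm Y) \vert \bm B^\top \bm X\}$ a.s. Thus $\mathcal{S}(\bm B)$ is a mean dimension reduction subspace for $g(\bm Y)$, and since the central mean subspace is the intersection of all such subspaces we get $\mathcal{S}_{\E\{g(\bm Y)\vert \bm X\}} \subseteq \mathcal{S}_{\bm Y\vert \bm X}$. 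As this holds for every $g \in \mathcal{G}$ and $\mathcal{S}_{\bm Y\vert \bm X}$ is a linear subspace, taking the span gives $\mathcal{T} \subseteq \mathcal{S}_{\bm Y\vert \bm X}$.

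For the reverse inclusion, the strategy is to show that $\mathcal{T}$ is a dimension reduction subspace for the full conditional distribution; minimality of the central subspace then forces $\mathcal{S}_{\bm Y\vert \bm X}\subseteq \mathcal{T}$. For each $g \in \mathcal{G}$ we have $\mathcal{S}_{\E\{g(\bm Y)\vert \bm X\}} \subseteq \mathcal{T} = \mathcal{S}(\bm P)$, and since any subspace containing a mean dimension reduction subspace is again one (by the tower property applied to the nested $\sigma$-fields), $\bm P$ is a mean dimension reduction matrix for $g(\bm Y)$, i.e.\ $\E\{g(\bm Y)\vert \bm X\} = \E\{g(\bm Y)\vert \bm P^\top \bm X\}$ a.s. I would then promote this from the class $\mathcal{G}$ to indicators using hypothesis (ii): given a Borel set $A \subseteq \R^q$, choose $g_n \in \mathcal{G}$ with $g_n(\bm Y) \to \ind(\bm Y \in A)$ in $L^2(P_{\bm Y})$. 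Because conditional expectation is an $L^2$-contraction, $\E\{g_n(\bm Y)\vert\bm X\} \to P(\bm Y \in A \vert \bm X)$ and $\E\{g_n(\bm Y)\vert\bm P^\top\bm X\} \to P(\bm Y\in A\vert \bm P^\top\bm X)$, both in $L^2$; as the two sides agree for every $n$, their limits coincide, giving $P(\bm Y\in A \vert \bm X) = P(\bm Y\in A\vert \bm P^\top\bm X)$ a.s. Applying this to rectangles $A = \prod_{j}(-\infty, y_j]$ over $\bm y$ ranging in a countable dense set, intersecting the corresponding almost-sure events and extending to all $\bm y$ by right-continuity of distribution functions, yields $F_{\bm Y\vert \bm X} = F_{\bm P^\top\bm X}$ a.s. Hence $\mathcal{T}$ is a dimension reduction subspace and the inclusion follows.

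The main obstacle is this passage from the defining class $\mathcal{G}$ to arbitrary Borel events in the reverse inclusion. One must fix the precise meaning of ``dense'' in hypothesis (ii) (I read it as $L^2(P_{\bm Y})$-density), verify that the equality of conditional expectations survives the limit simultaneously under both conditioning $\sigma$-fields, and handle the uncountably many sets $A$ (equivalently, the ``for all $\bm y$'' quantifier) without breaking the almost-sure statements, which is exactly why the countable-dense-set-plus-right-continuity step is needed. The first inclusion, by contrast, is essentially an integrability and measurability bookkeeping argument once one observes that conditioning on $\bm B^\top\bm X$ preserves the conditional law of every measurable functional of $\bm Y$.
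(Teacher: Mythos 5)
Your proposal cannot be compared against an in-paper proof, because the paper does not prove this proposition: it is stated as a restatement of Theorem~2.1 of \citet{yin2011} (``for completeness we repeat here''), with the proof deferred to that reference. Judged on its own merits, your two-inclusion argument is correct and is essentially the standard argument underlying the cited result. The first inclusion is exactly as you say: equality of the conditional laws given $\bm X$ and given $\bm B^\top \bm X$ passes to conditional expectations of any integrable functional $g(\bm Y)$, so $\mathcal{S}_{\bm Y \vert \bm X}$ is a mean dimension reduction subspace for every $g \in \mathcal{G}$, whence $\mathcal{S}_{\E\{g(\bm Y)\vert \bm X\}} \subseteq \mathcal{S}_{\bm Y \vert \bm X}$ and the span is contained in $\mathcal{S}_{\bm Y \vert \bm X}$. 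For the reverse inclusion, your monotonicity step (a subspace containing a mean dimension reduction subspace is itself one, via the tower property for the nested $\sigma$-fields $\sigma(\bm B_1^\top \bm X) \subseteq \sigma(\bm P^\top \bm X) \subseteq \sigma(\bm X)$) is valid, and the $L^2$-contraction argument correctly transfers the identity $\E\{g_n(\bm Y)\vert\bm X\} = \E\{g_n(\bm Y)\vert\bm P^\top\bm X\}$ to indicators, then to the conditional distribution functions via a countable dense set of rectangles and right-continuity of a regular conditional distribution version. Three caveats you partly flag yourself, all consistent with the paper's standing conventions: ``dense'' in hypothesis (ii) must indeed be read as $L^2$-density (this is how \citet{yin2011} formulate it); the argument needs the central subspace and the central mean subspaces $\mathcal{S}_{\E\{g(\bm Y)\vert\bm X\}}$ to exist, which the paper assumes throughout; and hypothesis (i) should be read as $g(\bm Y) \in L^2$, since finite variance alone is what makes the contraction step legitimate. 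With those readings, your proof is complete.
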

Examples of families $\mathcal{G}$ complying with \autoref{thm:cover} are the
collection of indicator functions $\ind(\, \cdot \le \bm y)$, $\bm y \in \R^q$,
and the set of polynomials (provided all moments exist). \citet[Theorem
2.2]{yin2011} further show that, almost surely, a finite number of functions $g
\in \mathcal{G}$ is sufficient to cover $\mathcal{S}_{\bm Y \vert \bm X}$. An
important implication is: inference for the central subspace can be based on
inference of a finite number of central mean subspaces.

  \section{Copula decompositions in sufficient dimension reduction}
\label{sec:sklar}

Recall from \autoref{sec:notation-etc} that the conditional joint distribution
$F_{\bm Y \mid \bm X}$ of a random vector $\bm Y$ given $\bm X$ can be
decomposed into the conditional marginals $F_{1, \bm X}, \ldots, F_{q, \bm X}$
and the conditional copula $C_{\bm X}$. This suggests a similar decomposition
for the central subspace.


\subsection{Copula decomposition of the dimension reduction subspaces}

\begin{Lemma} \label{lem:drsklar1}
Suppose a matrix $\bm B \in \R^{p \times d}$, $d \le p$, satisfies
\begin{align} \label{eq:FB}
F_{\bm Y \mid \bm X} = F_{\bm B^\top \bm X} \quad a.s.
\end{align}
Then there are matrices $\bm B_j \in \R^{p \times d_j}$, $j = 1, \dots, q$, and
$\bm B_C \in \R^{p \times d_C}$ such that  $\max\{d_1, \dots, d_q, d_C\} \le d$ that
are sub-matrices of $\bm B$ satisfying
\begin{gather} \label{eq:S12CB}
  \mathrm{span}\{\mathcal{S}(\bm B_1), \dots, \mathcal{S}(\bm B_q), \mathcal{S}(\bm B_C)\} = \mathcal{S}(\bm B),
\\
\label{eq:12CB}
F_{Y_j \mid \bm X} = F_{Y_j \mid \bm B_j^\top \bm X}, \quad \mbox{for } j = 1, \dots, q, \quad \mbox{and}
\quad C_{\bm X} = C_{\bm B_C^\top \bm X} \quad a.s.
\end{gather}
\end{Lemma}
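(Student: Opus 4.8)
The plan is to exploit that both the conditional marginals $F_{Y_j\mid\bm X}$ and the conditional copula $C_{\bm X}$ are \emph{functionals} of the conditional joint distribution $F_{\bm Y\mid\bm X}$ (the copula additionally through the marginals), so that the reduction in \eqref{eq:FB} is automatically inherited by each of them. The real work is therefore to establish the two displayed identities in \eqref{eq:12CB} with $\bm B$ in place of $\bm B_j$ and $\bm B_C$; once this is done, the submatrix and span requirements are immediate. Indeed I may simply take $\bm B_j = \bm B_C = \bm B$ (a trivial submatrix of itself), giving $d_1 = \dots = d_q = d_C = d$, hence $\max\{d_1,\dots,d_q,d_C\} = d \le d$, and $\mathrm{span}\{\mathcal{S}(\bm B_1),\dots,\mathcal{S}(\bm B_q),\mathcal{S}(\bm B_C)\} = \mathcal{S}(\bm B)$, which is \eqref{eq:S12CB}.

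First I would treat the marginals. The hypothesis \eqref{eq:FB} is equality of two random distribution functions on a single event of probability one: there is $\Omega_0$ with $\Prob(\Omega_0)=1$ on which $F_{\bm Y\mid\bm X}(\bm y) = F_{\bm B^\top\bm X}(\bm y)$ simultaneously for all $\bm y$. To be careful about the uncountable quantifier I would first obtain the pointwise a.s.\ identity for $\bm y$ ranging over $\Q^q$, intersect the countably many null sets, and extend to all $\bm y$ by right-continuity and monotonicity of distribution functions. Marginalizing then gives, on $\Omega_0$,
\[
F_{Y_j \mid \bm X}(y_j) = \lim_{\substack{y_k \to \infty \\ k \neq j}} F_{\bm Y \mid \bm X}(\bm y) = \lim_{\substack{y_k \to \infty \\ k \neq j}} F_{\bm B^\top \bm X}(\bm y) = F_{Y_j \mid \bm B^\top \bm X}(y_j),
\]
so $F_{Y_j \mid \bm X} = F_{Y_j \mid \bm B^\top \bm X}$ a.s.\ for each $j$.

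The copula identity is the main obstacle, because $C_{\bm X}$ is only defined implicitly through Sklar's theorem. Since $Y_1,\dots,Y_q$ are continuous, $C_{\bm X}$ is the \emph{unique} function satisfying \eqref{eq:sklar} conditionally, and admits the inversion representation
\[
C_{\bm X}(\bm u) = F_{\bm Y \mid \bm X}\bigl(F_{Y_1 \mid \bm X}^{-1}(u_1),\dots,F_{Y_q \mid \bm X}^{-1}(u_q)\bigr), \qquad \bm u \in [0,1]^q,
\]
where $F_{Y_j\mid\bm X}^{-1}$ is the conditional quantile function. On $\Omega_0$ the joint law and every marginal depend on $\bm X$ only through $\bm B^\top\bm X$; the quantile function is a measurable functional of the marginal, so each $F_{Y_j\mid\bm X}^{-1}$ does too, and composing measurable maps shows that $C_{\bm X}$ is a measurable function of $\bm B^\top\bm X$. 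It then remains to identify this with the conditional copula $C_{\bm B^\top\bm X}$ of $\bm Y$ given the reduced covariate: but $C_{\bm B^\top\bm X}$ is built by the same Sklar inversion from $F_{\bm Y\mid\bm B^\top\bm X}$ and its marginals $F_{Y_j\mid\bm B^\top\bm X}$, which coincide a.s.\ with $F_{\bm Y\mid\bm X}$ and $F_{Y_j\mid\bm X}$ by \eqref{eq:FB} and the previous step. Uniqueness of the copula in the continuous case yields $C_{\bm X} = C_{\bm B^\top\bm X}$ a.s.; the same rational-argument and right-continuity device controls the uncountable family $\bm u\in[0,1]^q$.

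Finally, although the trivial choice above already settles \eqref{eq:S12CB}, it is worth recording why genuinely \emph{smaller} blocks are permissible, since this is what makes the ``submatrix'' phrasing meaningful. Condition \eqref{eq:FB} is equivalent to the conditional independence $\bm Y \perp \bm X \mid \bm B^\top\bm X$, and a marginal reduction $\bm B_j$ corresponds to $Y_j \perp \bm X \mid \bm B_j^\top\bm X$. If $\mathcal{S}(\bm B_j)\subseteq\mathcal{S}(\bm B_j')\subseteq\mathcal{S}(\bm B)$, then conditioning additionally on the extra columns of $\bm B_j'$ leaves the conditional law of $Y_j$ unchanged, so $\mathcal{S}(\bm B_j')$ is again a reduction subspace. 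Hence one may pick a minimal subset of the columns of $\bm B$ that reduces each component and, should these columns fail to exhaust $\bm B$, append the remaining columns to any one block without destroying its reduction property, so that their span is exactly $\mathcal{S}(\bm B)$ while each $d_j, d_C \le d$.
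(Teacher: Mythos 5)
Your proof is correct and takes essentially the same route as the paper's: choose $\bm B_1 = \dots = \bm B_q = \bm B_C = \bm B$, obtain the marginal identities by sending the remaining arguments to $\infty$, and obtain the copula identity from the Sklar inversion $C_{\bm X} = F_{\bm Y \mid \bm X}\bigl\{F_{Y_1 \mid \bm X}^{-1}, \dots, F_{Y_q \mid \bm X}^{-1}\bigr\}$. Your extra care with null sets and the closing remark on genuinely smaller blocks are elaborations of points the paper leaves implicit, not a different argument.
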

\begin{proof}
Set $\bm B_1 = \dots = \bm B_j = \bm B_C = \bm B$, which trivially satisfies
\eqref{eq:S12CB}. Further note that
\begin{align*}
F_{Y_j \mid \bm X}(\cdot) = F_{\bm Y \mid \bm X}(\infty, \dots, \infty, \cdot,
\infty, \dots \infty),
\end{align*}
where $\infty$ is in all but the $j$th position, and
$C_{\bm X} = F_{\bm Y \mid \bm X}\{F_{Y_1 \mid \bm X}^{-1}, \dots,  F_{Y_q \mid
\bm X}^{-1}\}$. Thus, \eqref{eq:FB} implies \eqref{eq:12CB}.
\end{proof}
Lemma \ref{lem:drsklar1} implies that any dimension reduction subspace for the joint
distribution can be decomposed into dimension reduction subspaces for the
marginals and for the copula. The  decomposition is generally not disjoint; dimension
reduction subspaces for marginals and the copula can overlap. The next result
reverses \autoref{lem:drsklar1}: the span of dimension reduction subspaces for
marginals and the copula is a dimension reduction subspace for the joint
distribution.

\begin{Lemma} \label{lem:drsklar2}
Suppose there are matrices $\bm B_j \in \R^{p \times d_j}$, $j = 1, \dots, q$, and
$\bm B_C \in \R^{p \times d_C}$ with $\max\{d_1, \dots, d_q, d_C\} \le p$ satisfying \eqref{eq:12CB}.
Then there is a matrix $\bm B \in \R^{p \times d}$, $d \le \max\{ \sum_{j = 1}^q d_j + d_C, p\}$, that satisfies \eqref{eq:S12CB} and \eqref{eq:FB}.
\end{Lemma}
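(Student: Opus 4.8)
The plan is to reverse the construction of \autoref{lem:drsklar1}: assemble a single matrix $\bm B$ whose column space is the span of the $q$ marginal reduction subspaces and the copula reduction subspace, and then verify the joint reduction \eqref{eq:FB} through the conditional Sklar representation. First I would set $\bm B = (\bm B_1, \dots, \bm B_q, \bm B_C) \in \R^{p \times d}$, the horizontal concatenation, so that \eqref{eq:S12CB} holds by construction and $d = \sum_{j=1}^q d_j + d_C \le \max\{\sum_{j=1}^q d_j + d_C, p\}$; alternatively, choosing the columns of $\bm B$ to be a basis of $\mathrm{span}\{\mathcal{S}(\bm B_1), \dots, \mathcal{S}(\bm B_q), \mathcal{S}(\bm B_C)\}$ yields $d = \dim \mathcal{S}(\bm B) \le p$, which also meets the bound. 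Either way $\mathcal{S}(\bm B_j) \subseteq \mathcal{S}(\bm B)$ and $\mathcal{S}(\bm B_C) \subseteq \mathcal{S}(\bm B)$, so there exist matrices $\bm M_j$ and $\bm M_C$ with $\bm B_j = \bm B \bm M_j$ and $\bm B_C = \bm B \bm M_C$. Consequently $\bm B_j^\top \bm X = \bm M_j^\top \bm B^\top \bm X$ and $\bm B_C^\top \bm X = \bm M_C^\top \bm B^\top \bm X$ are functions of $\bm B^\top \bm X$; in $\sigma$-algebra terms, $\sigma(\bm B_j^\top \bm X) \subseteq \sigma(\bm B^\top \bm X) \subseteq \sigma(\bm X)$ and likewise for $\bm B_C$.

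Next I would show that $F_{\bm Y \mid \bm X}$ admits a version measurable with respect to $\sigma(\bm B^\top \bm X)$. The conditional Sklar representation from \autoref{sec:notation-etc} gives
\[
F_{\bm Y \mid \bm X}(\bm y) = C_{\bm X}\bigl\{F_{Y_1 \mid \bm X}(y_1), \dots, F_{Y_q \mid \bm X}(y_q)\bigr\} \quad a.s.
\]
By the hypothesis \eqref{eq:12CB}, each argument $F_{Y_j \mid \bm X}(y_j) = F_{Y_j \mid \bm B_j^\top \bm X}(y_j)$ is $\sigma(\bm B_j^\top \bm X)$-measurable, hence $\sigma(\bm B^\top \bm X)$-measurable, and evaluating the identity $C_{\bm X} = C_{\bm B_C^\top \bm X}$ at any fixed $\bm u$ shows that $C_{\bm X}(\bm u)$ is $\sigma(\bm B_C^\top \bm X)$-measurable, hence $\sigma(\bm B^\top \bm X)$-measurable. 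Combining these, the composition on the right-hand side is $\sigma(\bm B^\top \bm X)$-measurable.

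Finally I would invoke the tower property. Since $\sigma(\bm B^\top \bm X) \subseteq \sigma(\bm X)$,
\[
F_{\bm Y \mid \bm B^\top \bm X}(\bm y) = \E\bigl[\ind(\bm Y \le \bm y) \mid \bm B^\top \bm X\bigr] = \E\bigl[F_{\bm Y \mid \bm X}(\bm y) \mid \bm B^\top \bm X\bigr] = F_{\bm Y \mid \bm X}(\bm y) \quad a.s.,
\]
the last equality holding because $F_{\bm Y \mid \bm X}(\bm y)$ is $\sigma(\bm B^\top \bm X)$-measurable. This is exactly \eqref{eq:FB}. To promote these pointwise-in-$\bm y$ statements to a single almost-sure event valid for all $\bm y$ simultaneously, I would restrict to $\bm y \in \Q^q$ and extend by right-continuity and monotonicity of distribution functions, as is implicit already in \autoref{lem:drsklar1}.

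The step I expect to be the main obstacle is the measurability of the composed map $\bm x \mapsto C_{\bm x}\{F_{Y_1 \mid \bm x}(y_1), \dots, F_{Y_q \mid \bm x}(y_q)\}$, since here a random copula function is evaluated at random arguments. To make this rigorous one must regard $C_{\bm X}$ as a random element of a space of copulas and check that the evaluation map $(\bm u, \bm x) \mapsto C_{\bm x}(\bm u)$ is jointly measurable, so that substituting the $\sigma(\bm B^\top \bm X)$-measurable argument vector into the $\sigma(\bm B^\top \bm X)$-measurable random copula again yields a $\sigma(\bm B^\top \bm X)$-measurable quantity. A clean route is to exploit the continuity of copulas: for each fixed $\bm u$ the variable $C_{\bm X}(\bm u)$ is $\sigma(\bm B^\top \bm X)$-measurable, and uniform continuity of $\bm u \mapsto C_{\bm x}(\bm u)$ lets one pass from this pointwise measurability, through approximation on a countable dense set of arguments, to the joint measurability required for the composition.
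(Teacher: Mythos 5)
Your proposal is correct and follows essentially the same route as the paper: concatenate $\bm B_1, \dots, \bm B_q, \bm B_C$ into $\bm B$ (trimming columns to respect the dimension bound), observe that each $\bm B_j^\top \bm X$ and $\bm B_C^\top \bm X$ is linearly recoverable from $\bm B^\top \bm X$, and conclude via the conditional Sklar representation. The paper compresses your tower-property and measurability steps into the remark that ``conditional probabilities are invariant to full rank transformations of the conditioning vector,'' so your write-up simply makes explicit what the paper leaves implicit.
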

\begin{proof}
Set $\bar{\bm  B} = (\bm B_1, \dots, \bm B_q, \bm B_C) \in \R^{p \times (\sum_j d_j + d_C)}$. If $\sum_j d_j + d_C \le p$, set $\bm B = \bar{\bm  B}$, otherwise set $\bm B$ to a matrix derived from $\bar{\bm B}$ by eliminating linearly dependent columns until $\bm B \in \R^{p \times p}$. For this construction, there are full rank matrices $\bm A_j$, $j = 1, \dots, q$, $\bm A_C \in \R^{p \times p}$ such that $\bm A_j \bm B = (\bm B_j, \ldots)$, and $\bm A_C \bm B = (\bm B_C, \ldots)$. Then the claim follows from
$ 
	F_{\bm Y \mid \bm X}(\bm y) = C_{\bm X}\bigl\{F_{Y_1 \mid \bm X}(y_1), \dots, F_{Y_q \mid \bm X}(y_q)\bigr\} \quad a.s.,
$ 
and the fact that conditional probabilities are invariant to full rank
transformations of the conditioning vector.
\end{proof}


\subsection{The central copula subspace}

Lemmas~\ref{lem:drsklar1} and \ref{lem:drsklar2} can be seen as the dimension reduction version of Sklar's theorem. A natural question is whether similar results can be derived for central subspaces. Paralleling the definitions in \autoref{subsec:cs}, we define a \emph{copula dimension reduction subspace} as the column space $\mathcal{S}(\bm B)$ of a matrix $\bm B \in \R^{p \times d}$, $d \le p$, that satisfies
\begin{align} \label{eq:CB}
C_{\bm X} = C_{\bm B^\top \bm X}, \quad a.s.
\end{align}
If a matrix $\bm B$ satisfies \eqref{eq:CB}, $\bm X$ and $\bm B^\top \bm X$ carry the same information about the dependence between the components of $\bm Y$. The smallest possible copula dimension reduction subspace is called the central copula subspace.

\begin{Definition}[Central copula subspace]
Let $\mathcal{B}_{C_{\bm X}} = \{\bm B  \in \R^{p \times p}: C_{\bm X} = C_{\bm B^\top \bm X} \; a.s.\}.$ If $\mathcal{S}_{C_{\bm X}} = \bigcap_{\bm B \in \mathcal{B}_{C_{\bm X}}} \mathcal{S}(\bm B)$ is a copula dimension reduction subspace, it is called the {\bfseries central copula subspace}.
\end{Definition}

The following result states that the central subspace can be decomposed into the
marginal central  subspaces $\mathcal{S}_{Y_j \vert \bm X}$, $j = 1, \dots, q$,
and the central copula subspace $\mathcal{S}_{C_{\bm X}}$.

\begin{Theorem} \label{thm:copula_cs}
For the conditional copula $C_{\bm X}$ associated with the distribution of $\bm Y\mid\bm X$, for the central subspace it holds that
$\mathcal{S}_{\bm Y \mid \bm X} = \mathrm{span}\{\mathcal{S}_{Y_1 \mid \bm X}, \dots, \mathcal{S}_{Y_q \mid \bm X}, \mathcal{S}_{C_{\bm X}}\}.$
\end{Theorem}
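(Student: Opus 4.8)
The plan is to establish the stated identity by proving the two set inclusions separately, in each case exploiting the \emph{minimality} of a central subspace — the property that a central subspace, being a dimension reduction subspace equal to the intersection of all such subspaces, is contained in every dimension reduction subspace of the corresponding object — together with the two decomposition lemmas. Throughout I use that, by the standing assumption, all the central subspaces involved exist and are therefore genuine dimension reduction subspaces.

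For the inclusion $\mathrm{span}\{\mathcal{S}_{Y_1 \mid \bm X}, \dots, \mathcal{S}_{Y_q \mid \bm X}, \mathcal{S}_{C_{\bm X}}\} \subseteq \mathcal{S}_{\bm Y \mid \bm X}$, I would fix a matrix $\bm B$ whose columns span the central subspace $\mathcal{S}_{\bm Y \mid \bm X}$, so that \eqref{eq:FB} holds. Applying \autoref{lem:drsklar1} yields sub-matrices $\bm B_1, \dots, \bm B_q, \bm B_C$ of $\bm B$ satisfying \eqref{eq:12CB}; as sub-matrices, their column spaces satisfy $\mathcal{S}(\bm B_j), \mathcal{S}(\bm B_C) \subseteq \mathcal{S}(\bm B) = \mathcal{S}_{\bm Y \mid \bm X}$. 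By \eqref{eq:12CB} each $\mathcal{S}(\bm B_j)$ is a dimension reduction subspace for the $j$-th conditional marginal and $\mathcal{S}(\bm B_C)$ is a copula dimension reduction subspace, so minimality gives $\mathcal{S}_{Y_j \mid \bm X} \subseteq \mathcal{S}(\bm B_j)$ and $\mathcal{S}_{C_{\bm X}} \subseteq \mathcal{S}(\bm B_C)$. Chaining these containments and taking the span yields the inclusion.

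For the reverse inclusion, I would start instead from matrices $\bm B_1, \dots, \bm B_q$ whose columns span the marginal central subspaces $\mathcal{S}_{Y_j \mid \bm X}$ and $\bm B_C$ spanning $\mathcal{S}_{C_{\bm X}}$. Since each of these is itself a dimension reduction subspace for the corresponding marginal or copula, the collection satisfies \eqref{eq:12CB}. \autoref{lem:drsklar2} then produces a matrix $\bm B$ satisfying both \eqref{eq:FB} and \eqref{eq:S12CB}. The latter identifies $\mathcal{S}(\bm B) = \mathrm{span}\{\mathcal{S}_{Y_1 \mid \bm X}, \dots, \mathcal{S}_{Y_q \mid \bm X}, \mathcal{S}_{C_{\bm X}}\}$, while the former certifies that this span is a dimension reduction subspace for the joint distribution. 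Minimality of $\mathcal{S}_{\bm Y \mid \bm X}$ then forces $\mathcal{S}_{\bm Y \mid \bm X} \subseteq \mathcal{S}(\bm B)$, which is exactly the required span.

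The two inclusions combine to the asserted equality. I do not expect a genuinely hard step: the substantive content is carried entirely by \autoref{lem:drsklar1} and \autoref{lem:drsklar2}, and what remains is careful bookkeeping with the minimality property. The main thing to keep straight is the complementary roles of the two lemmas — \autoref{lem:drsklar1} extracts marginal and copula sub-matrices from a given joint reduction matrix and powers the ``$\subseteq$'' direction, whereas \autoref{lem:drsklar2} reassembles a joint reduction matrix from marginal and copula pieces and powers the ``$\supseteq$'' direction — and to confirm at each step that the relevant central subspace exists so that minimality may legitimately be invoked.
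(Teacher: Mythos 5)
Your proposal is correct and follows essentially the same route as the paper's own proof: the ``$\subseteq$'' direction comes from the fact that the marginals and the copula are functionals of the joint conditional distribution (the content of \autoref{lem:drsklar1}) plus minimality of the marginal and copula central subspaces, and the ``$\supseteq$'' direction comes from \autoref{lem:drsklar2} applied to matrices spanning the central marginal and copula subspaces, plus minimality of $\mathcal{S}_{\bm Y \mid \bm X}$. The only cosmetic difference is that you route the first inclusion explicitly through the sub-matrix statement of \autoref{lem:drsklar1}, whereas the paper argues directly that any dimension reduction subspace for the joint distribution is one for each marginal and for the copula; the logic is identical.
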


\begin{proof}
Recall from the proof of \autoref{lem:drsklar1} that $F_{j, \bm X}$ and $C_{\bm X}$ can be expressed in terms of $F_{\bm Y \mid \bm X}$. Hence,
$ 
  S_{Y_j \mid \bm X} \subset \mathcal{S}_{\bm Y \mid \bm X}, \quad \mbox{for } j
  = 1, \dots, q, \quad \mbox{and} \quad  S_{C_{\bm X}} \subset \mathcal{S}_{\bm Y
  \mid \bm X},
$ 
and, thus, $ \mathrm{span}\{\mathcal{S}_{Y_1 \mid \bm X}, \dots,
\mathcal{S}_{Y_q \mid \bm X}, \mathcal{S}_{C_{\bm X}}\} \subset
\mathcal{S}_{\bm Y \mid \bm X}$. It also implies that any matrix $\bm B \in
\R^{p \times d}$ with $\mathcal{S}(\bm B) = \mathcal{S}_{\bm Y \mid \bm X}$
possesses submatrices $\bm B_j \in \R^{p \times d_j}$, $\bm B_C \in \R^{p \times
d_C}$ satisfying $\mathcal{S}(\bm B_j) = \mathcal{S}_{Y_j \mid \bm X}$, $j = 1, \dots, q$, and $\mathcal{S}(\bm B_C) =
\mathcal{S}_{C_{\bm X}}$. By \autoref{lem:drsklar2},
it follows that $\mathrm{span}\{\mathcal{S}_{Y_1 \mid \bm X}, \dots, \mathcal{S}_{Y_q \mid \bm X},
\mathcal{S}_{C_{\bm X}}\}$ is a dimension reduction subspace for $F_{\bm Y \mid
\bm X}$. And since $\mathcal{S}_{\bm Y \mid \bm X}$ is the intersection of
all dimension reduction subspaces, it must hold $\mathcal{S}_{\bm Y \mid
\bm X}  \subset \mathrm{span}\{\mathcal{S}_{Y_1 \mid \bm X}, \dots, \mathcal{S}_{Y_q
\mid \bm X}, \mathcal{S}_{C_{\bm X}}\}$.
\end{proof}

\autoref{thm:copula_cs} can be seen as the central subspace version of Sklar's
theorem. Similar to \autoref{lem:drsklar1} and \autoref{lem:drsklar2}, the
decomposition in \autoref{thm:copula_cs} is not necessarily disjoint. But it is
certainly possible that the central marginal or copula  subspaces are of
smaller dimension than the central subspace of the joint distribution.
\begin{Example}
Consider a variant of the model from \autoref{ex:mv}:
\begin{align*}
  \bm Y =\bm g(\bm \beta^\top \bm X) + \operatorname{diag}(\sigma_1, \dots, \sigma_q) \bm R(\bm \gamma^\top \bm X)^{1/2}\bm \epsilon, \quad \bm \epsilon \sim \mathcal{N}(\bm 0, \bm I_{q \times q}),
\end{align*}
where $\bm R$ is a map from $\R$ to the space of correlation matrices. The
central subspace is again $\Scal_{\bm Y \mid \bm X} = \Scal(\bm \beta, \bm
\gamma)$ and has dimension $d = 2$. For all $j = 1, \dots, q$, the marginal central subspaces is
$\Scal_{Y_j \mid \bm X} = \Scal(\bm \beta)$ with $d_j = 1$ (the marginal variances do not change with $\bm X$). The conditional copula $C_{\bm X}$ is a Gaussian
copula with correlation matrix $\bm R(\bm \gamma^\top \bm X)$. Hence, the
central copula subspace is $\Scal_{C_{\bm X}} = \Scal(\bm \gamma)$ and $d_C = 1$. \hfill \qed
\end{Example}

The result in Theorem \ref{thm:copula_cs} is 
useful in practice because it facilitates inference. The popularity of copulas
is largely due to the fact that they allow to separate inference of the marginal
distributions and the copula. Similarly, inference of the central subspace can
be separated into inference of the central marginal subspaces and the 
central copula subspace. On finite
samples, smaller subspaces are usually easier to find. Further, when estimating
the conditional margins and copula separately, we can condition on
lower-dimensional covariates. This is particularly useful when the conditional
margins are estimated nonparametrically, because it allows for faster rates
of convergence.

\begin{Remark}
  The assumption $C_{\bm X} \equiv C$ or, equivalently, $C_{\bm X} \perp \bm X$
  is known as the \emph{simplifying assumption} in the copula literature
\citep[e.g.,][]{gijbels2015estimation, gijbels2015partial,gijbels2017nonparametric,  derumigny2017tests, portier2018weak} and commonly used in vine copula
  models \citep[e.g.,][]{haff2015nonparametric, spanhel2019simplified,
  portier2018weak, haff2010simplified, 
  nagler2016evading, stoeber2013simplified, nagler2024simplified}. It can be reformulated in terms of the central copula
  subspace: $\Scal_{C_{\bm X}} = \{\bm 0\}$.
\end{Remark}

\subsection{Properties of the central copula subspace}
\label{subsec:ccs}

By definition of $C_{\bm X}$, the central copula subspace is the central
subspace of the conditional random vector $\bm U_{\bm X} \mid \bm X$, i.e.
$\mathcal{S}_{C_{\bm X}} = \mathcal{S}_{\bm U_{\bm X} \vert \bm
X}$. Therefore, all remarks made in \autoref{sec:background} about the existence of the central subspace apply. We introduced the central copula subspace as a
separate concept because it has distinct properties:
\begin{enumerate}
\item The marginal central subspaces only contain the zero element: one can
easily verify that the variables $U_{j, \bm X}$ are marginally independent
of $\bm X$ for $j = 1, \dots, q$.

\item The random variables $U_{1, \bm X}, \dots,  U_{q, \bm X}$ are unobserved
because the conditional margins $F_{Y_1 \mid \bm X}$, \dots,  $F_{Y_q \mid \bm
X}$ are unknown. This makes inference for the central copula subspace more
difficult.

\end{enumerate}

In view of \autoref{thm:cover}, we can characterize the central copula subspace by a collection of central mean subspaces. In \autoref{subsec:connection}, we already mentioned two examples of such families which we shall discuss in more detail.
The following result is an immediate consequence of \autoref{thm:cover}.
\begin{Theorem} \label{prop:G}
	Let
    \begin{align*}
		\mathcal{G}_1 &= \bigl\{g\colon \R^q \to \R, \; \bm z \mapsto  \ind \bigl(\bm z \le \bm u\bigr), \;\bm u \in [0,1]^q\bigr\}, \\
    \mathcal{G}_2 &=  \biggl\{g\colon \R^q \to \R, \; \bm z \mapsto  \prod_{j = 1}^q z_j^{k_j}, \; (k_1, \dots, k_q) \in \N^q  \biggr\}.
	\end{align*}
  Then,
$ 
		\mathcal{S}_{C_{\bm X}} = \mathrm{span}\bigl\{S_{\E\{g(\bm U_{\bm X}) \mid \bm X\}},  g \in \mathcal{G}_1\bigr\} = \mathrm{span} \bigl\{S_{\E\{g(\bm U_{\bm X}) \mid \bm X\}}, g \in \mathcal{G}_2 \bigr\}.
	$ 
\end{Theorem}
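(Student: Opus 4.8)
The plan is to deduce both characterizations from \autoref{thm:cover} by exploiting the identity $\mathcal{S}_{C_{\bm X}} = \mathcal{S}_{\bm U_{\bm X} \mid \bm X}$ recorded at the beginning of \autoref{subsec:ccs}. Since the conditional distribution of $\bm U_{\bm X}$ given $\bm X$ is exactly $C_{\bm X}$, it suffices to apply \autoref{thm:cover} to the random vector $\bm U_{\bm X}$ in place of $\bm Y$, once for $\mathcal{G} = \mathcal{G}_1$ and once for $\mathcal{G} = \mathcal{G}_2$. The entire argument therefore reduces to checking that each of these two families satisfies hypotheses (i) and (ii) of \autoref{thm:cover} relative to the law of $\bm U_{\bm X}$; the conclusion $\mathcal{S}_{\bm U_{\bm X}\mid\bm X} = \mathrm{span}\{S_{\E\{g(\bm U_{\bm X})\mid\bm X\}}, g \in \mathcal{G}_i\}$ then transfers verbatim to $\mathcal{S}_{C_{\bm X}}$.

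The crucial structural fact that makes both verifications go through is that, by the probability integral transform, each $U_{j,\bm X} = F_{Y_j\mid\bm X}(Y_j)$ is supported on $[0,1]$, so $\bm U_{\bm X}$ takes values in the compact cube $[0,1]^q$ almost surely. Hypothesis (i) is then immediate for both families: every $g$ in $\mathcal{G}_1$ (an orthant indicator) and every $g$ in $\mathcal{G}_2$ (a monomial $\prod_j z_j^{k_j}$) is bounded in absolute value by $1$ on $[0,1]^q$, whence $\var\{g(\bm U_{\bm X})\} \le 1 < \infty$, and measurability is clear.

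For hypothesis (ii) I would verify that the closed linear span of each family in $L^2$ of the law of $\bm U_{\bm X}$ contains every indicator $\ind(\bm U_{\bm X} \in A)$, $A$ Borel. For $\mathcal{G}_1$, finite differences (inclusion--exclusion) of the orthant indicators $\ind(\bm z \le \bm u)$ produce the indicators of all half-open boxes in $[0,1]^q$; these boxes form a $\pi$-system generating $\mathcal{B}([0,1]^q)$, so a standard monotone-class argument shows their linear combinations are dense among Borel indicators, giving (ii). For $\mathcal{G}_2$, the linear span of the monomials is precisely the algebra of polynomials on $[0,1]^q$, which separates points and contains the constants; by the Stone--Weierstrass theorem it is dense in $C([0,1]^q)$ in the supremum norm, hence dense in $L^2$, and since continuous functions are $L^2$-dense and each Borel indicator lies in $L^2([0,1]^q)$, every such indicator lies in the closed span as well.

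The step requiring the most care --- and the only genuine obstacle --- is this density check for $\mathcal{G}_2$: it hinges on $\bm U_{\bm X}$ being supported on a bounded set. On an unbounded support the monomials need not even be square-integrable, and Stone--Weierstrass does not apply, so density could fail; it is precisely the copula rescaling to $[0,1]^q$ that rescues the argument. Once (i) and (ii) are confirmed for both families, \autoref{thm:cover} together with the identity $\mathcal{S}_{C_{\bm X}} = \mathcal{S}_{\bm U_{\bm X}\mid\bm X}$ yields the two claimed equalities simultaneously.
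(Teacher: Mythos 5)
Your proposal is correct and takes essentially the same route as the paper: the paper proves \autoref{prop:G} exactly by combining the identity $\mathcal{S}_{C_{\bm X}} = \mathcal{S}_{\bm U_{\bm X} \mid \bm X}$ from \autoref{subsec:ccs} with \autoref{thm:cover} applied to $\bm U_{\bm X}$ in place of $\bm Y$, treating the verification of conditions (i)--(ii) as immediate (citing Hausdorff's moment problem for the bounded-support point you emphasize for $\mathcal{G}_2$). You simply make explicit the density checks the paper leaves implicit --- inclusion--exclusion plus a monotone-class argument for $\mathcal{G}_1$, and Stone--Weierstrass on $[0,1]^q$ for $\mathcal{G}_2$ --- which is a faithful filling-in of the same argument rather than a different proof.
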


Each conditional expectation $\E\{g(\bm U_{\bm X})\mid \bm X\}$ in
\autoref{prop:G} summarizes a different aspect of $C_{\bm X}$. The families
$\mathcal{G}_1$ and $\mathcal{G}_2$ are such that the collection of summaries is
sufficient to characterize the conditional copula $C_{\bm X}$. The result is
quite intuitive: For $\mathcal{G}_1$, the function $\bm u \mapsto
\E\{\ind(U_{\bm X} \le \bm u) \mid \bm X\}$ is exactly the conditional copula
$C_{\bm X}$. The collection $\mathcal{G}_2$ relates to the fact that the
collection of all moments completely characterizes the distribution of a bounded
random vector \citep{hausdorff1921}. In our context, we do not even require all
moments, but only those where at least two exponents $k_j$ are non-zero. This
suffices because $\E(U_{j, \bm X}^{k}\mid \bm X)$ is independent of $\bm X$ for
all $k \ge 0$; see property (i) above.

The dependence summaries induced by $\mathcal{G}_1$ and $\mathcal{G}_2$ relate
to popular conditional measures of association when $q = 2$
\citep[see,][]{gijbels2011}. The conditional Spearman's $\rho$ can be expressed
as $\rho_{\bm X} = 12\E\{U_{1, \bm X} U_{2, \bm X} \vert \bm X\}- 3$ which
relates to the element of $\mathcal{G}_2$ where $k_1 = k_2 = 1$. The conditional
Blomqvist's $\beta$ can be expressed as $\beta_{\bm X} = 1 - 4\E\{\ind(U_{1, \bm
X} \le 0.5, U_{2, \bm X} \le 0.5) \vert \bm X\}$ and relates to the element of
$\mathcal{G}_1$ with $u_1 = u_2 = 0.5$. In parametric models, these
 measures often suffice to characterize the whole dependence
structure. In other cases, the dependence measure is the only quantity of
interest. We discuss sufficient dimension reduction in this simplified setting
in the following section.

 	\section{Sufficient dimension reduction for conditional dependence measures}
\label{sec:etadr}

We now state the sufficient dimension reduction problem for conditional association measures and  define central subspaces in this context. We discuss some popular measures and show that their central subspaces can be expressed in terms of central mean subspaces.

The simplest way to analyze dependence between two variables is to summarize it
in a single number, a \emph{measure of association} \citep[see, e.g.,][Chapter
5]{nelsen06}.
Any dependence measure $\eta$ satisfying R\'enyi's axiom F (invariance with respect to strictly increasing marginal transformation) \citep[see,][]{renyi1959} can be written as a
functional $T_\eta$ of the copula: $\eta = T_\eta(C)$
\citep[see,][]{schweizer1981}. The corresponding conditional measure of
dependence $\eta_{\bm X}$ is defined as the same functional $T_\eta$ applied to
the conditional copula: $\eta_{\bm X} = T_\eta(C_{\bm X})$. The dimension
reduction problem is now to find a matrix $\bm B \in R^{p \times d}$, with $d \le p$, such that
\begin{align}
	\eta_{\bm X} = \eta_{\bm B^\top \bm X} \quad a.s.
      \label{eq:etadr}
\end{align}

We define an \emph{$\eta$ dimension reduction subspace} as the column space
$\mathcal{S}(\bm B)$ of any matrix satisfying \eqref{eq:etadr}. Our goal is to
find the smallest of these spaces.

\begin{Definition}[Central $\eta$ subspace] Let $\mathcal{B}_{\eta_{\bm X}} =
\{\bm B \in \R^{p \times p} \colon \bm \eta_{\bm X} = \eta_{\bm B^\top \bm X} \;
a.s\}.$ If $\mathcal{S}_{Y \vert \bm X} = \bigcap_{\bm B \in
\mathcal{B}_{\eta_{\bm X}}} \mathcal{S}(\bm B)$ is a $\eta$
dimension reduction subspace, it is called the {\bfseries central $\eta$
subspace}. \end{Definition}\noindent

In \autoref{subsec:ccs} we saw two examples of measures whose central $\eta$
subspace is the central mean subspace related to a conditional expectation
$\E\{g(U_{1, \bm X}, U_{2, \bm X}) \mid \bm X\}$. These examples belong to a
more general class of association measures whose central subspaces can be
expressed as central mean subspaces. Consequently, we can estimate the central
$\eta$ subspace using methods that are suitable for finding the central mean
subspace.

\autoref{ex:eta} lists four popular dependence measures that are a linear
functional of the conditional copula in the sense that there exists
$g_\eta\colon [0,1]^2 \to \R$ such that $\eta_{\bm X} = \int g_\eta(u_1, u_2)
dC_{\bm X}(u_1, u_2) = \E\{g_\eta(U_{1, \bm X}, U_{2, \bm X}) \mid \bm X\}$.
Hence, $\mathcal{S}_{\eta_{\bm X}} = \mathcal{S}_{\E\{g_\eta(U_{1, \bm X}, U_{2,
\bm X} \mid \bm X)\}}$.
\begin{Example} \label{ex:eta} \quad \\[-12pt]
\begin{itemize}
\item Spearman's $\rho$: $g_\rho(u_1, u_2) = 12(u_1 -0.5)(u_2 - 0.5)$,
\item Blomqvist's $\beta$: $g_\beta = 1 - 4\ind(u_1 \le 0.5, u_2 \le 0.5)$,
\item Gini's $\gamma$: $g_\gamma = 2 \bigl(\vert u_1 + u_2 - 1 \vert - \vert u_1 - u_2 \vert\bigr)$,
\item van der Waerden's coefficient: $g_{\omega} = \Phi^{-1}(u_1) \Phi^{-1}(u_2)$, where $\Phi^{-1}$ is the standard normal quantile function. \hfill \qedsymbol
\end{itemize}
\end{Example}

All measures above belong to the family of \emph{concordance measures}
\citep{nelsen06}. The conditional Kendall's $\tau$ is another popular
concordance measure  that is not of the form $\E\{g_\eta(U_{1, \bm X}, U_{2,
\bm X}) \mid \bm X\}$. We can still relate it to a central mean subspace, but in
a more complicated fashion. The conditional Kendall's $\tau$ can be expressed as
$
\tau_{\bm X} = 4\int C_{\bm X}(u_1, u_2)d C_{\bm X}(u_1, u_2) - 1,
$
see \citet{gijbels2011}. Let  $(U_{1, \bm X}, U_{2, \bm X}) \sim C_{\bm X}$, $(\bar U_{1, \bm X}, \bar U_{2, \bm X})\sim C_{ \bm X}$ be two vectors such that
$
(U_{1, \bm X}, U_{2, \bm X}) \perp (\bar U_{1, \bm X}, \bar U_{2, \bm X}) \mid \bm X. 
$
Then,
\begin{align*}
		\int C_{\bm X}(u_1, u_2) dC_{\bm X}(u_1, u_2) = P\bigl(U_{1, \bm X} \le \bar U_{1, \bm X}, U_{2, \bm X} \le \bar U_{2, \bm X} \mid \bm X \bigr),
\end{align*}
see \citet[p.\ 160]{nelsen06} for a derivation in the unconditional case. Hence, $\tau_{\bm X} = \tau_{\bm B^\top \bm X}$ a.s., if and only if, a.s.,
    \begin{align} \label{eq:taudr}
\E\bigl\{g_\tau \bigl(U_{1, \bm X}, U_{2, \bm X}, 
\bar U_{1, \bm X}, \bar U_{2, \bm X}\bigr) \mid \bm X \bigr\}   
    =\;& \E\bigl\{g_\tau \bigl(U_{1, \bm X}, U_{2, \bm X}, \bar U_{1, \bm X}, \bar U_{2, \bm X}\bigr) \mid \bm B^\top \bm X  \bigr\}, 
\end{align}
where
$
g_\tau(u_1, u_2, \bar u_1, \bar u_2) = 4\ind(u_1 \le \bar u_2, u_2 \le \bar u_2) - 1.
$
Although the expressions in \eqref{eq:taudr} are more difficult to estimate due to the presence of conditionally independent copies, they are still conditional expectations. Hence, the central $\tau$ subspace is also a central mean subspace:
\begin{align*}
  \mathcal{S}_{\tau_{\bm X}} = \mathcal{S}_{\E\{g_\tau(U_{1, \bm X}, U_{2, \bm X}, \bar U_{1, \bm X}, \bar U_{2, \bm X} \mid \bm X)\}}.
\end{align*}

\section{Differences and similarities between central subspaces for conditional dependence}
\label{subsec:differences}

Since the conditional measures of association $\eta_{\bm X}$ discussed in the previous section are functionals of the conditional copula, it generally holds that $\mathcal{S}_{\eta_{\bm X}} \subseteq \mathcal{S}_{C_{\bm X}}$.   This implies that the dimension can possibly be reduced further when a measure of association is of concern. Another question concerns the relationship between central $\eta$ subspaces for different choices of $\eta$. Universal statements of this kind seem out of reach. However, all subspaces do coincide in many situations of practical interest. Let us substantiate this by an example.
\begin{Example} \label{ex:1}
Consider a parametric family of copulas $\mathcal{C}^{\theta} = \{C(\cdot, \cdot; \theta)\colon \theta \in \Theta \subseteq \R\}$ and let $h\colon \Omega_{\bm X} \to \Theta$ be unknown, with $\Omega_{\bm X}$ the support of $\bm X$ ($\Omega_{\bm X}\subseteq \R^q$). Then, a general semiparametric model for the conditional copula is
\begin{align}
	\bigl(U_{1, \bm X}, U_{2, \bm X}\bigr) \vert \bm X = \bm x  \sim  C\bigl\{\cdot, \cdot; h(\bm x)\bigr\}, \quad \mbox{for all } \bm x \in \R^p.
    \label{eq:semipar}
\end{align}
For most of the popular parametric families there is a one-to-one relationship
between the dependence parameter $\theta$ and the measures of concordance
$T_\eta(C^\theta)$. Examples include the Gaussian, Student t (with fixed degrees
of freedom), Farlie-Gumbel-Morgenstern (FGM), Clayton, Gumbel, Joe, and Frank
copula families.\footnote{Although analytical formulas for the relationship
rarely exist, one can check numerically that the relation is in fact
one-to-one.} When the relationship between $\theta$ and $T_\eta(C^\theta)$ is
one-to-one, the covariate $\bm X$ always affects the copula and $\eta$ at the
same time. Similarly, for any fixed matrix $\bm B \in \R^{p \times d}$, a change
in $\eta_{\bm B^\top \bm X}$ always leads to a change in $C_{\bm B^\top \bm X}$
and the other way around. Hence,
$$\mathcal{S}_{C_{\bm X}} =
\mathcal{S}_{\rho_{\bm X}}  = \mathcal{S}_{\beta_{\bm X}} =
\mathcal{S}_{\gamma_{\bm X}} = \mathcal{S}_{\omega_{\bm X}} = \mathcal{S}_{\tau_{\bm X}}.$$
The models considered by \citet{patton2001},
\citet{acar2011}, \citet{Abegaz2012}, \citet{vatter2015}, and
\citet{fermanian2018single} are all of this type. \qed
\end{Example}\noindent On the other hand, we emphasize that this is not true in
general. Let us give an obvious counterexample where the central copula subspace
is different from central $\eta$ subspaces.

\begin{Example}
	Consider model \eqref{eq:semipar} for the Student t copula where the
association parameter is fixed, and $\theta$ refers to the degrees of freedom
parameter. The degrees of freedom has no influence on Kendall's $\tau$,
i.e., $\tau_{\bm X}$ is constant. Then
it holds trivially that $\tau_{\bm X} = \tau_{\bm B^\top \bm X}$ for $\bm B =
\bm  0$ which implies $\mathcal{S}_{\tau_{\bm X}} = \{\bm 0\}$. However, $\mathcal{S}_{C_{\bm X}}
\neq \{\bm 0\}$, because $C_{\bm X}$ does depend on the degrees of freedom
parameter and a change in $h(\bm X)$ will have an effect on $C_{\bm X}$. \qed
\end{Example}\noindent The next example demonstrates that it is also possible
that central $\eta$ subspaces are different for different association measures.

\begin{Example} \label{ex:2}
Consider two parametric copula families
$$\mathcal{C}_1^{\theta_1} = \{C_1(\cdot, \cdot; \theta_1)\colon \theta_1 \in \Theta_1 \subseteq \R\}, \quad  \mathcal{C}_2^{\theta_2} = \{C_2(\cdot, \cdot; \theta_2)\colon \theta_2 \in \Theta_2 \subseteq \R\}.$$
For $\theta_1 \in \Theta_1, \theta_2 \in \Theta_2$, and a mixing parameter $\pi \in [0, 1]$, we define the mixture of the two copulas
$ 
	 C(u_1, u_2; \theta_1, \theta_2, \pi) = \pi C_1(u_1, u_2; \theta_1) + (1 - \pi) C_2(u_1, u_2; \theta_2).
$ 
Let $p = p_1 + p_2$, $h_1\colon \R^{p_1} \to \Theta_1$, $h_2\colon \R^{p_2} \to \Theta_2$, $\pi \colon \R^{p_2}  \to [0, 1]$. We consider the following conditional copula model: for all $\bm x =  (\bm x^{(1)}, \bm x^{(2)}) \in \R^p$,
\begin{align*}
	\bigl(U_{1, \bm X}, U_{2, \bm X}\bigr) \vert \bm X = \bm x  \sim  C\bigl\{\cdot, \cdot; h_1(\bm x^{(1)}),h_2(\bm x^{(1)}), \pi(\bm x^{(2)})\bigl\}.
\end{align*}
In this model, the first component of the covariate vector, $\bm x^{(1)}$, affects the strength of dependence in both parts of the mixture. The second component, $\bm x^{(2)}$, controls the weighting between the two parts.

We have $\mathcal{S}_{C_{\bm X}} = \mathrm{span}(\bm e^{(1)}, \bm e^{(2)})$, where $\bm e^{(2)}$ is a vector with 1 in the first $p_1$ components and all remaining components equal 0, and $\bm e^{(2)} = \bm 1 - \bm e^{(1)}$. Assume further that $h_1$ and $h_2$ are such that for all $\bm x^{(1)} \in \R^{p_1}$,
\begin{align} \label{eq:hh1}
	\int g_\rho(u_1, u_2) dC_1\{u_1, u_2; h_1(\bm x^{(1)})\}
	= \int g_\rho(u_1, u_2) dC_2\{u_1, u_2 \cdot; h_2(\bm x^{(1)})\}.
\end{align}
Since $\pi(\bm x^{(2)}) + (1 - \pi(\bm x^{(2)})) = 1$, the Spearman's $\rho$ of $C$ is not affected by $\bm x^{(2)}$. Hence, $\mathcal{S}_{\rho_{\bm X}} = \mathrm{span}(\bm e_1)$.
However, it is generally not the case that
\begin{align} \label{eq:hh2}
	\int g_\beta(u_1, u_2) dC_1\{u_1, u_2; h_1(\bm x^{(1)})\}
	= \int g_\beta(u_1, u_2) dC_2\{u_1, u_2 \cdot; h_2(\bm x^{(1)})\}
\end{align}
holds at the same time. If it does not, we have $\mathcal{S}_{\rho_{\bm X}} \neq \mathcal{S}_{\beta_{\bm X}}$. This is easy to check when analytical expressions for the relationship between the measures and the copula parameter are available. For example, it is not possible that \eqref{eq:hh1} and \eqref{eq:hh2} hold at the same time when $\mathcal{C}_1^{\theta_1}$ is the Gaussian family and $\mathcal{C}_2^{\theta_2}$ is the FGM family. \qed
\end{Example}\noindent

We conclude that the various central subspaces may differ in conditional dependence models.  But in most situations of practical interest, they  coincide.

 	\section{Estimation of central subspaces in conditional dependence models}
\label{sec:estimation}

In  Sections \ref{sec:sklar} and \ref{sec:etadr} we have seen that sufficient dimension reduction problems in conditional dependence models can be transformed to equivalent dimension reduction problems in a classical (mean) regression setting. A large number of methods for the estimation of central (mean) subspaces have been proposed \citep[see, e.g.,][]{ma2013review}.
We focus on a particular instance of nonparametric estimators: the  \emph{outer
product of gradients (OPG)} method which was introduced by \citet{xia2002} and
is based on an idea of \citet{haerdle1989}. The main motivation for this choice
is the simplicity and generality of the OPG method. It is often used as a
starting point for other techniques that need to be initialized
with a consistent estimate of the central subspace \citep[e.g.,][]{xia2002,
xia2007, luo2014}. Of course, all other methods, e.g., the ones from \citet{ma2012} or \citet{huang2017}, are similarly applicable in the
context of conditional dependence models.

\subsection{The OPG matrix and central subspaces}

Let $\Gcal \subset \{g\colon \R^{q} \to
\R\}$ be a set of functions and we want to identify the space $\Scal_\Gcal =
\mathrm{span}\{\Scal_{\E\{g(\bm Y) \mid \bm X\}}\colon g \in \Gcal\}$.
For central mean subspaces, $\Gcal$ is a singleton. A larger, but
finite number of suitably chosen functions suffice for central subspaces, see \Cref{thm:cover} and the following remarks.
In the following, we thus assume $|\Gcal| < \infty$ for simplicity \citep[this can be
relaxed by randomizing over elements in $\Gcal$, see,][]{yin2011}.
Define
\begin{gather*}
  m_g(\bm x) = \E\{g(\bm Y) \mid \bm X = \bm x\}, \qquad m_g^{\bm B}(\bm x) =
\E\{g(\bm Y) \mid \bm B^\top \bm X = \bm B^\top \bm x\},
\\
  \bm \Delta_{\Gcal}
  = \sum_{g \in \Gcal} \E\bigl\{\nabla m_g(\bm X) \nabla^\top m_g(\bm X)  \ind_{\bm X \in \Dcal_{\bm X}}\bigr\},
\end{gather*}
where $\Dcal_{\bm X} \subseteq \R^p$ is some set.
The OPG method is based on the following observation.
\begin{Lemma} \label{lem:cs}
   Suppose $\bm B_0 \in \R^{p \times d}$ is a matrix such that $\mathrm{span}(\bm B_0) = \Scal_\Gcal$ and
  \begin{align*}
    \bm \Delta_\Gcal^{\bm B_0} = \sum_{g \in \Gcal} \E\{\nabla m_g^{\bm B_0}(\bm B_0^\top \bm X) \nabla^\top m_g^{\bm B_0}(\bm B_0^\top \bm X) \ind_{\bm X \in \Dcal_{\bm X}}\}
  \end{align*}
  is of full rank.
  Let $\bm \Delta_{\Gcal} = \bm V \bm \Lambda \bm V^\top$ be the eigen-decomposition of $\bm \Delta_{\Gcal}$. Then $\Scal_\Gcal = \mathrm{span}(\bm V \bm \Lambda^{1/2} )$.
\end{Lemma}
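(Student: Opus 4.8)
The plan is to reduce the claim to an identity between the range (column space) of the positive semidefinite matrix $\bm \Delta_\Gcal$ and $\Scal_\Gcal$, and then to obtain this identity from a factorization of $\bm \Delta_\Gcal$ through the matrix $\bm B_0$.

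First I would record a purely linear-algebraic reduction. The matrix $\bm \Delta_\Gcal$ is symmetric and positive semidefinite, being a sum of expectations of the rank-one terms $\nabla m_g(\bm X)\nabla^\top m_g(\bm X)$ weighted by the nonnegative factor $\ind_{\bm X \in \Dcal_{\bm X}}$. Hence in the eigen-decomposition $\bm \Delta_\Gcal = \bm V \bm \Lambda \bm V^\top$ the entries of $\bm \Lambda$ are nonnegative, and the columns of $\bm V \bm \Lambda^{1/2}$ are the vectors $\sqrt{\lambda_i}\,\bm v_i$. Their span equals the span of the eigenvectors with strictly positive eigenvalues, which is exactly the range of $\bm \Delta_\Gcal$. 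It therefore suffices to prove $\mathrm{span}(\bm \Delta_\Gcal) = \Scal_\Gcal$.

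The core step is to show that every gradient $\nabla m_g(\bm x)$ lies in $\Scal_\Gcal = \mathrm{span}(\bm B_0)$. For each $g \in \Gcal$ the central mean subspace satisfies $\Scal_{\E\{g(\bm Y)\mid\bm X\}} \subseteq \Scal_\Gcal$ by the definition of $\Scal_\Gcal$. Since any subspace containing a mean dimension reduction subspace is again a mean dimension reduction subspace (a short tower-property argument, using that $\sigma(\bm B_0^\top\bm X)$ contains $\sigma(\bm B_1^\top\bm X)$ whenever $\mathrm{span}(\bm B_1) \subseteq \mathrm{span}(\bm B_0)$), it follows that $m_g(\bm x) = m_g^{\bm B_0}(\bm x)$ a.s.; that is, $m_g$ depends on $\bm x$ only through $\bm B_0^\top\bm x$. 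Writing $m_g(\bm x) = \tilde m_g(\bm B_0^\top\bm x)$ and differentiating by the chain rule gives $\nabla m_g(\bm x) = \bm B_0\,(\nabla\tilde m_g)(\bm B_0^\top\bm x)$, so indeed $\nabla m_g(\bm x) \in \mathrm{span}(\bm B_0)$. Substituting this into the definition of $\bm \Delta_\Gcal$ and recognizing the inner $d\times d$ matrix as $\bm \Delta_\Gcal^{\bm B_0}$ yields the factorization
\[
  \bm \Delta_\Gcal = \bm B_0\,\bm \Delta_\Gcal^{\bm B_0}\,\bm B_0^\top.
\]

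Finally I would finish with a rank argument. The matrix $\bm B_0 \in \R^{p\times d}$ has full column rank $d$, and $\bm \Delta_\Gcal^{\bm B_0}$ is full rank by hypothesis, hence (being symmetric positive semidefinite) positive definite and invertible. Because $\bm B_0^\top$ maps $\R^p$ onto $\R^d$ and $\bm \Delta_\Gcal^{\bm B_0}$ is a bijection of $\R^d$, the range of $\bm B_0\,\bm \Delta_\Gcal^{\bm B_0}\,\bm B_0^\top$ equals $\mathrm{span}(\bm B_0) = \Scal_\Gcal$, which together with the first paragraph gives $\Scal_\Gcal = \mathrm{span}(\bm V \bm \Lambda^{1/2})$. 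I expect the main obstacle to be the core step, specifically justifying that the almost-sure identity $m_g = m_g^{\bm B_0}$ may be used at the level of gradients; this implicitly requires the conditional mean functions to be differentiable (as is standard for the OPG method) and the identification of $m_g$ with a genuine function of $\bm B_0^\top\bm x$ on the support of $\bm X$. Once the factorization is established, the remaining linear algebra is routine.
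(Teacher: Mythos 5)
Your proof is correct and takes essentially the same route as the paper: both arguments hinge on the chain-rule factorization $\bm \Delta_\Gcal = \bm B_0 \, \bm \Delta_\Gcal^{\bm B_0} \, \bm B_0^\top$ obtained from $m_g = m_g^{\bm B_0}$, and then finish with routine linear algebra. The only cosmetic differences are that you identify $\mathrm{span}(\bm V \bm \Lambda^{1/2})$ with the range of $\bm \Delta_\Gcal$ and compute that range directly (using surjectivity of $\bm B_0^\top$ and invertibility of $\bm \Delta_\Gcal^{\bm B_0}$), whereas the paper argues via the quadratic form $\bm v^\top \bm \Delta_\Gcal \bm v = 0$ and orthogonal complements; your tower-property justification of $m_g = m_g^{\bm B_0}$ makes explicit a step the paper treats as immediate from the hypothesis $\mathrm{span}(\bm B_0) = \Scal_\Gcal$.
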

\begin{proof}
  Recall that since $\Delta_{\Gcal}$ is symmetric, $\bm V = (\bm v_1, \dots, \bm v_p)$ is orthogonal and $\bm \Lambda$ is diagonal containing the eigenvalues on the diagonal. Assume without loss of generality that they are ordered such that $\lambda_1 \ge \ldots \ge \lambda_d > 0$ and $\lambda_{d + 1} = \ldots = \lambda_p = 0$.
  Observe that $\mathrm{span}(\bm V \bm \Lambda^{1/2} ) = \mathrm{span}(\bm v_1, \dots, \bm v_d)$ and recall that we have the orthogonal decomposition
  $ 
    \R^p = \mathrm{span}(\bm v_1, \dots, \bm v_d) + \mathrm{span}(\bm v_{d +1}, \dots, \bm v_p).
  $ 
  Since by assumption $ m_g(\bm X) = m_g^{\bm B_0}(\bm B_0^\top \bm X) ,$
  it holds that
  $ 
    \nabla m_g(\bm X) = \nabla  m_g^{\bm B_0}(\bm B_0^\top \bm X) \bm B_0^\top,
  $ 
  for all $ g \in \Gcal$. We thus have
  \begin{align*}
    \bm \Delta_{\Gcal}
    = \sum_{g \in \Gcal} \bm B_0 \E\bigl\{ \nabla  m_g^{\bm B_0}(\bm B_0^\top \bm X)^\top  \nabla  m_g^{\bm B_0}(\bm B_0^\top \bm X)\bigr\} \bm B_0^\top = \bm B_0 \bm \Delta_{\Gcal}^{\bm B_0} \bm B_0^\top.
  \end{align*}
  Let $\bm v \neq \bm 0$ be some vector. Then because $\bm \Delta_{\Gcal}^{\bm B_0}$ has full rank, we have
  \begin{align*}
    \bm v \in \mathrm{span}(\bm v_{d + 1}, \dots, \bm v_p)  \quad \Leftrightarrow \quad\bm v^\top \bm \Delta_{\Gcal} \bm v = 0 \quad \Leftrightarrow \quad \bm v^\top \bm B_0 \bm \Delta_{\Gcal}^{\bm B_0} \bm B_0^\top \bm v = 0  \quad \Leftrightarrow \quad  \bm v \perp \Scal_\Gcal.
  \end{align*}
  The orthogonal complement of $\mathrm{span}(\bm v_{d + 1}, \dots, \bm v_p)$ is $\mathrm{span}(\bm v_1, \dots, \bm v_d)$, so negating the statements above gives
  \begin{align*}
    \bm v \in \mathrm{span}(\bm v_{1}, \dots, \bm v_d) \quad \Leftrightarrow \quad   \bm v \in \Scal_\Gcal. \tag*{\qedhere}
  \end{align*}
\end{proof}
The assumption that $ \bm \Delta_\Gcal^{\bm B_0}$ is of full rank can be deciphered as follows: (i) $\bm B_0$ does not contain any redundant columns, (ii) the restriction of $m_g$ to the set $\Dcal_{\bm X}$ covers all relevant directions of variation. The lemma implies that the central subspace is spanned by the eigenvectors of the OPG matrix corresponding to non-zero eigenvalues. In particular, if $\dim(\Scal_\Gcal) = d$, the first $d$ eigenvectors of $\bm \Delta_\Gcal$
span $\Scal_\Gcal$.

\subsection{An adaptive OPG method for estimating central subspaces}

We introduce an adaptive variant of the method, similar to the ones proposed by \citet{xia2007} and
\citet{yin2011}.
Because our main consistency result may be of independent interest, we present the method and its convergence properties in a general setting first.
Estimation of central dependence subspaces is discussed in the following section.

Let $\xix = \bm X_i - \bm x$ and $\bm B \in \R^{p \times
r}$, $1 \le r \le p$. To estimate the gradients $\nabla m_g(\bm x)$, define the local-linear regression estimator
\begin{align} \label{eq:ll_def}
\begin{pmatrix}
  \wh m_g(\bm x) \\ \wh \nabla m_g(\bm x)
\end{pmatrix} &= \arg\min_{\bm \beta} \sum_{i = 1}^n \biggl\{g(\bm Y_i) - \bm \beta^\top \begin{pmatrix}
  1 \\ \xix
\end{pmatrix} \biggr\}^2K_h({\bm B}^\top \xix),
\end{align}
where $K_h(\bm z) = h^{-r}\prod_{k = 1}^rK(z_k/h)$ for a univariate kernel function $K$.
By $K(\bm z)$ we denote the multivariate product kernel, i.e. $K(\bm z)=\prod_{k = 1}^r K(z_k)$.

The corresponding estimator for the OPG matrix is
\begin{align}
  \wh{\bm \Delta}_\Gcal &= \sum_{g \in \Gcal} \frac 1 n \sum_{i = 1}^n \wh \nabla m_g(\bm X_i) \wh \nabla^\top m_g(\bm X_i)  \ind(\bm X_i \in \Dcal_{\bm X}),  \label{eq:deltahat_def}
\end{align}
for some compact set $\Dcal_{\bm X} \subset \R^{p}$ introduced to stabilize the estimator.
An estimate for the central subspace $\Scal_{\Gcal}$ can be obtained by computing the eigenvectors of $\wh{\bm \Delta}_\Gcal$ corresponding to the $d$ largest eigenvalues.
However, because the gradients are estimated nonparametrically, this estimate $\Scal_{\Gcal}$ suffers from the curse of dimensionality. If $p \gg d$, the convergence will be extremely slow.

To overcome this, we consider the following adaptive procedure. Supposing for simplicity that $d =
\dim(\Scal_\Gcal)$ is known, set $\wh {\bm B}^{(0)} = \bm I_{p \times p}$, $h_0
\sim (\ln n/ n)^{1/(6 + p)}$ and fix some $\rho \in (0, 1)$ and $h_\infty \in (0, \infty)$. For $t = 0, 1,  2, \dots$:
\begin{enumerate}[label=\arabic*.]
  \item Set $h_{t} = \max\{\rho h_{t - 1}, h_\infty\}$.
  \item Compute $\wh \nabla m_g(\bm x)$ in \eqref{eq:ll_def} with $\bm B =
  \wh{\bm B}^{(t)}$ and $h = h_t$ and compute $\wh{\bm \Delta}_{\Gcal}$ as in \eqref{eq:deltahat_def}.
  \item Define $\wh{\bm B}^{(t + 1)} = \wh{\bm V} \mathrm{diag}\{s(\wh {\bm \Lambda})\}$ where  $\wh{\bm V} \wh {\bm \Lambda} \wh{\bm V}^\top $ is the eigen-decomposition of $\wh{\bm \Delta}_{\Gcal}$, and $s \colon \R^{p \times p} \to \R^{p}$  satisfies
   $s(\bm \Lambda)_j = 0$ if $\Lambda_{j, j} = 0$ for $j > d$.
\end{enumerate}
The function $s$ is introduced to stabilize the convergence of the algorithm. Practical recommendations for the hyperparameters are given in \Cref{sec:sims:methods}.

Denote by $\wh{\bm B}^{(\infty)}$ the estimate obtained after convergence of this adaptive procedure.

\begin{Assumptions} \quad \\[-12pt]
\begin{enumerate}[label=A\arabic*]
	\item \label{ass:K} The kernel $K$ is a Lipschitz, symmetric probability density with support $[-1, 1]$.
	
  \item \label{ass:fB} The density $f_{\bm B_0^\top \bm X}$ has two
  bounded continuous derivatives on $\Dcal_{\bm X}$, and \\ $\inf_{\bx \in \Dcal_{\bm
  X}} f_{\bm B_0^\top \bm X}(\bm B_0^\top \bm x) > 0$.

  \item \label{ass:invertible} The matrix valued function
  \begin{align*}
    \bm M(\bm x) = \E\biggl\{\begin{pmatrix} 1 & \Xx^\top \\
      \Xx & \Xx \Xx^\top
      \end{pmatrix} \bigg| \bm B_0^\top \bm X = \bm B_0^\top \bm x \biggr\},
  \end{align*}
with $\Xx=\bm X - \bm x$,  has eigenvalues bounded away from zero, and is twice continuously differentiable, uniformly on $\Dcal_{\bm X}$.

  \item \label{ass:mg} The functions $m_g, g \in \Gcal,$ have four continuous
  derivatives on $\Dcal_{\bm X}$.

  \item \label{ass:eigvals} The eigenvalues of $\bm \Delta_\Gcal = \bV_0 \bm \Lambda_0 \bV_0^\top$ satisfy
  $\lambda_{1} > \ldots >\lambda_{d} > \lambda_{d+1} = \ldots = \lambda_{p} = 0$,
  and the first $d$ columns of $\bm B_0 = \bV_0 \mathrm{diag}\{s(\bm \Lambda_0)\}$ span $\Scal_{\Gcal}$.
\end{enumerate}
\end{Assumptions}

To avoid
ambiguity, we shall always assume that the eigenvectors of $\wh {\bm B}^{(\infty)}$ and of $\bm B_0  = \bm V_0 \mathrm{diag}\{s(\bm \Lambda_0)\}$ are oriented such that their first component is non-negative.
Let $\| \cdot \|$ denote the operator norm for matrices. The following result is proven in \autoref{proof:adaptive}.
\begin{Theorem} \label{thm:adaptive}
  Let  $h_\infty \to 0$, $nh_\infty^{d}/\ln n\to \infty$, as $n \to \infty$, and define $r_{n, d,
  \infty} = (nh_\infty^d / \ln n)^{-1/2}$. Under \ref{ass:K}--\ref{ass:eigvals},
$ 
    \| \wh{\bm B}^{(\infty)} - \bm B_0  \| = O_p(n^{-1/2}   + h_\infty^4 + r_{n, d, \infty}^2).
$ 
 \end{Theorem}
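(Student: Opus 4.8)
\textit{Proof plan (sketch).}

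The plan is to track the iteration error $\delta_t = \|\wh{\bm B}^{(t)} - \bm B_0\|$ and to show that the geometric bandwidth schedule drives it down to an irreducible floor of the claimed order. First I would isolate a single step: given a current iterate $\wh{\bm B}^{(t)}$ whose span differs from $\Scal_\Gcal$ by $\delta_t$, and bandwidth $h_t$, I would derive a uniform-in-$\bm x$ (over $\Dcal_{\bm X}$) expansion of the local-linear gradient defined in \eqref{eq:ll_def} with kernel $K_{h_t}(\wh{\bm B}^{(t)\top}\xix)$. Under \ref{ass:K}, \ref{ass:fB}, \ref{ass:invertible} and \ref{ass:mg}, inverting the design matrix $\bm M(\bm x)$ yields $\wh\nabla m_g(\bm x) = \nabla m_g(\bm x) + \bm b_g(\bm x) + \bm\xi_g(\bm x)$, where $\bm b_g$ is a deterministic bias consisting of an $O(h_t^2)$ smoothing term lying (to leading order) in the smoothing subspace plus an $O(\delta_t)$ misspecification term, and $\bm\xi_g$ is a centered stochastic term of uniform size $r_{n,d,t} = (nh_t^d/\ln n)^{-1/2}$ in the localized directions. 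The technically delicate point here, which I would handle by a uniform-over-$\bm B$ empirical-process argument on a shrinking neighborhood of $\bm B_0$, is that these bounds must hold simultaneously for the random, data-dependent matrix $\wh{\bm B}^{(t)}$ inserted into the kernel.

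Next I would propagate these expansions through the outer-product matrix $\wh{\bm\Delta}_\Gcal$ in \eqref{eq:deltahat_def} and compare it with $\bm\Delta_\Gcal$. Expanding $\wh\nabla m_g\wh\nabla^\top m_g$, the sample average of $\nabla m_g\nabla^\top m_g\ind_{\bm X\in\Dcal_{\bm X}}$ contributes the parametric fluctuation $O_p(n^{-1/2})$, the squared stochastic term contributes $O_p(r_{n,d,t}^2)$, and the bias cross terms are $O(h_t^2+\delta_t)$ in operator norm. The crucial structural observation, which I would establish using $\nabla m_g(\bm x)\in\Scal_\Gcal=\mathrm{span}(\bm B_0)$ together with the symmetry of $K$ from \ref{ass:K}, is that only the off-diagonal block $\bm P_\perp(\wh{\bm\Delta}_\Gcal-\bm\Delta_\Gcal)\bm P$ — with $\bm P,\bm P_\perp$ the orthogonal projections onto $\Scal_\Gcal$ and its complement — governs the subspace perturbation, and in this block the first-order contributions collapse: the symmetric $O(h_t^2)$ bias is annihilated, leaving $O(h_t^4)$, and the misspecification enters only through the contracting factor $\delta_t+h_t^2$.

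I would then invoke perturbation theory. Since \ref{ass:eigvals} gives simple nonzero eigenvalues $\lambda_1>\cdots>\lambda_d>\lambda_{d+1}=\cdots=0$ with gaps bounded away from zero, the Davis--Kahan $\sin\Theta$ theorem applied to the off-diagonal block, combined with Weyl's inequality for the $s(\bm\Lambda_0)$ scaling in $\bm B_0=\bm V_0\,\mathrm{diag}\{s(\bm\Lambda_0)\}$ and the sign normalization, yields the one-step recursion
\begin{align*}
\delta_{t+1} = O_p\bigl((\delta_t+h_t^2)\,\delta_t + n^{-1/2} + h_t^4 + r_{n,d,t}^2\bigr).
\end{align*}
To close the induction, the initialization $t=0$ uses the full-dimensional kernel with $h_0\sim(\ln n/n)^{1/(6+p)}$, which is consistent though slow, so $\delta_1=o_p(1)$ falls below the contraction threshold for $n$ large. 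Because $h_t=\max\{\rho h_{t-1},h_\infty\}$ decreases geometrically to $h_\infty$, the contraction factor $\delta_t+h_t^2$ stays strictly below $1$ and the driving floor $n^{-1/2}+h_t^4+r_{n,d,t}^2$ decreases to $n^{-1/2}+h_\infty^4+r_{n,d,\infty}^2$; iterating then sends $\delta_t$ to the stable fixed point of that order, which is exactly the asserted bound on $\|\wh{\bm B}^{(\infty)}-\bm B_0\|$.

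I expect the main obstacle to be twofold: (i) the uniform gradient expansion with the random projection $\wh{\bm B}^{(t)}$ in the kernel, which demands control of an empirical process indexed by $\bm B$ over a neighborhood of $\bm B_0$; and, more essentially, (ii) the second-order cancellation in the off-diagonal block, namely that the $O(h_t^2)$ smoothing bias vanishes there (leaving $h_t^4$) and that the subspace misspecification enters only with the factor $\delta_t+h_t^2$. This cancellation is what simultaneously accelerates the bias to $h_\infty^4$ and furnishes the contraction that drives the iteration to the parametric-order floor.
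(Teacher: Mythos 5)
Your overall architecture --- a one-step expansion of the OPG matrix, eigenspace perturbation under the eigengap in \ref{ass:eigvals}, a contraction recursion driven by the geometric bandwidth schedule, and initialization with the full-dimensional kernel --- is the same as the paper's proof in Appendix~\ref{proof:adaptive}. But your expansion of $\wh{\bm\Delta}_\Gcal$ has a genuine gap: you account for the oracle term ($O_p(n^{-1/2})$), the squared stochastic term ($O_p(r_{n,d,t}^2)$), and the bias cross terms, but you omit the \emph{stochastic} cross terms $\frac 1n\sum_i \nabla m_g(\bm X_i)\,\bm\xi_g(\bm X_i)^\top$ and their transposes (the paper's $D_2$ and $D_3$), where $\bm\xi_g$ is your centered stochastic error. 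These are first-order contributions and they do \emph{not} vanish in your off-diagonal block: $\nabla m_g(\bm X_i)\in\Scal_\Gcal$, but $\bm P_\perp\bm\xi_g(\bm X_i)\neq\bm 0$ in general, so $\bm P_\perp\bigl(\frac 1n\sum_i\bm\xi_g\nabla^\top m_g\bigr)\bm P$ survives. The only bound available from ``uniform size $r_{n,d,t}$'' is $O_p(r_{n,d,t})$, which would put $r_{n,d,t}$ rather than $r_{n,d,t}^2$ into the recursion floor and yield $\|\wh{\bm B}^{(\infty)}-\bm B_0\| = O_p(n^{-1/2}+h_\infty^4+r_{n,d,\infty})$; since $r_{n,d,\infty} = (\ln n)^{1/2}n^{-1/2}h_\infty^{-d/2}$ can never be $O(n^{-1/2})$, this is strictly weaker than the theorem and kills the parametric rate. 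Closing exactly this gap is the paper's heaviest machinery: Theorem~\ref{thm:ll_refined} provides a uniform Bahadur representation $\wh\nabla m_g-\nabla m_g\approx \frac 1n\sum_j\bm\psi_n(\bm Y_j,\bm X_j,\cdot)$ with a conditionally centered influence function, which turns $D_2$ into a second-order U-statistic whose H\'ajek projection is $O_p(n^{-1/2})$ and whose degenerate part is $O_p(n^{-1}h_t^{-d/2})$. Without this (or an equivalent) argument, your recursion, and hence the claimed rate, does not follow.

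A secondary divergence concerns the bias. Your mechanism for the $h_t^4$ term --- that an $O(h_t^2)$ gradient bias lies in the smoothing subspace and is annihilated in the off-diagonal block --- is asserted, not proven, and the paper in fact establishes something stronger that makes it unnecessary: in Lemma~\ref{lem:tau} the $h^2$ bias of the local-linear score is proportional to $\bigl(1,\{\bm\mu(\bm x)-\bm x\}^\top\bigr)^\top$, which is exactly annihilated by the gradient rows of $\bm S_g^{+}(\bm x)$, and the $h^3$ term vanishes by kernel symmetry, so the gradient estimator itself has pointwise bias $O(h^4)$ in \emph{all} directions; no block-cancellation is needed. If you keep your route, you must both prove your structural claim and replace the Davis--Kahan bound in the form the paper uses (which involves the full perturbation norm) by a first-order block-perturbation expansion with a controlled second-order remainder. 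Relatedly, your contraction factor $\delta_t+h_t^2$ is a more optimistic accounting than the paper's $h_t + r_{n,d,t}/h_t$, which arises from the misspecification terms $h_t\delta_{\bm B}+r_{n,d,t}\delta_{\bm B}/h_t$ tracked carefully through Lemmas~\ref{lem:kernel-lip}--\ref{lem:tau}; both factors are $o(1)$, so this alone would not break the iteration, but it is another step that your proposal leaves unverified.
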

\noindent The convergence rate in \autoref{thm:adaptive} does not depend on the
number of covariates $p$, but the reduced dimension $d$. If $d \le 3$, we can
choose $h_\infty$ such that $\| \wh{\bm B}^{(\infty)} - \bm B_0  \| =
O_p(n^{-1/2})$; for example $h_\infty \sim n^{-1/8}/ \ln n$.
A similar result was proven by \citet{xia2007} (who used a slightly different adaptation strategy), but only for a specific family $\Gcal$ consisting of smoothing kernels.

\subsection{Estimation of central dependence subspaces}
\label{subsec:est_eta}

We now turn to estimation of the central dependence subspaces. We consider
a finite set of functions $\Gcal$ such that
$\Scal_\Gcal = \mathrm{span}\{\Scal_{\E\{g(\bm U_{\bm X}) \mid \bm X\}}\}$.
For central $\eta$ subspaces, $\Gcal$ is a singleton. A larger, but
finite number of suitably chosen functions suffice for central copula subspaces, see \Cref{prop:G} and the following remarks. In most practically relevant cases, the central copula subspace can be identified by a single function;  see \Cref{subsec:differences}.

Assume for now that the conditional margins are known.
For $\ell = 1, \dots, q$, $i = 1, \dots, n$, set $ U_{\ell, \bm X_i} =
F_{Y_\ell \mid \bm X}(Y_{i, \ell} \mid \bm X_i)$. Define
\begin{align*}
  \begin{pmatrix}
    \wh m_g(\bm x) \\ \wh \nabla m_g(\bm x)
  \end{pmatrix} &= \arg\min_{\bm \beta} \sum_{i = 1}^n \biggl\{g({\bm U}_{\bm X_i}) - \bm \beta^\top \begin{pmatrix}
    1 \\ \xix
  \end{pmatrix} \biggr\}^2K_h({\bm B}^\top \xix), \\
  \wh{\bm \Delta}_\Gcal &= \sum_{g \in \Gcal} \frac 1 n \sum_{i = 1}^n \wh \nabla m_g(\bm X_i) \wh \nabla^\top m_g(\bm X_i)  \ind(\bm X_i \in \Dcal_{\bm X}).
\end{align*}
Setting $\bY_i = \bm U_{\bm X_i}$ in \autoref{thm:adaptive}, we get
\begin{align*}
  \| \wh{\bm B}^{(\infty)} - \bm B_0  \| = O_p(n^{-1/2}   + h_\infty^4 + r_{n, d, \infty}^2).
\end{align*}
The right-hand side is of order $O_p(n^{-1/2})$ if $d \le 3$ and $h_\infty$ appropriately chosen, see the comments after \autoref{thm:adaptive}.

The situation is more complicated when margins have to be estimated, because the estimation error propagates. When a correctly specified parametric model is estimated, we can expect these errors to contribute at most a term of order $O_p(n^{-1/2})$. In the non-parametric case, we can use the adaptive OPG method to estimate the central marginal subspaces. The convergence rate of resulting marginal estimators would be $ O_p(h_\ell^2 + (nh^{d_\ell}/\ln n)^{-1/2})$, where $h_\ell$ and $d_\ell$ are the bandwidth and dimension for estimating the $\ell$-th marginal distribution. This is of larger order than $O_p(n^{-1/2})$ no matter the choice of $h_\ell$. Some preliminary considerations suggest that the error may still be negligible when higher-order properties of the marginal estimators are taken into account. A complete analysis would involve fairly complicated third-order U-statistics. We shall not pursue this any further and refer to the simulations in the next section for an empirical evaluation.

	\section{Simulation study} \label{sec:sims}

In the following section, we present a simulation study to illustrate the performance of the proposed estimators. We compare the non-adaptive OPG estimator with a single iteration, the adaptive OPG estimator, and a parametric estimator. The simulation study is based on the following setup.

\subsection{Setup}

The goal of our study is to identify situations where sufficient dimension can be expected to work. To achieve this, we use a relatively simple simulation model for the covariates and conditional dependence and investigate the effect of its hyperparameters.

\subsubsection{Simulation model}

\begin{figure}
  \centering
  \includegraphics[width=\textwidth]{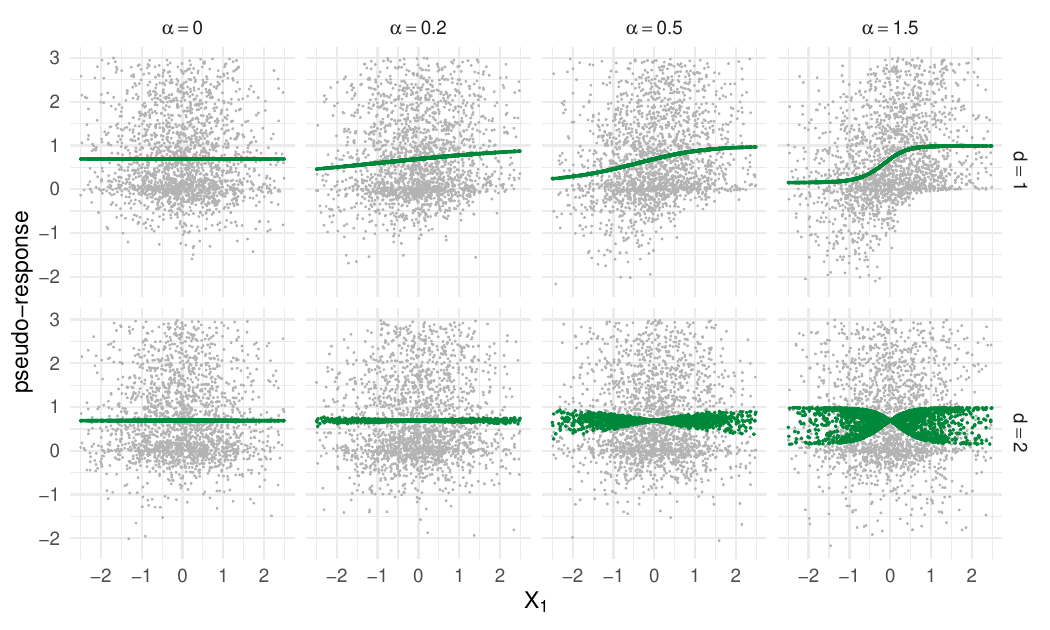}
  \caption{Scatter plot of the active covariate $X_1$ against the pseudo-response for Spearman's $\rho$ under known margins (gray) and the conditional Spearman's $\rho$ (green).}
  \label{fig:links} 
\end{figure} 

In particular, we draw $\bm X \sim \mathcal N(0, \bm \Sigma)$ with $\Sigma = \frac{1}{2} \bm I_{p \times  p} + \frac{1}{2} \bm 1_{p \times p}$ (where $\bm 1_{p \times p}$ denotes the $p\times p$ matrix with all elements equal to one) and define the conditional distribution of $(U_{1, \bm X}, U_{2, \bm X})$ by a parametric copula (Gaussian and Clayton) with conditional Kendall's $\tau$ given $\bm X = \bm x$ given as
\begin{align*}
  \tau(\bx) & = \frac{1}{2} + \frac{2}{5} \prod_{j = 1}^d \tanh(\alpha x_j).
\end{align*}
Here, $\alpha \in \R$ is a parameter that controls the strength of the signal, and $\bm e_j$ is the $j$th vector of the standard basis in $\R^p$. 
The central dependence subspaces all coincide and equal $S_{C_{\bm X}} = \mathrm{span}(\bm e_1, \dots, \bm e_d)$. 
The results were found to be relatively robust to changes in the copula family, covariate distribution, and correlation among the covariates in preliminary experiments. 
Lastly, the responses $Y_1, Y_2$ are set to
\begin{align*}
  Y_1 &=    \frac{1}{5}  X_4^2 + \frac{1}{5} X_5^2 + \Phi^{-1}(U_{1, \bm X}), \qquad Y_2 = - X_2 -  \frac{1}{5}  X_4^2 +  \Phi^{-1}(U_{2, \bm X}).
\end{align*}
The central marginal subspaces are  $\Scal_{Y_1 \mid \bm X} = \mathrm{span}(\bm e_4, \bm e_5)$ and $ \Scal_{Y_2 \mid \bm X} = \mathrm{span}(\bm e_2, \bm e_4)$. Both are different from the central copula subspace.

To get a sense of the difficulty of the problem, we plot pseudo-response for Spearman's $\rho$, i.e., $g(U_{1, \bm X}, U_{2, \bm X}) = 12 (U_{1, \bm X}- \frac 1 2)(U_{2, \bm X} - \frac{1}{2})$ as a function of $X_1$ in \Cref{fig:links} (gray points). The conditional Spearman's rho is shown as green points. The left panel ($\alpha = 0$) correspond to no effect; this is how the data looks in directions orthogonal to the central copula subspace. The other panels show increasing an effect of the covariate on the conditional dependence from left to right. 
In all upper panels, we observe that the signal-to-noise ratio is relatively low --- even for the strongest signal ($\alpha = 1.5$), where $\rho(X_1)$ changes rapidly from almost 0 to 1 in the center of the covariate distribution.
In the lower panel ($d = 2$), the conditional Spearman's $\rho$ is also modulated by $X_2$ which further obscures the relationship between $X_1$ and the pseudo-response.
The signal-to-noise ratio is further decreased when the margins are estimated.
This is in stark contrast to the well-behaved settings that methods for estimating the central mean subspace are usually evaluated in, where the signal-to-noise ratio is often very high. This is an inherent difficulty of estimating conditional dependence, because a single observation carries much less information.

\subsubsection{Estimators} \label{sec:sims:methods}

As explained in \Cref{subsec:differences}, the central copula subspace and central $\eta$ subspaces coincide in most cases of practical interest. We therefore focus on the 
performance of three methods for estimating the central $\eta$ subspace:
\begin{itemize}
  \item {\ttfamily par}: parametric estimator in a correctly specified model; numerical optimization starting at the true value of $\bm B$.
  \item {\ttfamily OPG1}: The OPG estimator for the central $\eta$ subspace after a single iteration.
  \item {\ttfamily OPGA}: The adaptive OPG estimator for the central $\eta$ subspace.
\end{itemize}

The parametric estimator is included as a (practically infeasible) baseline to assess the limits of what can be achieved in a given setting. 
The OPG estimator is implemented with the hyperparameters
\begin{align*}
  h_0 = n^{-1/(6 + p)}, \quad \rho = n^{-1/(12 + 2p)}, \quad h_\infty = n^{-1/(4 + d)},
\end{align*}
in line with the recommendations of \citet{xia2007}. Additionally, we use the stabilizing transformation
\begin{align*}
  s(\bm \Lambda) = (s_1, \dots, s_1, s_{d + 1}, \dots, s_p), \quad \text{where} \quad s_j = \Lambda_{jj}\left(\frac 1 2 + \frac{1}{2 \sum_{j = 1}^p \Lambda_{j, j}} \right),
\end{align*}
which dramatically improved the convergence behavior, especially in small-sample and weak signal settings.

In addition, we consider three settings for the conditional margins $F_{Y_j \mid \bm X}$:
\begin{itemize}
\item {\ttfamily known}: The conditional margins are known.
\item {\ttfamily parametric}: The conditional margins estimated by maximum likelihood in the correctly specified parametric model.
\item {\ttfamily nonparametric}: The conditional margins estimated by first estimating the marginal central mean subspace using OPG and then using a local linear estimator with kernel smoothing for $Y_j$, i.e.,
\begin{align*}
  \wh F_{Y_j \mid \bm X}(y \mid \bm x)  &= \argmin_{\alpha} \min_{\bm \beta} \sum_{i = 1}^n \biggl\{ \int_{\infty}^y K_b(s - Y_{i, j}) ds - \alpha - \bm \beta^\top \xix \biggr\}^2 K_h({\bm B}^\top \xix).
  \end{align*}
  The bandwidths for the OPG part are as above and the bandwidths for the conditional marginal distributions are set to $h = n^{-1/4}$ and $b = n^{-1/2}$. This strong under-smoothing is suggested by an informal analysis of the asymptotic properties under estimated margins and seen to work well in practice.
\end{itemize}

\subsubsection{Performance measure}

Performance is measured by the metric
$$d(\widehat{\bm B} , \bm B_0 ) = \| \widehat{\bm B} \widehat{\bm B}^\top - \bm B_0 \bm B_0 ^\top \|_2,$$
where $\bm B_0 = (\bm e_1, \dots, \bm e_d)$ and $\wh {\bm B}$ is an orthonormal basis of the estimated central $\eta$-subspace. The metric takes into account the non-identifiability of the basis matrices by comparing projectors onto the corresponding subspaces.

\subsubsection{Scenarios}

We consider the following scenarios:
\begin{itemize}
  \item \emph{Sample size}: $n = 150, 400, 1000, 2500$.
  \item \emph{Covariate dimension}: $p = 5, 10, 20$.
  \item \emph{Subspace dimension}: $d = 1, 2$.
  \item \emph{Signal strength}: $\alpha = 0.2, 0.5, 1.5$.
\end{itemize}
Results are based on 100 replications for each scenario.

\begin{figure}[hp]
  \centering
  \includegraphics[width=\textwidth]{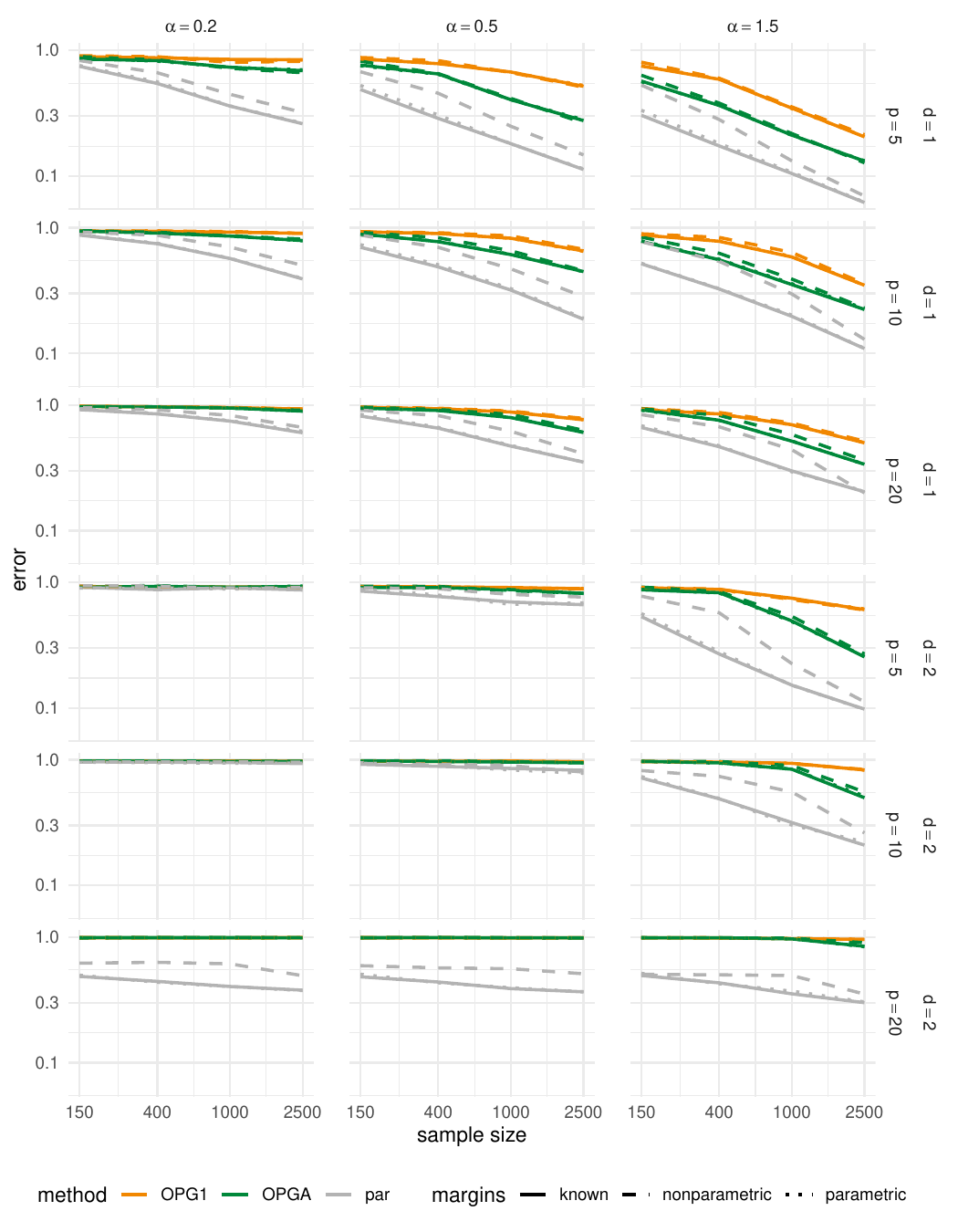}
  \caption{Average error in estimating the central $\rho$ subspace for different methods, sample sizes, signal strengths ($\alpha$), covariate dimensions ($p$), and subspace dimensions ($d$). The conditional dependence is specified as a Gaussian copula. Both axes are on log-scale such that a linear trend with slope $-1/2$ corresponds to $\sqrt{n}$-convergence.}
  \label{fig:OPG-sim}
\end{figure}

\begin{figure}[hp] 
  \centering
  \includegraphics[width=0.89\textwidth]{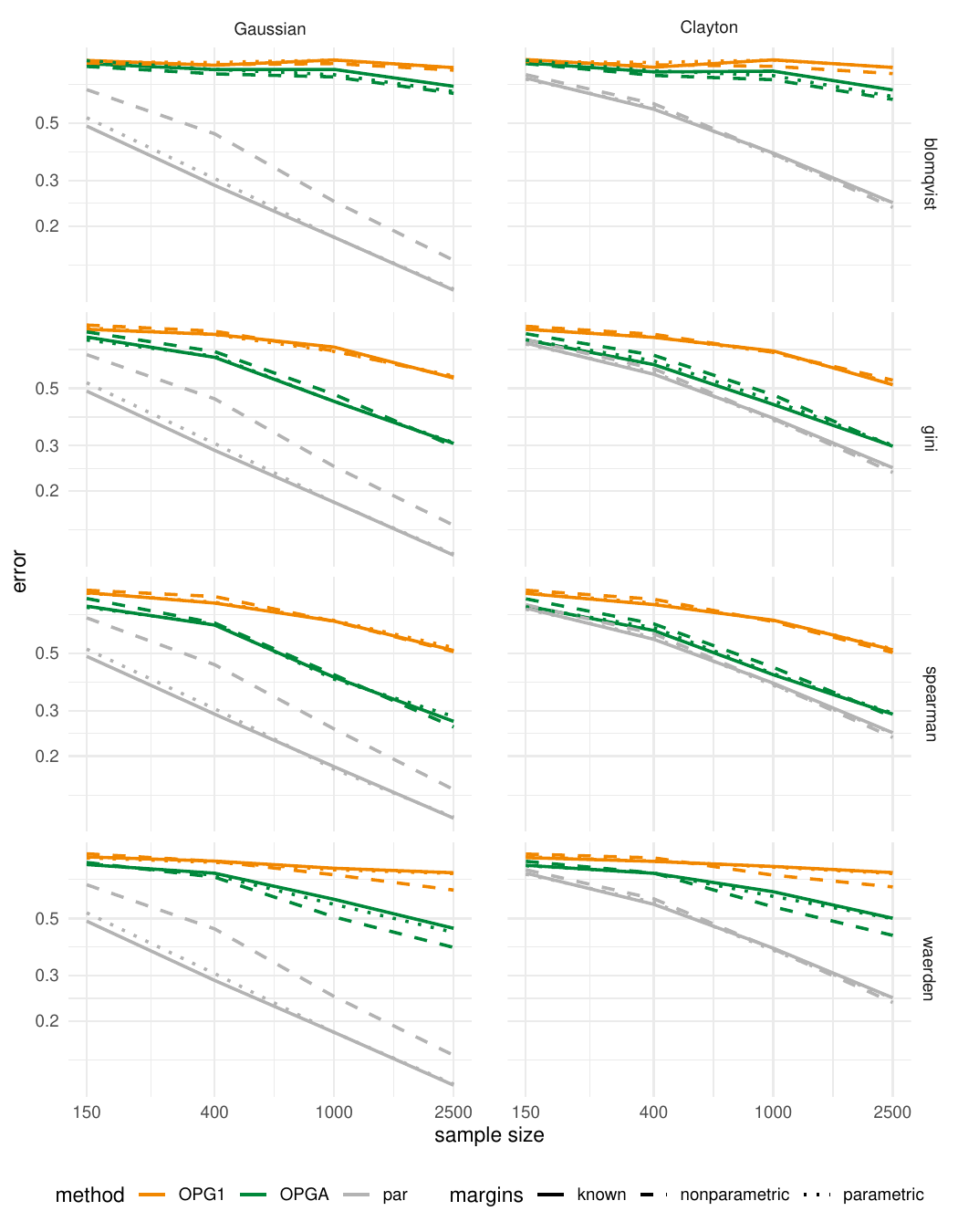}
  \caption{Average error in estimating the central subspaces for different methods, sample sizes, dependence measures, and family of the conditional copula; here, $\alpha = 0.5$, $p = 5$, and $d = 1$. Both axes are on log-scale such that a linear trend with slope $-1/2$ corresponds to $\sqrt{n}$-convergence.}
  \label{fig:OPG-sim-measure-family}
\end{figure}

\subsection{Results}

The results for estimating the central $\rho$ subspace under conditionally Gaussian dependence are shown in \Cref{fig:OPG-sim}. In each panel, the sample size is shown on the $x$-axis and the average error in estimating the central copula subspace is shown on the $y$-axis. Both are on log-scale so that a linear trend with slope $-1/2$ corresponds to $\sqrt{n}$-convergence. The different panels correspond to different signal strengths ($\alpha$), covariate dimensions ($p$), and subspace dimensions ($d$).  A few observations can be made:
\begin{itemize}
  \item As expected, the error is increasing in $d$ and $p$ and decreasing in $\alpha$.
  \item The adaptive OPG method always improves on the non-adaptive OPG method. 
  \item Some low-signal settings are so difficult that not even the parametric estimator in the correctly specified model gives reasonable results for small sample sizes or larger $d$ and  $p$. (The better performance of the parametric estimator for $p = 20$ and $d = 2$ is an artifact of early termination of the likelihood optimization which leaves the final estimates closer to the initial values).
  \item The estimators under known and parametric margins perform essentially equally well. The nonparametric margins are slightly worse, but the difference is rather small.
  \item The error of the adaptive method  appears to indeed achieve $\sqrt{n}$-convergence, at least in high-signal, large $n$ settings. The sample size where the $\sqrt{n}$-convergence starts to kick in is larger for larger $d$ and $p$.
\end{itemize}

\Cref{fig:OPG-sim-measure-family} provides additional results when varying the conditional dependence measure used for estimation and the family of the conditional copula in the case $\alpha = 0.5$, $p = 5$, and $d = 1$. Other configurations of dimension and signal strength are not shown, but lead to the same qualitative conclusions. We observe that the choice of copula family hardly affects the performance of OPG methods, but the parametric estimator is worse when the copula family is Clayton. Regarding the dependence measures, the performance is best for Spearman's $\rho$ and Gini's $\gamma$.
Estimation based on Blomqvist's $\beta$ performs much worse. This is likely due to the fact that Blomqvist's $\beta$ binarizes the data, leading to a loss in information. The van der Waerden's coefficient also performs slightly worse than the two best measures, potentially to the slightly heavier tails of the pseudo-response (sub-exponential instead of bounded).

\FloatBarrier

	\begin{appendices}
	
\newcommand{\oBo}{\bm B_0}

\section{Technical proofs}
\label{sec:Appendix}

\subsection{Preliminaries}

Let $N(\epsilon, \Fcal, L_1(P))$ be the minimal number of $L_1(P)$-balls of size
$\epsilon$ required to cover a class of functions $\Fcal$. A function $F$ is
called envelope of $\Fcal$ if $\sup_{f \in \Fcal} |f(\bm z)| \le F(\bm z)$ point-wise. For an
arbitrary probability measure $Q$ and function $F$, denote  $\|f\|_Q = \int
  |f(\bm z)| dQ(\bm z)$.
\begin{Definition}
  A class of functions $\Fcal$ is called Euclidean with respect to an envelope $F$ if there are constants $A, V < \infty$ such that $\sup_{Q} N(\epsilon\|F\|_Q, \Fcal, L_1(Q)) \le A\epsilon^{-V}$, where the supremum is over all probability measures with $\|F\|_Q > 0$.
\end{Definition}
Any class with a finite number of elements is trivially Euclidean with $V = 0$, but the classes can be much richer.
A nice property of Euclidean classes is that new classes generates from sums and products of functions from Euclidean classes are also Euclidean \citep[Lemma 2.14]{pakes1989simulation}.
The following technical lemma will be useful later on.
\begin{Lemma} \label{lem:euclidean}
  Let $r_1, \dots, r_{p} \in \N$,
  $\mathcal D$ be a bounded subset of $\R^{p}$, and $K$ satisfy Assumption \ref{ass:K}.
  For $\bs \in \mathcal D$, define
  \begin{align*}
    g_{ \bx, \bm B, h}(\bs) = \prod_{i = 1}^{p} (s_i - x_i)^{r_i}  \times \prod_{i = 1}^{p} K(\bm B_{i}^\top(\bm s - \bm x)/h),
  \end{align*}
  where $\bm B_i$ denotes the $i$-th column of $\bm B \in \R^{p \times p}$. Then the class of functions
  $ 
    \Fcal = \left\{ g_{ \bx, \bm B, h}\colon \bx \in \mathcal D, \bm B \in \R^{p \times p}, h > 0 \right\}
  $ 
  is Euclidean with bounded envelope.
\end{Lemma}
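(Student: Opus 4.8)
The plan is to realize every $g_{\bx,\bm B,h}$ as a product of two factors, a polynomial factor $\prod_{i=1}^p(s_i-x_i)^{r_i}$ and a kernel factor $\prod_{i=1}^p K(\bm B_i^\top(\bs-\bx)/h)$, show that each of the two corresponding function classes is Euclidean with a bounded envelope, and then recombine them. Since $\mathcal D$ is bounded and $K$ is bounded (Assumption~\ref{ass:K}), both factors are uniformly bounded over the index set, so a \emph{constant} envelope is available throughout and the only genuine content is the polynomial covering bound. I would use repeatedly that (a) a subclass of a Euclidean class is Euclidean with the inherited envelope, since the covering numbers of a subset are no larger (up to a harmless factor of two for internal covers), and (b) finite products and sums of Euclidean classes are Euclidean, with envelope the product resp.\ sum of the envelopes \citep[Lemma 2.14]{pakes1989simulation}. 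In particular $\Fcal$ is contained in the product of the decoupled polynomial class $\Pcal$ and kernel class $\Qcal$, so it suffices to treat $\Pcal$ and $\Qcal$ separately.

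For the polynomial class, expand $\prod_{i=1}^p(s_i-x_i)^{r_i}$ in the monomial basis: for each $\bx\in\mathcal D$ it is a linear combination of the monomials $\bs\mapsto\prod_{i=1}^p s_i^{k_i}$ with $0\le k_i\le r_i$. Hence $\Pcal$ lies in the finite-dimensional vector space spanned by these $\prod_{i=1}^p(r_i+1)$ monomials. A finite-dimensional vector space of real functions is a VC-subgraph class, and a VC-subgraph class with a bounded envelope is Euclidean; since all these polynomials are uniformly bounded on the bounded set $\mathcal D$, $\Pcal$ is Euclidean with a constant envelope.

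The kernel factors require the main argument. Each factor $\bs\mapsto K(\bm B_i^\top(\bs-\bx)/h)$ has the form $\bs\mapsto K(\bm a^\top\bs+c)$ with $\bm a=\bm B_i/h\in\R^p$ and $c=-\bm B_i^\top\bx/h\in\R$; letting $(\bm a,c)$ range freely over $\R^p\times\R$ only enlarges the class, so it is enough to show that $\Kcal=\{\bs\mapsto K(\bm a^\top\bs+c):\bm a\in\R^p,\ c\in\R\}$ is Euclidean. By Assumption~\ref{ass:K}, $K$ is Lipschitz with support $[-1,1]$, hence continuous, of bounded total variation, and expressible as $K=K_\uparrow-K_\downarrow$ with $K_\uparrow,K_\downarrow$ bounded and nondecreasing. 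The affine maps $\bs\mapsto\bm a^\top\bs+c$ form a finite-dimensional, hence VC-subgraph, class, and composing a VC-subgraph class with a fixed monotone function preserves the VC-subgraph property; concretely, the relevant subgraph sets reduce to halfspaces $\{\bs:\bm a^\top\bs+c-t\ge 0\}$, which form a VC class of subsets of $\R^p$. Thus $\{K_\uparrow(\bm a^\top\bs+c)\}$ and $\{K_\downarrow(\bm a^\top\bs+c)\}$ are each VC-subgraph with bounded envelope, hence Euclidean, and their difference $\Kcal$ is Euclidean by the sum rule in \citet[Lemma 2.14]{pakes1989simulation}. Taking the $p$-fold product over $i$ (again Lemma 2.14) shows $\Qcal$ is Euclidean with bounded envelope $\|K\|_\infty^{\,p}$. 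One further application of the product rule to $\Pcal$ and $\Qcal$, followed by passage to the subclass $\Fcal$, yields the claim. I expect this kernel step to be the main obstacle: it is the only point where Assumption~\ref{ass:K} enters, through bounded variation and compact support, together with the VC geometry of halfspaces, whereas the rest is bookkeeping with the product and sum stability of Euclidean classes.
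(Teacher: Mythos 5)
Your proof is correct, and its overall skeleton is the same as the paper's: factor each $g_{\bx,\bm B,h}$ into a polynomial part and a kernel part, show each factor class is Euclidean, and recombine via the product (and sum) stability of Euclidean classes \citep[Lemma 2.14]{pakes1989simulation}, with the bounded envelope coming from uniform boundedness of all factors. The difference lies in how you establish the two building blocks. The paper handles the polynomial part by citing Lemmas 2.4 and 2.12 of \citet{pakes1989simulation} for the translation classes $\{s_k \mapsto s_k - x_k\}$ and then multiplying, whereas you expand $\prod_{i}(s_i-x_i)^{r_i}$ in the monomial basis and invoke the fact that a finite-dimensional vector space of functions is VC-subgraph, hence Euclidean once a bounded envelope is available; both are valid, and yours is arguably more direct. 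For the kernel part, the paper simply cites Lemma 22 of \citet{nolan1987u}, while you essentially reprove that lemma: write $K$ (Lipschitz with compact support, by \ref{ass:K}, hence of bounded variation) as a difference of bounded monotone functions, compose with the finite-dimensional class of affine maps, and use the VC geometry of the resulting superlevel sets. Strictly speaking these sets are not halfspaces in $\R^p$ but superlevel sets of the finite-dimensional class $(\bs,t)\mapsto \bm a^\top\bs + c - \psi(t)$ with $\psi$ a fixed monotone transform of the threshold; this is the standard argument and closes correctly, so it is a presentational imprecision rather than a gap. What your route buys is a self-contained derivation that makes explicit exactly where Assumption \ref{ass:K} enters (compact support and Lipschitz continuity give bounded variation); what the paper's route buys is brevity by delegating the hardest step to an established result whose proof is the one you wrote out.
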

\begin{proof}
  The class of functions $\{s_k \mapsto s_k - x_k \colon x_k \in \mathcal X_k\}$ is
  Euclidean by Lemmas 2.4 and 2.12 of \citet{pakes1989simulation}. Furthermore,
  the class
  $$\{\bm s \mapsto K(\bm B_i(\bs - \bx)/h) \colon \bm B_i \in \R^{p}, \bx \in  \R^{p}, h > 0 \}$$
  is Euclidean by Lemma 22 of \citet{nolan1987u}. The
  elements of the class $\Fcal$ are products of such functions and, hence, the class is also Euclidean
  by Lemma 2.14 of \citet{pakes1989simulation}. Since all terms in the product are uniformly bounded, we have a bounded envelope.
\end{proof}

Euclidean classes are small enough for uniform laws of large numbers to hold. We shall repeatedly use the following result, which also gives a rate of convergence. It is a simplifed version of Theorem II.37 of \citet{pollard2012convergence}.
\begin{Proposition} \label{prop:ep}
  For each $n \in \N$, let $\Fcal_n$ be a Euclidean class of functions with bounded envelope $F_n$ and constants not depending on $n$.
  It holds that
  \begin{align*}
    \sup_{f \in \Fcal_n} \left| \frac{1}{n} \sum_{i = 1}^n f(\bm Z_i) - \E[f(\bm Z)] \right| = O_p\left( \sqrt{\frac{\sigma_n^2 \ln n}{n}}\right),
  \end{align*}
  where $\sigma_n^2 =  \max\left\{\ln n /n, \sup_{f \in \Fcal_n} \E\{f^2(\bm Z)\}\right\}.$
\end{Proposition}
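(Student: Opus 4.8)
The plan is to obtain \autoref{prop:ep} as a specialization of the maximal inequality in Theorem II.37 of \citet{pollard2012convergence}: I would verify that its hypotheses are met by the Euclidean assumption and then simplify the resulting bound. To make the structure transparent I sketch the underlying empirical-process argument as well. Write $\nu_n f = n^{-1}\sum_{i=1}^n \{f(\bm Z_i) - \E f(\bm Z)\}$ for the centered empirical process indexed by $\Fcal_n$. The first step is symmetrization: by the standard symmetrization inequality, $\E \sup_{f \in \Fcal_n} |\nu_n f|$ is bounded up to a factor $2$ by the Rademacher average $\E \sup_{f \in \Fcal_n} | n^{-1}\sum_i \epsilon_i f(\bm Z_i)|$ with independent signs $\epsilon_i$. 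This replaces the problem by a process that, conditionally on $(\bm Z_i)$, is sub-Gaussian for the random pseudometric $d_n(f, g)^2 = n^{-1}\sum_i \{f(\bm Z_i) - g(\bm Z_i)\}^2$.

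Next I would run a Dudley chaining bound for this conditional sub-Gaussian process. The Euclidean hypothesis gives, uniformly over probability measures $Q$, the polynomial covering bound $N(\epsilon \|F_n\|_Q, \Fcal_n, L_1(Q)) \le A \epsilon^{-V}$ with $A, V$ not depending on $n$; since the envelope $F_n$ is uniformly bounded, the elementary inequality $\int |f-g|^2 \, dQ \le 2\|F_n\|_\infty \int |f-g| \, dQ$ transfers this to an $L_2(Q)$ entropy bound of the form $\log N(\epsilon, \Fcal_n, L_2(Q)) \lesssim V \log(C\|F_n\|_Q / \epsilon)$, which I apply with $Q$ the empirical measure. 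The key localization is that the maximal second moment over $\Fcal_n$ is at most $\sigma_n^2$, so the chaining integral only needs to run up to radius of order $\sigma_n$, yielding $\int_0^{\sigma_n}\sqrt{\log N(\epsilon, \Fcal_n, L_2(Q))}\, d\epsilon \lesssim \sigma_n \sqrt{V \log(\|F_n\|/\sigma_n)}$. Because $\sigma_n^2 \ge \ln n / n$ and $F_n$ is uniformly bounded, $\log(\|F_n\|/\sigma_n) = O(\ln n)$, so this entropy integral is of order $\sigma_n \sqrt{\ln n}$.

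Assembling, the chaining bound for the Rademacher average is of order $n^{-1/2}\sigma_n\sqrt{\ln n}$ plus a higher-order chaining remainder of order $\|F_n\| V \ln n / n$; the lower bound $\sigma_n^2 \ge \ln n / n$ together with boundedness of $F_n$ makes the remainder negligible relative to $\sqrt{\sigma_n^2 \ln n / n}$. This gives $\E \sup_{f} |\nu_n f| = O(\sqrt{\sigma_n^2 \ln n / n})$ with constants independent of $n$, and the $O_p$ statement then follows by Markov's inequality. The main obstacle is the bookkeeping around the localization: one must correctly propagate the maximal-variance radius $\sigma_n$ through the chaining integral and check that the $\ln n$ factor is precisely what the maximum in the definition of $\sigma_n^2$ produces, while keeping every constant uniform in $n$ so that the per-$n$ expectation bounds combine into a single $O_p$ statement for the whole sequence of classes $\Fcal_n$. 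The conversion of the $L_1$ Euclidean property into a usable $L_2$ metric entropy, which relies on the bounded envelope, is the other point that requires care.
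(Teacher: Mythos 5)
Your proposal takes exactly the paper's route: the paper gives no self-contained proof of this proposition, stating only that it is a simplified version of Theorem II.37 of Pollard (2012), which is precisely your opening step of specializing that theorem under the Euclidean hypothesis and simplifying the bound. Your supplementary chaining sketch of what lies inside Pollard's theorem is essentially sound (the one glossed point: the Dudley integral is taken with respect to the random empirical $L_2$ metric, so the empirical radius $\sup_{f \in \Fcal_n} n^{-1}\sum_{i=1}^n f(\bm Z_i)^2$ must itself be compared to $\sigma_n^2$, a step Pollard handles with a separate argument), but since the citation alone carries the proof, this does not affect correctness.
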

In everything that follows, we let $\bm B \in \R^{p \times p}, \bm B_0  \in \R^{p \times d}$ and define
$ 
  \delta_{\bm B} = \|\bm B - (\bm B_0, \bm 0_{p \times (p - d)})  \|,
$ 
where $\| \cdot \|$ denotes the operator norm.
The next two lemmas are concerned with kernel smoothing with approximately low-rank bandwidth matrices. For a $d$-dimensional vector $\bm y=(y_1, \dots, y_d)$ denote $\| \bm y \|_{\infty} = \max(|y_1, \dots, |y_d|)$.

\begin{Lemma} \label{lem:kernel-lip}
  Suppose $K$ satisfies \ref{ass:K} and $\Dcal$ is a set with $A = 2\sup_{\bx \in \Dcal}\|\bx\|_2 < \infty$.
  It holds that
  \begin{align*}
    |K({\bm B}^\top\xix/h) - K({\bm B}_0^\top\xix/h)  K(0)^{p - d} | = \ind_{\|\bm B_0^\top \xix\|_\infty \le h + A \delta_{\bm B}} \times O(\delta_{\bm B}/h),
  \end{align*}
  uniformly in $\bx \in \Dcal$ and $\bm B \in \R^{p \times p}$.
\end{Lemma}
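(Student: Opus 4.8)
The plan is to view the claim as a product-Lipschitz estimate for the $p$-fold product kernel, combined with a support argument that converts the compact support of $K$ into the indicator factor. First I would fix notation for the scalar arguments: writing $\bm B_k$ for the $k$-th column of $\bm B$, set $a_k = \bm B_k^\top \xix/h$ and let $b_k = (\bm B_0)_k^\top \xix/h$ for $k \le d$ and $b_k = 0$ for $k > d$. Since $K$ denotes the product kernel and $K(b_k)=K(0)$ for $k>d$, we have $K(\bm B^\top\xix/h) = \prod_{k=1}^p K(a_k)$ and $K(\bm B_0^\top\xix/h)\,K(0)^{p-d} = \prod_{k=1}^p K(b_k)$, so that the target reduces to bounding $|\prod_{k} K(a_k) - \prod_{k} K(b_k)|$.

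The first ingredient is a uniform per-coordinate perturbation bound. The $k$-th column of $\bm B - (\bm B_0, \bm 0_{p\times(p-d)})$ has Euclidean norm at most the operator norm $\delta_{\bm B}$, and whenever $\bm X_i, \bx \in \Dcal$ one has $\|\xix\|_2 \le \|\bm X_i\|_2 + \|\bx\|_2 \le A$; Cauchy--Schwarz then gives $|a_k - b_k| \le \delta_{\bm B}\|\xix\|_2/h \le A\delta_{\bm B}/h$, uniformly in $k$, $\bx$ and $\bm B$. Because $K$ is bounded and Lipschitz by \ref{ass:K}, the elementary telescoping identity for products yields $|\prod_k K(a_k) - \prod_k K(b_k)| \le \|K\|_\infty^{p-1}\sum_{k=1}^p|K(a_k)-K(b_k)| = O(\delta_{\bm B}/h)$, with a constant depending only on $p$, the Lipschitz constant of $K$, $\|K\|_\infty$ and the fixed quantity $A$. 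This already produces the stated order on the event where the difference can be nonzero.

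The step I expect to be the crux is showing that the difference vanishes off the event in the indicator, so that the bound genuinely carries the factor $\ind_{\|\bm B_0^\top\xix\|_\infty \le h + A\delta_{\bm B}}$. Here I would argue by the compact support of $K$: if $\|\bm B_0^\top\xix\|_\infty > h + A\delta_{\bm B}$, then some index $k^* \le d$ satisfies $|b_{k^*}| > 1 + A\delta_{\bm B}/h > 1$, so $K(b_{k^*}) = 0$ and $\prod_k K(b_k)=0$; and since $|a_{k^*} - b_{k^*}| \le A\delta_{\bm B}/h$ by the per-coordinate bound, also $|a_{k^*}| > 1$, so $K(a_{k^*})=0$ and $\prod_k K(a_k) = 0$ as well. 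Thus both products vanish simultaneously and the difference is exactly zero outside the event, while inside it is $O(\delta_{\bm B}/h)$ by the previous paragraph; this is precisely the claimed bound. The only point requiring care is to track the threshold $h + A\delta_{\bm B}$ exactly, so that the perturbation $A\delta_{\bm B}/h$ can never push a coordinate with $|b_{k^*}|>1+A\delta_{\bm B}/h$ back inside the support of $K$; the telescoping and Cauchy--Schwarz steps are otherwise routine, and the estimate is applied on realizations with $\bm X_i \in \Dcal$ so that $\|\xix\|_2 \le A$ holds.
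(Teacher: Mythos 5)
Your proposal is correct and follows essentially the same route as the paper's proof: the compact support of $K$ combined with the perturbation bound $\|\bm B_0^\top \xix\|_\infty \le \|\bm B^\top \xix\|_\infty + A\delta_{\bm B}$ shows both terms vanish outside the indicator event, and Lipschitz continuity of $K$ gives the $O(\delta_{\bm B}/h)$ bound inside it. Your coordinate-wise telescoping of the product kernel simply makes explicit the details the paper compresses into ``the Lipschitz property of the kernel implies.''
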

\begin{proof}
  Observe that $K(\bm B^\top \xix / h) = 0$ whenever $\|\bm B^\top \xix\|_\infty > h$.
  Since for $A = 2\sup_{\bm x \in \Dcal_{\bm X}} \|\bx\|_2$, we have
  $ 
    \|\bm B_0^\top \xix\|_\infty \le \|\bm B^\top \xix\|_\infty + A \delta_{\bm B} ,
  $ 
  it holds
  $K(\bm B^\top \xix / h) = K(\bm B_0^\top \xix / h) = 0$ whenever $\|\bm B_0^\top \xix\|_\infty > h + A \delta_{\bm B}.$
  Furthermore, the Lipschitz property of the kernel implies
  \begin{align*}
    |K({\bm B}^\top\xix/h) - K({\bm B}_0^\top\xix/h)  K(0)^{p - d} | = O(\delta_{\bm B}/h). \tag*{\qedhere}
  \end{align*}

\end{proof}

\begin{Lemma} \label{lem:kernel}
  Suppose \ref{ass:K}--\ref{ass:fB} hold, and  $g$ is a bounded function such that
  \begin{align*}
    \alpha(\bm z) = \E\{g(\bm X) \mid \bm B_0^\top \bm X = \bm z\}
  \end{align*}
  is twice continuously differentiable on $\{\bm z = \bm B_0^\top \bm x\colon \bm x \in \mathcal D\}$ for some compact set $\mathcal D$.
  If $h \to 0$ and $\delta_{\bm B}/ h \to 0$, it holds uniformly in $\bm x \in \mathcal D$:
  \begin{align*}
    E[h^{p - d} g(\bm X) K_h(\bm B^\top (\bm X - \bm x))]
                                                             & = K(0)^{p - d}  \alpha(\bm B_0^\top \bm x)  f_{\oBo^\top \bm X}(\oBo^\top \bm x) + O(h^2 + \delta_{\bm B} / h), \\
    \var[h^{p - d} g(\bm X) K_h(\bm B^\top (\bm X - \bm x))] & = O(h^{-d}).
  \end{align*}
\end{Lemma}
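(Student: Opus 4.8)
The plan is to handle the two claims separately, in both cases reducing the full $p$-dimensional product kernel to an effective $d$-dimensional one by means of \autoref{lem:kernel-lip}. Throughout I expand $K_h(\bm B^\top(\bm X - \bm x)) = h^{-p}\prod_{k=1}^p K(\bm B_k^\top(\bm X - \bm x)/h)$, so that the prefactor cancels to give $h^{p-d}K_h(\bm B^\top(\bm X - \bm x)) = h^{-d}\prod_{k=1}^p K(\bm B_k^\top(\bm X - \bm x)/h)$, where $\bm B_k$ is the $k$-th column of $\bm B$.

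For the expectation, I would first invoke \autoref{lem:kernel-lip} to replace $\prod_{k=1}^p K(\bm B_k^\top(\bm X-\bm x)/h)$ by $K(0)^{p-d}\prod_{k=1}^d K((\bm B_0)_k^\top(\bm X-\bm x)/h)$, at the cost of an error that is pointwise $O(\delta_{\bm B}/h)$ and supported on the event $\{\|\bm B_0^\top(\bm X-\bm x)\|_\infty \le h + A\delta_{\bm B}\}$. The leading term is then a standard $d$-dimensional kernel average: conditioning on $\bm B_0^\top\bm X$ via the tower property replaces $g(\bm X)$ by $\alpha(\bm B_0^\top \bm X)$, after which I would integrate against the density $f_{\bm B_0^\top \bm X}$, substitute $\bm u = (\bm z - \bm B_0^\top \bm x)/h$, and Taylor expand the product $\alpha \cdot f_{\bm B_0^\top \bm X}$, which is twice continuously differentiable by hypothesis and by Assumption \ref{ass:fB}. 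Symmetry of $K$ (Assumption \ref{ass:K}) kills the first-order term, leaving $K(0)^{p-d}\alpha(\bm B_0^\top \bm x)f_{\bm B_0^\top \bm X}(\bm B_0^\top \bm x) + O(h^2)$. For the replacement error I would bound $g$ by $\|g\|_\infty$, apply the $O(\delta_{\bm B}/h)$ pointwise estimate, and note that the indicator event has probability $O((h+A\delta_{\bm B})^d) = O(h^d)$, since $\bm B_0^\top \bm X$ has a bounded density (Assumption \ref{ass:fB}) and $\delta_{\bm B} = o(h)$; multiplying by the $h^{-d}$ prefactor yields an error of order $O(\delta_{\bm B}/h)$, which combines with the $O(h^2)$ bias to match the claim.

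For the variance I would bound it by the second moment. Squaring the integrand gives $h^{-2d}\,\E\{g(\bm X)^2\prod_{k=1}^p K(\bm B_k^\top(\bm X-\bm x)/h)^2\}$. Because $g$ and $K$ are bounded and $K$ vanishes outside $[-1,1]$, the integrand is supported on $\{\|\bm B^\top(\bm X-\bm x)\|_\infty \le h\} \subseteq \{\|\bm B_0^\top(\bm X-\bm x)\|_\infty \le h+A\delta_{\bm B}\}$, using the inequality $\|\bm B_0^\top(\bm X-\bm x)\|_\infty \le \|\bm B^\top(\bm X-\bm x)\|_\infty + A\delta_{\bm B}$ already established in the proof of \autoref{lem:kernel-lip}. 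The probability of this event is again $O(h^d)$ by the bounded density, so the second moment is $h^{-2d}\cdot O(h^d) = O(h^{-d})$, as required.

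The main obstacle is the bookkeeping in the expectation: I must ensure that the $O(\delta_{\bm B}/h)$ pointwise kernel error, after amplification by the $h^{-d}$ prefactor, is tamed precisely by the $O(h^d)$ measure of the shrinking support region, and that every estimate is uniform in $\bm x \in \mathcal D$. Uniformity follows because $f_{\bm B_0^\top \bm X}$ is bounded, bounded away from zero, and has bounded derivatives on $\Dcal_{\bm X}$ (Assumption \ref{ass:fB}), while $\alpha$ has uniformly bounded second derivatives on the compact image of $\mathcal D$, so the Taylor remainders are controlled uniformly; the one point requiring care is verifying $h + A\delta_{\bm B} = O(h)$, which holds under the hypothesis $\delta_{\bm B}/h \to 0$.
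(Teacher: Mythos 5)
Your treatment of the expectation term is essentially the paper's own: the same split via \Cref{lem:kernel-lip} into a leading $d$-dimensional kernel term plus an error supported on $\{\|\bm B_0^\top(\bm X - \bm x)\|_\infty \le h + A\delta_{\bm B}\}$, the same tower-property, change-of-variables and Taylor-expansion computation for the leading term, and the same $h^{-d}\cdot O\bigl((h+A\delta_{\bm B})^d\bigr)\cdot O(\delta_{\bm B}/h)$ bookkeeping for the error. Where you genuinely diverge is the variance bound. The paper recycles its own expectation result: it sets $\wt g = g^2$ and $\wt K = K^2/\kappa_2$ with $\kappa_2 = \int K(u)^2du$, checks that $\wt K$ is again a Lipschitz, symmetric probability density on $[-1,1]$, and applies the first part of the lemma to conclude $\E[h^{p-d}\wt g(\bm X)\wt K_h(\bm B^\top(\bm X - \bm x))] = O(1)$, hence $O(h^{-d})$ after extracting the factor $h^{-d}\kappa_2^{p}$. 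You instead bound the second moment directly: boundedness of $g$ and $K$ plus the support inclusion $\{\|\bm B^\top(\bm X-\bm x)\|_\infty \le h\} \subseteq \{\|\bm B_0^\top(\bm X-\bm x)\|_\infty \le h + A\delta_{\bm B}\}$ give $h^{-2d}\cdot O(h^d) = O(h^{-d})$. Your route is more elementary and in fact slightly more robust: the paper's self-application of part one to $\wt g$ tacitly requires that $\bm z \mapsto \E\{g^2(\bm X)\mid \bm B_0^\top\bm X = \bm z\}$ be twice continuously differentiable, which does not follow from the stated hypothesis on $\alpha$ (the paper asserts but does not verify this), whereas your variance argument needs no smoothness at all, only boundedness and the $O(h^d)$ probability of the kernel's support under the bounded density of $\bm B_0^\top\bm X$. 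Both arguments share the same implicit caveat inherited from \Cref{lem:kernel-lip}: the inequality $\|\bm B_0^\top(\bm X - \bm x)\|_\infty \le \|\bm B^\top(\bm X-\bm x)\|_\infty + A\delta_{\bm B}$ uses $\|\bm X - \bm x\|_2 \le A$, so on that point you are on exactly the same footing as the paper.
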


\begin{proof}
  By \Cref{lem:kernel-lip}, we have
  \begin{align*}
    K(\bm B^\top (\bm X - \bm x)/h) & = K(\bm B_{0}^\top (\bm X - \bm x)/h) K(0)^{p - d} + \ind_{\|\bm B_0^\top \Xx\|_\infty \le h + A \delta_{\bm B}} \times O(\delta_{\bm B}/h),
  \end{align*}
  and, thus,
  \begin{align*}
     & \quad \, \E[h^{p - d} g(\bm X) K_h(\bm B^\top (\bm X - \bm x))]                                                                                                               \\
     & =  h^{-d}  \E[ g(\bm X) K(\bm B_0^\top (\bm X - \bm x) / h)]  + h^{-d}  \E[ g(\bm X) \ind_{\|\bm B_0^\top \Xx \|_\infty \le h + A \delta_{\bm B}}] \times O(\delta_{\bm B}/h) \\
     & := E_1 + E_2.
  \end{align*}
  The law of iterated expectations (first equality) and a change of variables (third equality) yield
  \begin{align*}
    E_1
     & = h^{-d} K(0)^{p - d}   \E[ \alpha(\oBo^\top \bm X)  K(\oBo^\top (\bm X - \bm x) / h)]                                                  \\
     & =   h^{-d} K(0)^{p - d}  \int \alpha(\bm s) K((\bm s - {\bm B}_0^\top \bm x) / h) f_{\oBo^\top \bm X}(\bm s) d\bm s                     \\
     & =   K(0)^{p - d}  \int_{[-1, 1]^d} \alpha(\oBo^\top  \bm x - h \bm t) K(\bm t) f_{\oBo^\top \bm X}(\oBo^\top \bm x -  h \bm t) d\bm t .
  \end{align*}
  Expanding $\alpha$ and $f_{\oBo^\top \bm X}$ around $\oBo^\top \bm x$ and noting $\int K(s) ds = 1$ and $\int sK(s) ds = 0$, we obtain
  \begin{align*}
    E_1 = K(0)^{p - d}  \alpha(\oBo^\top \bm x)  f_{\oBo^\top \bm X}(\oBo^\top \bm x) + O(h^2).
  \end{align*}
  Furthermore, since $g$ is bounded and $\delta_{\bm B} = o(h)$, we have
  \begin{align*}
    |E_2| & \le  h^{-d}  |\E[| g(\bm X) |\ind_{\|\bm B_0^\top \Xx\|_\infty \le h + A \delta_{\bm B}}] \times O(\delta_{\bm B}/h)         \\
          & \le   h^{-d} \times O(1) \times  \E[\ind_{\|\bm B_0^\top \Xx \|_\infty \le h + A \delta_{\bm B}}] \times O(\delta_{\bm B}/h) \\
          & =  h^{-d} \times O(1) \times O((h + A \delta_{\bm B})^d) \times O(\delta_{\bm B}/h)                                          \\
          & = O(\delta_{\bm B}/h).
  \end{align*}

  For the variance bound, we apply our result for expectation.
  Specifically, define $\wt g(\bm x) = g(\bm x)^2$ and $\wt K(u) = K(u)^2 / \kappa_2$, where $\kappa_2 = \int K(u)^2 du$. Observe that $\wt g$ satisfies the smoothness requirements of the lemma, and observe that $\wt K$ is also a  bounded, Lipschitz continuous, symmetric probability density with support $[-1, 1]$.
  Noting $\wt K_h(\bm x) = h^{-p} \wt K(\bm x / h)$, we have
  $$K_h(\bm x)^2 = h^{-2p} K(\bm x / h)^2 = h^{-2p} \kappa_2^{p} \wt K(\bm x / h) = h^{-p} \kappa_2^{p} \wt K_h(\bm x ) .$$
  Then
  \begin{eqnarray*}
    \lefteqn{
      \var[h^{p - d} g(\bm X) K_h(\bm B^\top (\bm X - \bm x))]
      \le  \E[h^{2(p - d)} g(\bm X)^2 K_h(\bm B^\top (\bm X - \bm x))^2]} && \\
    & = & \kappa_2^{p} \E[h^{p - 2d} \wt g(\bm X) \wt K_h(\bm B^\top (\bm X - \bm x))]
    = h^{-d}  \kappa_2^{p} \E[h^{p - d} \wt g(\bm X) \wt K_h(\bm B^\top (\bm X - \bm x))] \\
    & = & h^{-d} \times O(1)                                            = O(h^{-d}),
  \end{eqnarray*}
  by the first part of the lemma.
\end{proof}

\subsection{Uniform  approximation of the local linear estimator}

Let $\bm A^+$ denote the Moore-Penrose inverse of a matrix $\bm A$ and define $r_{n, d} = (nh^d / \ln n)^{-1/2}$.
In this section, we prove the following result.

\begin{Theorem} \label{thm:ll_refined}
  Suppose that $d < p$, $\Gcal$ and $\Hcal = \bigl\{\bm z \mapsto \E\{g(\bm Y) \mid \bm Z = \bm z \}\colon g \in \Gcal \bigr\}$
  are Euclidean classes and \ref{ass:K}--\ref{ass:mg} hold.
  Then uniformly in $g \in \Gcal, \bx \in \Dcal_{\bm X}$, and all $ \bm B \in \mathcal \R^{p \times p}$ with $\delta_{\bm B} / h \to 0$, it holds
  \begin{align*}
    \biggl\vert
    \begin{pmatrix}
      \wh m_g(\bm x) \\ \wh \nabla m_g(\bm x)
    \end{pmatrix} -
    \begin{pmatrix}
      m_g(\bm x) \\  \nabla m_g(\bm x)
    \end{pmatrix}  - {\bm v}_{n, g}^*(\bm x) \biggr\vert
     & = \begin{pmatrix}
           O(h^2 ) \\
           O(h^4)
         \end{pmatrix}
    + O_p( h^2r_{n, d} + h\delta_{\bm B} + r_{n, d} \delta_{\bm B} / h),
  \end{align*}
  where
  \begin{align*}
     & \quad \, {\bm v}_{n, g}^*(\bm x) \\
     & =
    \frac 1 {n f_{\bm B_0^\top \bm X}(\bm B_0^\top \bm x)}\sum_{i = 1}^n  \begin{pmatrix}  \epsilon_{g}(\bm Y_i, \bm X_i)\bigl[1 + \{\bm \mu(\bm x) - \bm x\}^\top \bm \Gamma^{+}(\bm x)\{\bm \mu(\bm x) - \bm X_i\}\bigr]K_h({\bm B_0}^\top\xix) \\
                                                                            \bm \Gamma^{+}(\bm x) \epsilon_{g}(\bm Y_i, \bm X_i)\{\bm X_i - \bm \mu(\bm x)\}K_h({\bm B_0}^\top\xix)
                                                                          \end{pmatrix},
  \end{align*}
  \begin{align*}
    \epsilon_{g}(\bm Y_i, \bm X_i) & = g(\bm Y_i) - \E\{g(\bm Y_i) \mid \bm X_i\},                                                               \\
    \bm \mu(\bm x)                 & = \E\{\bm X \mid \bm B_0^\top \bm X = \bm B_0^\top \bm x\}, \quad                                           \\
    \bm \Gamma(\bm x)              & = \E\{\bm X \bm X^\top \mid \bm B_0^\top \bm X = \bm B_0^\top \bm x\} - \bm \mu(\bm x) \bm \mu(\bm x)^\top.
  \end{align*}
\end{Theorem}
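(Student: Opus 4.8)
The plan is to treat the local-linear fit as a weighted least-squares problem and expand its solution around the population quantities, separating a deterministic bias from a stochastic linearization, and making every approximation uniform in $(g, \bx, \bm B)$ through the Euclidean-class machinery of \Cref{prop:ep} and \Cref{lem:euclidean}. Writing $\bm z_i = (1, \xix^\top)^\top$ and $w_i = K_h(\bm B^\top \xix)$, the estimator is $(\wh m_g(\bx), \wh\nabla m_g(\bx)^\top)^\top = \bm S_n^+ \bm t_n$ with $\bm S_n = n^{-1}\sum_i \bm z_i \bm z_i^\top w_i$ and $\bm t_n = n^{-1}\sum_i \bm z_i\, g(\bm Y_i)\, w_i$, where ${}^+$ is the (pseudo-)inverse. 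First I would use \Cref{lem:kernel-lip} to replace $K_h(\bm B^\top\xix)$ by $K(0)^{p-d}K_h(\bm B_0^\top\xix)$ throughout, which localizes the weights in the $d$ directions spanned by $\bm B_0$ and produces the error contributions of order $\delta_{\bm B}/h$; these propagate into the final $h\delta_{\bm B}$ and $r_{n,d}\delta_{\bm B}/h$ terms.

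The core computation is the behaviour of the rescaled design matrix $\bm S_n$. The key structural point is that the kernel localizes only in the $\bm B_0$-directions, so the block of $\bm S_n$ corresponding to $\Xx$ in those directions is of order $h^2$, whereas the block corresponding to the $p-d$ orthogonal, un-localized directions stays of order one. Consequently the limiting normalized design matrix is rank-deficient, and its relevant inverse is a Moore--Penrose inverse built from the conditional covariance $\bm \Gamma(\bx)$, which by construction vanishes on $\Scal(\bm B_0)$ (since $\bm B_0^\top\bm X$ is held fixed) and is invertible only on its orthogonal complement. Using \Cref{lem:kernel} to evaluate the expectations of the entries of $\bm S_n$ and $\bm t_n$, and \Cref{prop:ep} with the Euclidean class of \Cref{lem:euclidean} to control the fluctuations uniformly at rate $r_{n,d}$, I would obtain the explicit limit of $\bm S_n^+$ in terms of $f_{\bm B_0^\top\bm X}$, $\bm \mu(\bx)$ and $\bm \Gamma^+(\bx)$.

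With the inverse in hand, I would split $g(\bm Y_i) = m_g(\bm X_i) + \epsilon_g(\bm Y_i,\bm X_i)$. The $\epsilon_g$-part has conditional mean zero and, after substituting the limiting $\bm S_n^+$ and the approximated kernel, reproduces exactly the linearization ${\bm v}_{n,g}^*(\bx)$; because $\bm \Gamma^+$ annihilates $\Scal(\bm B_0)$, its gradient block lives entirely in the orthogonal complement, consistent with the fact that fluctuations in the un-localized directions dominate those in the genuinely localized, derivative-type directions. The $m_g$-part produces the deterministic bias: a fourth-order Taylor expansion of $m_g$ about $\bx$ (licensed by \ref{ass:mg}) together with the vanishing odd moments of the symmetric kernel \ref{ass:K} gives the $O(h^2)$ order for the level component and, after the blockwise cancellations, the $O(h^4)$ order for the gradient component, while the coupling of this bias to the stochastic fluctuation of $\bm S_n$ yields the $O_p(h^2 r_{n,d})$ remainder. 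Collecting the pieces gives the claim.

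The main obstacle will be the rank-deficient limiting design matrix: since the smallest eigenvalues of $\bm S_n$ are of order $h^2$ while the largest are of order one, the inversion must be carried out blockwise, tracking how the $h^2$-scaled $\bm B_0$-block couples to the $O(1)$ orthogonal block and verifying that this ill-conditioning does not inflate the remainder. This block analysis is precisely what pins down the $\bm \Gamma^+$--$\bm \mu$ form of ${\bm v}_{n,g}^*$ and the improved $O(h^4)$ order of the gradient bias, and it must be carried out uniformly for $\bm B$ near $(\bm B_0, \bm 0_{p\times(p-d)})$, which is exactly where the Euclidean-class uniformity of \Cref{lem:euclidean} together with \Cref{prop:ep} is indispensable.
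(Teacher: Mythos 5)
Your proposal is correct and follows essentially the same route as the paper's own proof: the normal-equation representation with a Moore--Penrose inverse, the kernel replacement via \Cref{lem:kernel-lip} producing the $h\delta_{\bm B}$ and $r_{n,d}\delta_{\bm B}/h$ terms, uniform convergence of the rank-deficient design matrix via \Cref{lem:kernel}, \Cref{lem:euclidean} and \Cref{prop:ep}, block inversion yielding the $\bm \mu$--$\bm \Gamma^{+}$ form, and the split of the residual into the $\epsilon_g$ noise part (giving ${\bm v}_{n,g}^*$) and a Taylor-expanded bias part whose leading term is annihilated in the gradient block, giving $O(h^2)$ versus $O(h^4)$. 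The only cosmetic difference is that the paper expands the centered function $\gamma_{\bm x}$ in the reduced coordinates $\bm B_0^\top \bm x$ rather than $m_g$ about $\bm x$ directly, which is the precise way to exploit that $m_g$ varies only along $\Scal(\bm B_0)$ --- a fact your blockwise framing already relies on implicitly.
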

The result follows from a sequence of lemmas stated and proven in the following.
Equating the derivative of the local-linear estimation criterion in \eqref{eq:ll_def} to zero yields
\begin{align} \label{eq:ll_sol}
  \begin{pmatrix}
    \wh m_g(\bm x)- m_g(\bm x) \\ \wh \nabla m_g(\bm x) - \nabla m_g(\bm x)
  \end{pmatrix} = \bm S_{n, g}^{+}(\bm x) \bm \tau_{n, g}(\bm x),
\end{align}
where
\begin{align}
  \bm S_{n, g}(\bm x)    & = \frac {h^{p - d}} n \sum_{i = 1}^n \begin{pmatrix} 1 & \xix^\top \\ \xix & \xix \xix^\top \end{pmatrix} K_h(\bm B^\top\xix), \notag \\
  \bm \tau_{n, g}(\bm x) & = \frac {h^{p - d}} n\sum_{i = 1}^n\biggl\{g(\bm Y_i) - \bm \beta(\bm x)^\top
  \begin{pmatrix} 1 \\ \xix \end{pmatrix} \biggr\}
  \begin{pmatrix} 1 \\ \xix \end{pmatrix}
  K_h({\bm B}^\top\xix).
  \label{eq:taung}
\end{align}

\begin{Lemma} \label{lem:S}
  If $h \to 0$, $\delta_{\bm B}/h \to 0$, and \ref{ass:K}--\ref{ass:fB} hold, then
  \begin{align*}
     & \sup_{\bm x \in \Dcal_{\bm X}, \bm B \in \R^{p \times p}} \| \bm S_{n, g}(\bm x) - \bm S_g(\bm x) \| = O_p(h^2 + \delta_{\bm B}/h+ r_{n, d}),
  \end{align*}
  where
  \begin{align*}
    \bm S_g(\bm x) & = K(0)^{p - d} f_{\bm B_0^\top \bm X}(\bm B_0^\top\bm x) \E\biggl\{
    \begin{pmatrix}
      1   & \Xx^\top     \\
      \Xx & \Xx \Xx^\top
    \end{pmatrix}
    \bigg| \bm B_0^\top \bm X = \bm B_0^\top \bm x \biggr\}.
  \end{align*}
\end{Lemma}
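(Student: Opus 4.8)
The plan is to split the error into a deterministic bias and a uniform stochastic fluctuation and to control them at the two advertised orders. Writing
\[
\bm S_{n,g}(\bm x) - \bm S_g(\bm x) = \bigl\{\E\bm S_{n,g}(\bm x) - \bm S_g(\bm x)\bigr\} + \bigl\{\bm S_{n,g}(\bm x) - \E\bm S_{n,g}(\bm x)\bigr\},
\]
I would bound the first bracket by $O(h^2 + \delta_{\bm B}/h)$ and the second by $O_p(r_{n,d})$, uniformly in $\bm x\in\Dcal_{\bm X}$ and $\bm B$. Since $\|\cdot\|$ is the operator norm of a fixed-size $(p+1)\times(p+1)$ matrix, it is enough to treat the finitely many entries one at a time: a generic entry of the second-order block of $\bm S_{n,g}(\bm x)$ is $n^{-1}\sum_{i=1}^n\phi_{\bm x,\bm B}(\bm X_i)$ with $\phi_{\bm x,\bm B}(\bm X) = h^{p-d}(\bm X-\bm x)_k(\bm X-\bm x)_\ell K_h(\bm B^\top(\bm X-\bm x))$, while the first row/column and the $(1,1)$ entry are the lower-order analogues with one or zero polynomial factors. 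The useful observation is that $\bm S_g(\bm x) = K(0)^{p-d}f_{\bm B_0^\top\bm X}(\bm B_0^\top\bm x)\,\bm M(\bm x)$ is precisely the smoothed conditional-moment matrix $\bm M$ of Assumption \ref{ass:invertible}.

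For the bias term I would mimic the computation in the proof of \autoref{lem:kernel}. First, \autoref{lem:kernel-lip} replaces the full $p$-dimensional product kernel by the effective $d$-dimensional one, $K(\bm B^\top(\bm X-\bm x)/h) = K(\bm B_0^\top(\bm X-\bm x)/h)K(0)^{p-d} + \ind_{\|\bm B_0^\top(\bm X-\bm x)\|_\infty\le h+A\delta_{\bm B}}\times O(\delta_{\bm B}/h)$. Conditioning on $\bm B_0^\top\bm X$ turns the leading part into $h^{-d}K(0)^{p-d}\E\{m_{k\ell}(\bm B_0^\top\bm X,\bm x)\,K(\bm B_0^\top(\bm X-\bm x)/h)\}$, where $m_{k\ell}(\bm z,\bm x) = \E\{(\bm X-\bm x)_k(\bm X-\bm x)_\ell\mid\bm B_0^\top\bm X=\bm z\}$ is the corresponding entry of $\bm M$. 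A change of variables and a second-order Taylor expansion of $m_{k\ell}$ and $f_{\bm B_0^\top\bm X}$ about $\bm B_0^\top\bm x$—valid because Assumptions \ref{ass:fB} and \ref{ass:invertible} supply two bounded derivatives uniformly on $\Dcal_{\bm X}$ and because $K$ is symmetric with unit mass—gives the target entry plus $O(h^2)$. The \autoref{lem:kernel-lip} remainder is handled exactly as the $E_2$ estimate of \autoref{lem:kernel}: it is bounded by $h^{-d}O(\delta_{\bm B}/h)\E\{|(\bm X-\bm x)_k(\bm X-\bm x)_\ell|\,\ind_{\|\bm B_0^\top(\bm X-\bm x)\|_\infty\le h+A\delta_{\bm B}}\}$, and integrating the polynomial in the orthogonal directions against the conditional law leaves a bounded conditional moment times the $O((h+A\delta_{\bm B})^d)$ mass of the shrinking slab, i.e. $O(\delta_{\bm B}/h)$. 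All bounds are uniform in $\bm x$ and $\bm B$.

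For the stochastic fluctuation I would invoke the empirical-process machinery. Rescaling by $h^d$ removes the $n$-dependent prefactor, since $h^d\phi_{\bm x,\bm B}(\bm X) = (\bm X-\bm x)_k(\bm X-\bm x)_\ell\prod_{m=1}^p K(\bm B_m^\top(\bm X-\bm x)/h)$ is exactly of the form treated in \autoref{lem:euclidean}; hence, on a compact enlargement of $\Dcal_{\bm X}$ containing the effective support, the class indexed by $(\bm x,\bm B,h)$ is Euclidean with bounded envelope and constants independent of $n$. The variance bound of \autoref{lem:kernel} gives $\sup\E\{(h^d\phi_{\bm x,\bm B})^2\} = O(h^d)$, so $\sigma_n^2 = O(h^d)$ in the relevant regime $nh^d/\ln n\to\infty$, and \autoref{prop:ep} yields $\sup_{\bm x,\bm B}|n^{-1}\sum_i h^d\phi_{\bm x,\bm B}(\bm X_i) - \E\{h^d\phi_{\bm x,\bm B}(\bm X)\}| = O_p(\sqrt{h^d\ln n/n})$; dividing by $h^d$ restores the scale and produces $O_p(\sqrt{h^{-d}\ln n/n}) = O_p(r_{n,d})$, uniformly over all $\bm B\in\R^{p\times p}$. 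Summing the entry-wise bounds and passing back to the operator norm completes the argument.

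I expect the main obstacle to be the unbounded polynomial weight $(\bm X-\bm x)(\bm X-\bm x)^\top$ under a kernel that, for $\bm B$ near the rank-$d$ matrix $(\bm B_0,\bm 0)$, constrains only the $d$ effective directions $\bm B_0^\top(\bm X-\bm x)$ and leaves the orthogonal directions essentially free. Consequently the boundedness hypothesis of \autoref{lem:kernel} does not apply verbatim, and in both the bias term and the empirical-process term one must instead lean on the smoothness and boundedness of the conditional-moment matrix $\bm M$ furnished by Assumption \ref{ass:invertible} (together with the light tails of $\bm X$) to integrate out the unconstrained directions.
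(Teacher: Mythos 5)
Your proposal is correct and takes essentially the same route as the paper: the identical bias/fluctuation decomposition, a \autoref{lem:kernel}-style computation for the bias (via \autoref{lem:kernel-lip}, conditioning on $\bm B_0^\top \bm X$, change of variables, and second-order Taylor expansion, giving $O(h^2 + \delta_{\bm B}/h)$), and the entry-wise Euclidean-class argument (\autoref{lem:euclidean} together with \autoref{prop:ep} and the second-moment bound $\sigma_n^2 = O(h^d)$) giving the $O_p(r_{n, d})$ fluctuation. The only place you go beyond the paper is your closing caveat, which is well taken: the paper invokes \autoref{lem:kernel} as a black box even though its boundedness hypothesis fails for the polynomial weights $\Xx \Xx^\top$ when the kernel constrains only the $d$ effective directions, and your remedy of rerunning the argument with the conditional-moment entries of $\bm M$ from Assumption \ref{ass:invertible} in place of $\alpha$ is exactly what is needed to make that step rigorous.
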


\begin{proof}
  \autoref{lem:kernel} implies that
  $ 
    \E\{\bm S_{n, g}(\bm x)\} = \bm S_g(\bm x) + O(h^2 + \delta_{\bm B} / h),
  $ 
  uniformly.
  By \autoref{lem:euclidean}, the  classes of functions
  \begin{align*}
    \mathcal R_{j, k} = \biggl\{\bm s \mapsto h^p  \begin{pmatrix}
                                                     1               & (\bm s - \bm x)^\top                 \\
                                                     (\bm s - \bm x) & (\bm s - \bm x) (\bm s - \bm x)^\top
                                                   \end{pmatrix}_{j, k} K_h(\bm B^\top(\bm s - \bm x)) \colon \bm x \in  \Dcal_{\bm X}, \bm B
    \in \R^{p \times p} \biggr\}
  \end{align*}
  are Euclidean for any $j, k \in \{1, \dots, p + 1\}$.
  By \autoref{lem:kernel}, we also have
  \begin{align*}
    \max_{1 \le j, k \le p + 1}\sup_{r \in \mathcal R_{j, k}} \E\{r(\bm X)^2\} = O(h^{d}),
  \end{align*}
  so that \Cref{prop:ep} yields
  \begin{align*}
    \sup_{\bm x \in \Dcal_{\bm X}, \bm B \in \R^{p \times p}} \| \bm S_{n, g}(\bm x) - \E\{\bm S_{n, g}(\bm x)\} \| = O_p(r_{n, d}),
  \end{align*}
  completing the proof.
\end{proof}

\begin{Corollary}
  Under \ref{ass:K}--\ref{ass:invertible},
  $ 
    \bm S_{n, g}^{+}(\bm x) = \bm S_g^{+}(\bm x)\{1 + o_p(1)\},
  $ 
  uniformly on $\Dcal_{\bm X}$.
\end{Corollary}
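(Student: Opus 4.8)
The plan is to derive the result from the uniform convergence $\bm S_{n, g} \to \bm S_g$ established in \Cref{lem:S}, together with a uniform lower bound on the smallest singular value of the limit $\bm S_g$. Under the bandwidth conditions in force in this section ($h \to 0$, $\delta_{\bm B}/h \to 0$, and $nh^d/\ln n \to \infty$, so that $r_{n, d} \to 0$), the bound of \Cref{lem:S} becomes
\begin{align*}
  \sup_{\bm x \in \Dcal_{\bm X}, \bm B \in \R^{p \times p}} \| \bm S_{n, g}(\bm x) - \bm S_g(\bm x) \| = o_p(1);
\end{align*}
note that $\bm S_{n, g}$ does not actually depend on $g$, so the statement is automatically uniform in $g$. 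Next I would record that $\bm S_g(\bm x) = K(0)^{p - d} f_{\bm B_0^\top \bm X}(\bm B_0^\top \bm x)\, \bm M(\bm x)$ is a strictly positive scalar multiple of the matrix $\bm M(\bm x)$ appearing in \ref{ass:invertible}. Since $K(0)^{p - d} > 0$, \ref{ass:fB} gives $\inf_{\bm x \in \Dcal_{\bm X}} f_{\bm B_0^\top \bm X}(\bm B_0^\top \bm x) > 0$, and \ref{ass:invertible} gives that the eigenvalues of $\bm M(\bm x)$ are bounded away from zero uniformly on $\Dcal_{\bm X}$, the smallest singular value of $\bm S_g(\bm x)$ is bounded below by some constant $c > 0$, uniformly in $\bm x$. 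In particular $\bm S_g(\bm x)$ is invertible, so $\bm S_g^+(\bm x) = \bm S_g^{-1}(\bm x)$ and $\sup_{\bm x} \| \bm S_g^{-1}(\bm x) \| \le 1/c$.

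I would then transfer invertibility to the random matrix. On the event $\mathcal E_n = \{\sup_{\bm x, \bm B} \| \bm S_{n, g}(\bm x) - \bm S_g(\bm x) \| < c/2\}$, which by the display above has probability tending to one, Weyl's inequality for singular values gives $\sigma_{\min}(\bm S_{n, g}(\bm x)) \ge c - c/2 = c/2 > 0$ simultaneously for all $\bm x \in \Dcal_{\bm X}$. Hence on $\mathcal E_n$ the matrix $\bm S_{n, g}(\bm x)$ is invertible for every $\bm x$, its Moore--Penrose inverse coincides with the ordinary inverse, $\bm S_{n, g}^+(\bm x) = \bm S_{n, g}^{-1}(\bm x)$, and $\sup_{\bm x} \| \bm S_{n, g}^{-1}(\bm x) \| \le 2/c$.

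Finally I would apply a matrix perturbation argument on $\mathcal E_n$. Writing $\bm E(\bm x) = \bm S_{n, g}(\bm x) - \bm S_g(\bm x)$ and factoring $\bm S_{n, g} = (\bm I + \bm E \bm S_g^{-1})\bm S_g$, the identity $\bm S_{n, g}^{-1} = \bm S_g^{-1}(\bm I + \bm E \bm S_g^{-1})^{-1}$ combined with $\|\bm E \bm S_g^{-1}\| \le \|\bm E\|/c = o_p(1)$ and a Neumann-series expansion of $(\bm I + \bm E \bm S_g^{-1})^{-1} = \bm I + o_p(1)$ yields
\begin{align*}
  \bm S_{n, g}^+(\bm x) = \bm S_g^+(\bm x)\{\bm I + o_p(1)\},
\end{align*}
uniformly on $\Dcal_{\bm X}$, which is the claim. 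The main obstacle is that the Moore--Penrose inverse is \emph{not} continuous in the matrix entries at points of rank deficiency, so one cannot apply perturbation theory to $\bm S_{n, g}^+$ directly; the resolution is exactly the two-step argument above, in which the uniform lower bound on $\sigma_{\min}(\bm S_g)$ forces both $\bm S_g$ and, with probability tending to one, $\bm S_{n, g}$ to be genuinely invertible, reducing the pseudo-inverse to the ordinary inverse on which the standard estimate is valid. The remaining work is bookkeeping of uniformity over $\bm x$ (and $\bm B$), which is inherited directly from the uniform convergence in \Cref{lem:S} and the uniform eigenvalue bounds in \ref{ass:fB}--\ref{ass:invertible}.
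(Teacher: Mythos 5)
Your proposal is correct and is exactly the argument the paper leaves implicit: the corollary is stated without proof as an immediate consequence of Lemma \ref{lem:S} (uniform convergence of $\bm S_{n,g}$ to $\bm S_g$) combined with \ref{ass:fB}--\ref{ass:invertible}, which make $\bm S_g(\bm x) = K(0)^{p-d} f_{\bm B_0^\top \bm X}(\bm B_0^\top \bm x)\,\bm M(\bm x)$ uniformly invertible, so that the pseudo-inverse reduces to the ordinary inverse and the standard Neumann-series perturbation bound applies. Your two-step handling of the pseudo-inverse discontinuity (uniform lower bound on $\sigma_{\min}$, then invertibility of $\bm S_{n,g}$ on an event of probability tending to one) is the right way to make this rigorous, so no gap remains.
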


\begin{Lemma}
  It holds
  \begin{align*}
    \bm S_g^{+}(\bm x) = \frac 1 {f_{\bm B_0^\top \bm X}(\bm B_0^\top\bm x)} \begin{pmatrix}
                                                                               1 + \{\bm \mu(\bm x) - \bm x\}^\top \bm \Gamma^{+}(\bm x)\{\bm \mu(\bm x) - \bm x\} & - \{\bm \mu(\bm x) - \bm x\}^\top \bm \Gamma^{+}(\bm x) \\
                                                                               -\bm \Gamma^{+}(\bm x)\{\bm \mu(\bm x) - \bm x\}                                    & \bm \Gamma^{+}(\bm x)
                                                                             \end{pmatrix},
  \end{align*}
\end{Lemma}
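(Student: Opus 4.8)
The plan is to reduce the statement to the Moore--Penrose inverse of a single bordered matrix. By the previous lemma, $\bm S_g(\bm x) = K(0)^{p-d} f_{\oBo^\top \bm X}(\oBo^\top \bm x)\,\bm M(\bm x)$ with $\bm M(\bm x)$ the stated conditional expectation of the bordered outer product. Writing $\bm b = \bm b(\bm x) = \bm \mu(\bm x) - \bm x$ and using $\E\{\Xx \mid \oBo^\top \bm X = \oBo^\top \bm x\} = \bm b$ together with the identity $\E\{\Xx \Xx^\top \mid \oBo^\top \bm X = \oBo^\top \bm x\} = \bm \Gamma(\bm x) + \bm b\bm b^\top$ (expand $\bm \Gamma$ as the conditional second moment minus $\bm \mu\bm \mu^\top$), the matrix takes the bordered form
\[
  \bm M(\bm x) = \begin{pmatrix} 1 & \bm b^\top \\ \bm b & \bm \Gamma(\bm x) + \bm b\bm b^\top \end{pmatrix}.
\]
Since $\bm S_g$ is a strictly positive scalar multiple of $\bm M$, one has $\bm S_g^+ = \{K(0)^{p-d} f_{\oBo^\top \bm X}(\oBo^\top \bm x)\}^{-1}\bm M^+$, so it remains to verify that $\bm M^+$ equals the bracketed matrix in the statement (the scalar prefactor then carries over).

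The essential structural fact, and the only nontrivial point, is that $\bm b$ lies in the range of $\bm \Gamma(\bm x)$. First, $\oBo^\top \bm \mu(\bm x) = \E\{\oBo^\top \bm X \mid \oBo^\top \bm X = \oBo^\top \bm x\} = \oBo^\top \bm x$, so $\bm b \in \mathrm{null}(\oBo^\top)$. Second, $\oBo^\top \bm \Gamma(\bm x)\oBo = \var(\oBo^\top \bm X \mid \oBo^\top \bm X = \oBo^\top \bm x) = \bm 0$; because $\bm \Gamma(\bm x)$ is a symmetric positive semidefinite conditional covariance, this forces $\bm \Gamma(\bm x)\oBo = \bm 0$, i.e. $\mathrm{range}(\bm \Gamma(\bm x)) \subseteq \mathrm{null}(\oBo^\top)$. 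Finally, \ref{ass:invertible} guarantees that $\bm \Gamma(\bm x)$ has full rank $p-d = \dim\mathrm{null}(\oBo^\top)$ on this subspace (its nonzero eigenvalues are bounded away from zero), so the inclusion is an equality and $\bm b \in \mathrm{null}(\oBo^\top) = \mathrm{range}(\bm \Gamma(\bm x))$. Consequently $\bm \Gamma(\bm x)\bm \Gamma^+(\bm x)\bm b = \bm b$ and, by symmetry, $\bm b^\top \bm \Gamma^+(\bm x)\bm \Gamma(\bm x) = \bm b^\top$.

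With these identities the computation is mechanical. Writing $\bm P = \bm \Gamma\bm \Gamma^+ = \bm \Gamma^+\bm \Gamma$ for the orthogonal projector onto $\mathrm{range}(\bm \Gamma)$, so that $\bm P\bm b = \bm b$, $\bm P\bm \Gamma = \bm \Gamma$ and $\bm P\bm \Gamma^+ = \bm \Gamma^+$, I would verify the four Penrose conditions for the candidate $\bm M^+$. A short block multiplication gives $\bm M\bm M^+ = \mathrm{diag}(1,\bm P)$, which is symmetric, and then $\bm M\bm M^+\bm M = \bm M$ using $\bm P\bm b = \bm b$ and $\bm P\bm \Gamma = \bm \Gamma$; since $\bm M$ and the candidate are both symmetric, $\bm M^+\bm M = (\bm M\bm M^+)^\top = \bm M\bm M^+$ is symmetric as well, and $\bm M^+\bm M\bm M^+ = \bm M^+$ follows from $\bm P\bm \Gamma^+ = \bm \Gamma^+$. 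Equivalently, one may record the congruence $\bm M = \bm L\,\mathrm{diag}(1,\bm \Gamma)\,\bm L^\top$ with the unit lower-triangular $\bm L = \bigl(\begin{smallmatrix} 1 & \bm 0^\top \\ \bm b & \bm I \end{smallmatrix}\bigr)$; the range condition $\bm b \in \mathrm{range}(\bm \Gamma)$ is exactly what makes $\bm M^+ = \bm L^{-\top}\,\mathrm{diag}(1,\bm \Gamma^+)\,\bm L^{-1}$, and expanding this product reproduces the bracketed matrix. Dividing by the scalar prefactor of $\bm S_g$ then yields the stated expression for $\bm S_g^+$.

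The main obstacle is precisely the range condition $\bm b \in \mathrm{range}(\bm \Gamma(\bm x))$: without it $\bm M$ would be singular with its border vector leaking outside the range of the singular block, and the clean closed form would fail. This is the one place where the rank content of \ref{ass:invertible} is genuinely used; the remainder is bookkeeping with the projector $\bm P$.
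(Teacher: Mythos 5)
Your proposal is correct, and it is in fact tighter than the paper's own argument. The paper proves this lemma by recalling the block-inversion (Schur complement) formula for \emph{invertible} partitioned matrices, specializing it to the bordered form $\bigl(\begin{smallmatrix} 1 & \bm C^\top \\ \bm C & \bm D \end{smallmatrix}\bigr)$, substituting $\bm C = \bm \mu(\bm x) - \bm x$ and $\bm D = \E(\Xx \Xx^\top \mid \bm B_0^\top \bm X = \bm B_0^\top \bm x)$, and computing $\bm D - \bm C \bm C^\top = \bm \Gamma(\bm x)$ --- exactly your congruence reduction --- but then it simply writes Moore--Penrose inverses where the formula has true inverses. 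That last substitution is never justified in the paper: since $d < p$ and $\bm \Gamma(\bm x) \bm B_0 = \bm 0$, the Schur complement is \emph{always} singular in this setting, and the naive pseudo-inverse analogue of the block formula is false in general; your remark about the border vector leaking outside $\mathrm{range}\{\bm \Gamma(\bm x)\}$ is exactly the failure mode (e.g.\ $\bm \Gamma = 0$, $\bm b \neq \bm 0$ gives a counterexample). Your proof supplies precisely the missing ingredients: the range condition $\bm \mu(\bm x) - \bm x \in \mathrm{range}\{\bm \Gamma(\bm x)\}$, derived from $\bm B_0^\top\{\bm \mu(\bm x) - \bm x\} = \bm 0$, $\bm \Gamma(\bm x)\bm B_0 = \bm 0$, and the rank content of \ref{ass:invertible}, followed by verification of the four Penrose axioms (your block computations $\bm M \bm M^+ = \mathrm{diag}(1, \bm P)$, etc., check out). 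So the two proofs share the same structural core, but yours closes a genuine gap; what the paper's version buys is only brevity.

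Two minor remarks. First, \ref{ass:invertible} as literally stated (eigenvalues of $\bm M(\bm x)$ bounded away from zero) cannot hold when $d < p$, because $(0, (\bm B_0 \bm w)^\top)^\top$ is a null vector of $\bm M(\bm x)$; your reading --- nonzero eigenvalues bounded away from zero, hence $\mathrm{rank}\{\bm \Gamma(\bm x)\} = p - d$ --- is the interpretation under which the lemma is true, and it deserves to be stated explicitly since it is the one place the assumption's rank content is used. Second, your derivation yields $\bm S_g^{+}(\bm x) = \{K(0)^{p-d} f_{\bm B_0^\top \bm X}(\bm B_0^\top \bm x)\}^{-1} \bm M^{+}(\bm x)$, whereas the lemma as printed omits the factor $K(0)^{-(p-d)}$; this is an inconsistency internal to the paper (its own proof also drops the prefactor), so your closing sentence inherits the paper's typo rather than introducing an error.
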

\begin{proof}
  Recall block inversion formula
  \begin{align*}
    \begin{pmatrix}
      \bm A & \bm B \\
      \bm C & \bm D
    \end{pmatrix}^{-1} = \begin{pmatrix}
                           \bm A^{-1} + \bm A^{-1}\bm B(\bm D - \bm C\bm A^{-1}\bm B)^{-1}\bm C\bm A^{-1} & -\bm A^{-1}\bm B(\bm D - \bm C \bm A^{-1} \bm B)^{-1} \\
                           -(\bm D - \bm C \bm A^{-1} \bm B)^{-1}\bm C\bm A^{-1}                          & (\bm D - \bm C \bm A^{-1} \bm B)^{-1}
                         \end{pmatrix},
  \end{align*}
  which we can simplify to
  \begin{align*}
    \begin{pmatrix}
      1     & \bm C^\top \\
      \bm C & \bm D
    \end{pmatrix}^{-1} = \begin{pmatrix}
                           1 + \bm C^\top(\bm D - \bm C\bm C^\top)^{-1}\bm C & -\bm C^\top(\bm D - \bm C \bm C^\top)^{-1} \\
                           -(\bm D - \bm C \bm C^\top)^{-1}\bm C             & (\bm D - \bm C \bm C^\top)^{-1}
                         \end{pmatrix}.
  \end{align*}
  Then the result follows from setting
  \begin{align*}
    \bm C = \E( \Xx\mid \bm B_0^\top \bm X = \bm B_0^\top \bm x) = \bm \mu(\bm x) - \bm x, \quad
    \bm D  = \E( \Xx \Xx^\top \mid \bm B_0^\top \bm X= \bm B_0^\top \bm x),
  \end{align*}
  which gives
  \begin{align*}
    \bm D - \bm C \bm C^\top & = \E( \Xx \Xx^\top \mid \bm B_0^\top \bm X= \bm B_0^\top \bm x)                                                                                  \\
                             & \phantom{=}- \E(\Xx \mid \bm B_0^\top \bm X= \bm B_0^\top \bm x)\E( \Xx \mid \bm B_0^\top \bm X= \bm B_0^\top \bm x)^\top                        \\
                             & = \E[\bm X \bm X^\top \mid  B_0^\top \bm X= \bm B_0^\top \bm x] - \bm \mu(\bm x) \bm x^\top - \bm x \bm \mu(\bm x)^\top - \bm x \bm x^\top \\
                             & \quad - (\bm \mu(\bm x) - \bm x)(\bm \mu(\bm x) - \bm x)^\top.
    \\
                             & = \E[\bm X\bm X^\top \mid  B_0^\top \bm X= \bm B_0^\top \bm x] - \bm \mu(\bm x) \mu(\bm x)^\top
    = \bm \Gamma(\bm x). \tag*{\qedhere}
  \end{align*}
\end{proof}

The structure of $\bm S_g^{+}(\bm x)$ will cancel the leading bias term for the gradient estimate. To see this, we first derive an approximation of $\bm \tau_{n, g}(\bm x)$ defined in \eqref{eq:taung}.  For an arbitrary function $s$, denote  $\partial_{i_1, \cdots, i_m} s(\bm u) = \partial^{m}s(\bm u)/(\partial u_{i_1} \cdots \partial u_{i_m})$.

\begin{Lemma} \label{lem:tau}
  Under the conditions of \Cref{thm:ll_refined}, it holds
  \begin{align*}
    \sup_{\bm x \in \Dcal_{\bm X}, g \in \Gcal} \left\|
    \bm \tau_{n, g}(\bm x) - h^2 \bm \rho(\bm x) - \bm v_{n, g}(\bm x) \right\|
    = O_p(h^4 + h^2r_{n, d} + h\delta_{\bm B} + r_{n, d} \delta_{\bm B} / h),
  \end{align*}
uniformly in $g \in \Gcal, \bx \in \Dcal_{\bm X}$, and  $\bm B \in \mathcal \R^{p \times p}$ with $\delta_{\bm B} / h \to 0$,
  where
  \begin{align*}
    \bm \rho(\bx)       & =  \frac{ K(0)^{p - d}}{2} \begin{pmatrix} 1 \\ \bm \mu(\bx) - \bm x \end{pmatrix} f_{\bm B_0^\top \bm X}(\oBo^\top \bx)\int_{[-1, 1]^d}  \bt^\top \nabla m_g^{\oBo}( \oBo^\top \bx) \bt  K(\bt) d\bm t \\
    \bm v_{n, g}(\bm x) & = \frac 1 n\sum_{i = 1}^n \bigl[g(\bm Y_i) - \E\{g(\bm Y_i) \mid \bm X_i\}\bigr]\begin{pmatrix} 1 \\ \xix \end{pmatrix} K_h({\bm B_0}^\top\xix) K(0)^{p -d}.
  \end{align*}
\end{Lemma}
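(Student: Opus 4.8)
The plan is to split the residual inside $\bm \tau_{n, g}(\bm x)$ into a mean-zero stochastic part and a deterministic bias part, analyze each with the kernel lemmas and the uniform law of \Cref{prop:ep}, and collect the four error terms. Writing $\epsilon_g(\bm Y_i, \bm X_i) = g(\bm Y_i) - m_g(\bm X_i)$, I first decompose
\[
g(\bm Y_i) - m_g(\bm x) - \nabla m_g(\bm x)^\top \xix = \epsilon_g(\bm Y_i, \bm X_i) + \bigl\{m_g(\bm X_i) - m_g(\bm x) - \nabla m_g(\bm x)^\top \xix\bigr\},
\]
so that $\bm \tau_{n, g}(\bm x) = \bm \tau^{(1)}_{n, g}(\bm x) + \bm \tau^{(2)}_{n, g}(\bm x)$, where $\bm \tau^{(1)}$ carries the noise $\epsilon_g$ and $\bm \tau^{(2)}$ the second-order Taylor remainder of $m_g$. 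Since $\Scal_{\E\{g(\bm Y) \mid \bm X\}} \subseteq \Scal_\Gcal = \mathrm{span}(\oBo)$ for each $g \in \Gcal$, the function $m_g$ depends on $\bm x$ only through $\oBo^\top \bm x$, which is what makes the remainder in $\bm \tau^{(2)}$ quadratic in $\oBo^\top \xix$ and hence $O(h^2)$ on the kernel support.

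For $\bm \tau^{(1)}$ I would rewrite the full-rank kernel as $h^{p - d} K_h(\bm B^\top \xix) = h^{-d} K(\bm B^\top \xix / h)$ and replace it by its low-rank surrogate $K(0)^{p - d} K_h(\oBo^\top \xix)$ using \Cref{lem:kernel-lip}; the replaced sum is exactly $\bm v_{n, g}(\bm x)$. The replacement error is a sum of centered terms (because $\E\{\epsilon_g \mid \bm X\} = 0$) of order $h^{-d}(\delta_{\bm B}/h)\,\epsilon_g\,\ind_{\|\oBo^\top \xix\|_\infty \le h + A\delta_{\bm B}}$ times a bounded factor, whose per-summand second moment is $O(\delta_{\bm B}^2 h^{-d - 2})$ since the indicator contributes mass $O(h^d)$. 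Feeding this $\sigma_n^2$ into \Cref{prop:ep}, with the relevant classes Euclidean by \Cref{lem:euclidean}, yields $\bm \tau^{(1)}_{n, g}(\bm x) = \bm v_{n, g}(\bm x) + O_p(r_{n, d}\,\delta_{\bm B}/h)$ uniformly.

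For $\bm \tau^{(2)}$ I again pass to the low-rank kernel via \Cref{lem:kernel-lip}; here the remainder is $O(h^2)$ on the support, so the replacement error has expectation of order $h^2 \cdot h^{-d}\cdot (\delta_{\bm B}/h)\cdot h^d = O(h\delta_{\bm B})$ with negligible fluctuation, giving the $O(h\delta_{\bm B})$ term. The surviving low-rank average I split into mean and fluctuation. For the mean, a change of variables $\bm t = \oBo^\top \xix / h$ together with a fourth-order expansion of $m_g^{\oBo}$ and $f_{\oBo^\top \bm X}$ around $\oBo^\top \bm x$ (as in \Cref{lem:kernel}) isolates the quadratic term $\tfrac12 \bm t^\top \nabla m_g^{\oBo}(\oBo^\top \bm x)\bm t$; the odd-order contributions vanish by symmetry of $K$, the conditional expectation $\E\{\bm X - \bm x \mid \oBo^\top \bm X = \oBo^\top \bm x\}$ supplies the $\bm \mu(\bm x) - \bm x$ factor in the second block, and \ref{ass:mg} pushes the remainder to $O(h^4)$, reproducing $h^2 \bm \rho(\bm x)$. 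The fluctuation, by \Cref{prop:ep} with $\sigma_n^2 = O(h^{4 - d})$, is $O_p(h^2 r_{n, d})$. Collecting the four contributions gives the stated rate.

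The main obstacle I anticipate is securing all bounds \emph{uniformly} in $g \in \Gcal$, $\bm x \in \Dcal_{\bm X}$, and $\bm B$ with $\delta_{\bm B}/h \to 0$ at once: since the bandwidth matrix $\bm B$ is itself the varying index, the empirical-process control must be taken over the rich product classes of \Cref{lem:euclidean} rather than for a single fixed kernel, and the variance orders $\sigma_n^2$ must be tracked exactly through the interplay of the $O(h^2)$ remainder, the $\delta_{\bm B}/h$ kernel-approximation error, and the $O(h^d)$ mass of the kernel support, so that the four error terms emerge at precisely the orders $h^4$, $h^2 r_{n, d}$, $h\delta_{\bm B}$, and $r_{n, d}\,\delta_{\bm B}/h$.
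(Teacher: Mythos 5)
Your proposal is correct and follows essentially the same route as the paper's proof: the same split of the residual into the noise part $\epsilon_g$ and the Taylor remainder of $m_g^{\bm B_0}$, the same kernel replacement via \Cref{lem:kernel-lip}, the same Taylor/change-of-variables extraction of $h^2\bm\rho(\bm x)$, and the same uniform control through \Cref{lem:euclidean} and \Cref{prop:ep}. The only (cosmetic) difference is that the paper rescales the summands by $h^{d}$ before invoking \Cref{prop:ep} so that the Euclidean classes have bounded envelopes, a normalization you should make explicit but which does not change any of your rates.
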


\begin{proof}

  Decompose
  \begin{align*}
    \bm \tau_{n, g}(\bm x) & = \frac {h^{p - d}} n\sum_{i = 1}^n\biggl\{g(\bm Y_i) - \bm \beta(\bm x)^\top \begin{pmatrix}
                                                                                                             1 \\ \xix
                                                                                                           \end{pmatrix} \biggr\}\begin{pmatrix} 1 \\ \xix \end{pmatrix} K_h({\bm B}^\top\xix)                              \\
                           & =  \frac {h^{p - d}} n\sum_{i = 1}^n\bigl[g(\bm Y_i) - \E\{g(\bm Y_i) \mid \bm X_i\} \bigr]\begin{pmatrix} 1 \\ \xix \end{pmatrix} K_h({\bm B}^\top\xix)                                       \\
                           & \phantom{=}+ \frac {h^{p - d}} n\sum_{i = 1}^n\bigg[\E\{g(\bm Y_i) \mid \bm X_i\}  -  \bm \beta(\bm x)^\top \begin{pmatrix}
                                                                                                                                             1 \\ \xix
                                                                                                                                           \end{pmatrix} \biggr]\begin{pmatrix} 1 \\ \xix \end{pmatrix} K_h({\bm B}^\top\xix) \\
                           & = \bm v_{n, g}(\bm x, \bm B) + \bm a_{n, g}(\bm x, \bm B).
  \end{align*}
  We deal with the two terms separately.

  \subsubsection*{Term $\bm v_{n, g}(\bm x, \bm B)$}
  It holds
  \begin{align*}
     & \quad   \bm v_{n, g}(\bm x, \bm B)                                                                                                                                                                                                                      \\
     & = \bm v_{n, g}(\bm x)                                                                                                                                                                                                                                   \\
     & \quad + \frac {h^{- d}} n\sum_{i = 1}^n\bigl[g(\bm Y_i) - \E\{g(\bm Y_i) \mid \bm X_i\} \bigr]\begin{pmatrix} 1 \\ \xix \end{pmatrix}\left\{K\left(\frac{{\bm B}^\top\xix}{h}\right) - K\left(\frac{{\bm B}_0^\top\xix}{h}\right)  K(0)^{p -d} \right\} \\
     & = \bm v_{n, g}(\bm x) + \overline{\bm v}_{n, g}(\bm x, \bm B).
  \end{align*}
  The classes $\Gcal$ and $\Hcal$ are Euclidean by assumption. By \autoref{lem:euclidean}, also the classes
  \begin{align*}
    \mathcal V_j = \left\{\bm s \mapsto h^{p}\begin{pmatrix} 1 \\ \bm s - \bx \end{pmatrix}_j\{K_h({\bm B}^\top(\bm s - \bm x)) - K_h({\bm B_0}^\top(\bm s - \bm x)) \} \colon \bm B \in  \R^{p \times p}\colon \bm x \in \Dcal_{\bm X} \right\}
  \end{align*}
  are Euclidean. Since products of Euclidean classes are Euclidean, each coordinate of $h^d\overline{\bm v}_{n, g}(\bm x, \bm B)$ is a sample average
  indexed by a Euclidean class. By iterated expectations, each element in the
  class has zero expectation.
  By \Cref{lem:kernel-lip}, each element is bounded by
  $ 
    A \ind_{\|{\bm B}_0^\top(\bm s - \bm x)\| \le h + A \delta_{\bm B}} \times O(\delta_{\bm B} /h),
  $ 
  where $A = 2 \sup_{\bm x \in \Dcal_{\bm X}} \|\bm x\|_2$.
  Since $f_{\bm B_0^\top \bm X}$ is bounded, we have
  $ 
    \E\left[ \ind_{\|{\bm B}_0^\top\xix\| \le h + A \delta_{\bm B}}^2\right] = O(h^d),
  $ 
  it follows that
  \begin{align*}
    \max_{1 \le j \le p} \sup_{v \in \mathcal V_j} \E[v(\bX)^2 / \delta_{\bm B}^2]  = O_p(h^{d - 2} ).
  \end{align*}
  Again using \Cref{prop:ep}, we obtain
  \begin{align*}
    \sup_{\bm x \in \Dcal_{\bm X}, \bm B \in \R^{p \times p}, g \in \Gcal} \| \overline{\bm v}_{n, g}(\bm x, \bm B) \|/\delta_{\bm B} = O_p(r_{n, d}  / h).
  \end{align*}

  \subsubsection*{Term $\bm a_{n, g}(\bm x, \bm B)$}

  Define
  \begin{align*}
    \gamma(\bX_i, \bm x) & = \E\{g(\bm Y_i) \mid \bm X_i\}  -  \bm \beta(\bm x)^\top \begin{pmatrix}
                                                                                       1 \\ \xix
                                                                                     \end{pmatrix},
  \end{align*}
  and observe that
  \begin{align*}
    \gamma(\bX_i, \bm x)
     & = m_{g}^{\oBo}( \oBo^\top \bm X_i) - m_{g}^{\oBo}( \oBo^\top \bm x) - \nabla m_{g}^{\oBo} ( \oBo^\top \bm x) \oBo^\top \xix.
  \end{align*}
  In fact, $\gamma(\bX_i, \bm x)$ can be written as a function of $\oBo^\top \bm X_i$ and $\oBo^\top \bm x$ only, and we write $ \gamma(\bX_i, \bm x) =  \gamma_{\bm x}(\oBo^\top \bm X_i)$ in what follows.
  With this notation, we have
  \begin{align*}
    \bm a_{n, g}(\bm x, \bm B) = h^2 \frac {h^{p - d}} n\sum_{i = 1}^n \frac{\gamma_{\bm x}(\oBo^\top \bm X_i)}{h^2}\begin{pmatrix} 1 \\ \xix \end{pmatrix} K_h({\bm B}^\top\xix).
  \end{align*}
  We compute the expectation and variance of the summands similarly to the proof of \autoref{lem:kernel}.
  We have
  \begin{eqnarray*}
    \lefteqn{\E\left[h^{p - d}\frac{\gamma_{\bm x}(\oBo^\top \bX_i)}{h^2}\begin{pmatrix} 1 \\ \xix \end{pmatrix} K_h({\bm B}^\top\xix)\right]                                            } \\
    & = & h^{- d} \E\left[\frac{\gamma_{\bm x}(\oBo^\top \bX_i)}{h^2}\begin{pmatrix} 1 \\ \xix \end{pmatrix} K({\bm B}_0^\top\xix/h) K(0)^{p - d}\right]                                     \\
    && \quad + h^{- d} \E\left[\frac{\gamma_{\bm x}(\oBo^\top \bX_i)}{h^2}\begin{pmatrix} 1 \\ \xix \end{pmatrix} \{K({\bm B}^\top\xix/h) - K({\bm B}_0^\top\xix/h)  K(0)^{p - d} \}\right] \\
    & =& E_1 + E_2.
  \end{eqnarray*}
  For $E_1$, the law of iterated expectations implies
  \begin{align*}
    E_1 = K(0)^{p - d} \E\left[\frac{\gamma_{\bm x}(\oBo^\top \bX_i)}{h^2}\begin{pmatrix} 1 \\ \bm \mu(\bm X_i) - \bm x \end{pmatrix} K_h({\bm B}_0^\top\xix)\right].
  \end{align*}
  Define $\wt{\bm \mu}(\oBo^\top \bx) = \E[\bm X \mid \oBo^\top \bm X = \oBo^\top \bx] = \bm \mu(\bx)$.
  The change of variables $\bt = (\bm s - \oBo^\top \bm x)/h$ gives
  \begin{align*}
     & \quad \, E_1 / K(0)^{p - d}                                                                                                                                                                                                              \\
     & =  \int h^{-2} \gamma_{\bx}(\bs) \begin{pmatrix} 1 \\ \wt{\bm \mu}(\bm s) - \bm x \end{pmatrix} K_h(\bm s  - \oBo^\top \bm x) f_{\bm B_0^\top \bm X}(\bm s) d\bm s \\
     & = \int_{[-1, 1]^d}   h^{-2}\gamma_{\bx}( \oBo^\top \bx - h\bt)  \begin{pmatrix} 1 \\  \wt{\bm \mu}(\oBo^\top \bx - h\bt) - \bm x \end{pmatrix} K(\bt) f_{\bm B_0^\top \bm X}(\oBo^\top \bx - h\bt) d\bm t.
  \end{align*}
  Observe that
  \begin{align*}
    \gamma_{\bx}(\oBo^\top \bx) = 0, \quad \nabla \gamma_{\bx}(\oBo^\top \bx) = \bm 0, \quad  \nabla^2  \gamma_{\bx}(\oBo^\top \bz) = \nabla^2 m_g^{\oBo}(\oBo^\top \bz),
  \end{align*}
  so that a fourth order Taylor expansion of $\gamma_{\bx}$ around $\oBo^\top \bx$ yields
  \begin{align*}
    h^{-2}\gamma_{\bx}( \oBo^\top \bx - h\bt) 
    &= \frac{1}{2} \bt^\top \nabla^2 m_g^{\oBo}(\oBo^\top \bz) \bt + \frac{h}{6} \sum_{i = 1}^d \sum_{j= 1}^d \sum_{k = 1}^d t_i t_j t_k \partial_{i, j, k} m_g^{\oBo}(\oBo^\top \bx)  + O(h^2 \| \bt\|^4),
  \end{align*}
  where $\partial_{i, j, k} m_g^{\oBo}(\bs) = \partial^3 m_g^{\oBo}(\bs)/(\partial s_i \partial s_j \partial s_k)$.
  Also expanding $\wt{\bm \mu}$, and $f_{\bm B_0^\top \bm X}$ around $\oBo^\top \bx$ (up to second order) and noting $\int K(s) ds = 1$ and $\int sK(s) ds = 0$, we obtain
  \begin{align*}
    E_1 & = \frac{K(0)^{p - d}}{2} \begin{pmatrix} 1 \\ \wt{\bm \mu}(\oBo^\top \bx) - \bm x \end{pmatrix} f_{\bm B_0^\top \bm X}(\oBo^\top \bx)\int_{[-1, 1]^d}  \bt^\top \nabla^2 m_g^{\oBo}( \oBo^\top \bx) \bt  K(\bt) d\bm t + O(h^2) \\
        & =   \bm \rho(\bm x) + O(h^2).
  \end{align*}

  Next, recall
  \begin{align*}
    E_2 & = h^{-d}\E\left[\frac{\gamma_{\bm x}(\oBo^\top \bX_i)}{h^2}\begin{pmatrix} 1 \\ \xix \end{pmatrix} \{K({\bm B}^\top\xix/h) - K({\bm B}_0^\top\xix/h)  K(0)^{p - d} \}\right].
  \end{align*}
  By \Cref{lem:kernel-lip}, we have
  \begin{align*}
    |E_2| & \le h^{-d}(1 + 2A)\E\left[\left|\frac{\gamma_{\bm x}(\oBo^\top \bX_i)}{h^2} \right| \ind_{\|\bm B_0^\top \xix\|_\infty \le h + A \delta_{\bm B}} \right] \times O(\delta_{\bm B}/h).
  \end{align*}
  Further, $\|\bm B_0^\top \xix\|_\infty \le h + A \delta_{\bm B}$ and $\delta_{\bm B} = o(h)$ imply
  $| \gamma_{\bm x}(\oBo^\top \bX_i)| / h = O(1).$
  Thus,
  \begin{align*}
    |E_2| & \le h^{-d}(1 + 2A) \times O(1) \times \E\left[ \ind_{\|\bm B_0^\top \xix\|_\infty \le h + A \delta_{\bm B}} \right] \times O(\delta_{\bm B}/h)
    = O(\delta_{\bm B}/h).
  \end{align*}
  Putting everything together, we have shown
  \begin{align*}
    \E[\bm a_{n, g}(\bm x, \bm B)] = h^2(E_1 + E_2) = O(h^2 + \delta_{\bm B}h).
  \end{align*}

  For the variance, we have
  \begin{eqnarray*}
    \lefteqn{\left\| \var\left[h^{p - d}\frac{\gamma_{\bm x}(\oBo^\top \bm X_i)}{h^2} \begin{pmatrix} 1 \\ \xix \end{pmatrix} K_h({\bm B}^\top\xix)\right] \right\|} \\
    & \le & h^{-2d}2(1 + A^2)\E\left[\left|\frac{\gamma_{\bm x}(\oBo^\top \bm X_i)}{h^2}\right|^2 K({\bm B}^\top\xix / h)^2 \right] \\
    & \le & O(h^{-2d})  \times \E\left[\ind_{\|\bm B_0^\top \xix\|_\infty \le h + A \delta_{\bm B}} \right]
    = O(h^{-d}),
  \end{eqnarray*}
  by the same arguments as in \Cref{lem:kernel}.
  Using \Cref{prop:ep}, we obtain
  \begin{align*}
    \left\| \bm a_{n, g}(\bm x, \bm B) - \E[ \bm a_{n, g}(\bm x, \bm B)] \right \| = O_p(r_{n, d} h^2 + \delta_{\bm B}h),
  \end{align*}
  uniformly for  $g \in \Gcal, \bm x \in \Dcal_{\bm X}$, $\bm B \in \R^{p \times p}$ with $\delta_{\bm B} / h \to 0$. This completes the proof.
\end{proof}

Now we see that all components of $\bm S_g^{+}(\bm x)  \bm \rho(\bm x)$ except the first cancel out.
This completes the proof of \Cref{thm:ll_refined}.

\subsection{Proof of \autoref{thm:adaptive}} \label{proof:adaptive}

For $t = 0$, arguments similar to \Cref{lem:S} and \Cref{lem:tau} with $p = d$ and $\bm B_0 = \wh{\bm B}^{(0)}$ yield
$ 
  \| \wh{\bm \Delta}_{\Gcal} -  {\bm \Delta}_{\Gcal} \| = O_p(h^2 + r_{n, p}/h).
$ 
By Assumption \ref{ass:eigvals} and \citet[Theorem 2]{yu2015useful},
\begin{align*}
  \delta_{\wh{\bm B}^{(0)}}  = \| \wh{\bm B}^{(0)} - \bm B_0 \| \lesssim \| \wh{\bm \Delta}_{\Gcal} -  {\bm \Delta}_{\Gcal} \| = O_p(h_0^2 + r_{n, p}/h_0) = O_p\{(\ln n / n)^{2/(6 + p)}\}.
\end{align*}
Now consider $t > 0$ and define $r_{n, d, t} = (nh_t^d / \ln n)^{-1/2}$,  $\sigma_{n, t} = h_t^4 + h_t^2r_{n, d, t} + h_t\delta_{\wh{\bm B}^{(t)}} + r_{n, d, t} \delta_{\wh{\bm B}^{(t)}} / h_t$.
First expand
\begin{align*}
  \wh{\bm \Delta}_{g} & = \frac 1 n \sum_{i = 1}^n  \wh \nabla m_g(\bm X_i)  \wh \nabla m_g(\bm X_i)^\top                                                                                                                   \\
                      & =  \frac 1 n \sum_{i = 1}^n \nabla  m_g(\bm X_i) \nabla  m_g(\bm X_i)^\top                                                                                                                          \\
                      & \peq  + \frac 1 n \sum_{i = 1}^n \nabla m_g(\bm X_i)  \{ \wh \nabla m_g(\bm X_i) - \nabla m_g(\bm X_i)\}^\top                                                                                       \\
                      & \peq  + \frac 1 n \sum_{i = 1}^n \{ \wh \nabla m_g(\bm X_i) - \nabla m_g(\bm X_i)\} \nabla  m_g(\bm X_i)^\top                                                                                       \\
                      & \peq + \frac 1 n \sum_{i = 1}^n \{ \wh \nabla m_g(\bm X_i) - \nabla m_g(\bm X_i)\}\{ \wh \nabla m_g(\bm X_i) - \nabla m_g(\bm X_i)\}^\top                                                           \\
                      & = D_1 + D_2 + D_3 + D_4                                                                                                                   = D_1 + D_2 + D_3 + O_p(\sigma_{n, t}^2 + r_{n, d, t}^2).
\end{align*}
$D_1$ is an oracle version of $\wh{\bm \Delta}$ and $D_1 = \bm \Delta_g + O_p(n^{-1/2})$. The other terms come from estimating $\nabla m_g$.

Recall from \autoref{thm:ll_refined}
\begin{align*}
  \sup_{\bm x \in \Dcal_{\bm X}, g \in \Gcal} \biggl\| \wh \nabla m_g(\bm x) -  \nabla m_g(\bm x) - \frac 1 n\sum_{i = 1}^n \bm \psi_n(\bm Y_i, \bm X_i, \bm x) \biggr\| = O_p(\sigma_{n, t}),
\end{align*}
where
\begin{align*}
  \bm \psi_n(\bm Y_i, \bm X_i, \bm x)  = \frac 1 {f_{\bm B_0^\top}(\bm B_0^\top \bm x)}\bm \Gamma^{+}(\bm x) \epsilon_{g}(\bm Y_i, \bm X_i)\bigl\{\bm X_i - \bm \mu(\bm x)\}K_{h_t}({\bm B_0}^\top\xix).
\end{align*}
Then,
\begin{align*}
  D_2 = D_3^\top & =  \frac 1 {n^2} \sum_{i = 1}^n \sum_{j = 1}^n  \nabla  m_g(\bm X_i) \bm \psi_n(\bm Y_j, \bm X_j, \bm X_i)^\top + O_p(\sigma_{n, t}) \\
                 & =  \binom{n}{2}^{-2}\sum_{i = 1}^n \sum_{j = i + 1}^n  \bm A_n(\bm Y_i, \bm X_i, \bm Y_j, \bm X_j) + O_p(\sigma_{n, t} + n^{-1}),
\end{align*}
where
\begin{align*}
  \bm A_n(\bm Y_i, \bm X_i, \bm Y_j, \bm X_j) = \nabla  m_g(\bm X_i) \bm \psi_n(\bm Y_j, \bm X_j, \bm X_i)^\top + \nabla m_g(\bm X_j) \bm \psi_n(\bm Y_i, \bm X_i, \bm X_j)^\top .
\end{align*}
Since $\E\{ \epsilon_{g}(\bm Y_j, \bm X_j) \mid \bm X_j\} = 0$, it holds $\E\{\bm A_n(\bm Y_i, \bm X_i, \bm Y_i, \bm X_j)\} = 0$ and
\begin{align*}
   & \quad \; \E\{\bm A_n(\bm Y_i, \bm X_i, \bm Y_i, \bm X_j)\mid\bm Y_i, \bm X_i\}                                                                                                                                    \\
   & =\E\{\nabla m_g(\bm X_j) \bm \psi_n(\bm Y_i, \bm X_i, \bm X_j)^\top \mid\bm Y_i, \bm X_i\}                                                                                                                        \\
   & = \epsilon(\bm Y_i, \bm X_i)\E\bigl[\nabla m_g(\bm X_j) \bm \Gamma^{+}(\bm X_j)\{\bm X_i - \bm \mu(\bm X_j)\}K_{h_t}\{\bm B_0^\top(\bm X_i - \bm X_j)\}/f_{\bm B_0^\top}(\bm B_0^\top \bm X_j) \mid \bm X_i\bigr] \\
   & =  \epsilon(\bm Y_i, \bm X_i)\bigl[\nabla m_g(\bm X_i)\{\bm X_i - \bm \mu(\bm X_i)\}^\top \bm \Gamma^{+}(\bm X_i)  + O(h_t^2) \bigr]                                                                              \\
   & =: \tilde{\bm A}(\bm Y_i, \bm X_i).
\end{align*}
Computing
$ 
  \var\bigl\{ \bm A_n(\bm Y_i, \bm X_i, \bm Y_j, \bm X_j) - \tilde{\bm A}_n(\bm Y_i, \bm X_i) - \tilde{\bm A}_n(\bm Y_j, \bm X_j)\bigr\} = O(h_t^{-d}),
$ 
standard results for U-statistics \citep[e.g.,][]{korolyuk2013theory} yield
\begin{align*}
  \binom{n}{2}^{-2}\sum_{i = 1}^n \sum_{j = i + 1}^n  \bm A_n(\bm Y_i, \bm X_i, \bm Y_j, \bm X_j) & = \frac 1 n \sum_{i = 1}^n \tilde{\bm A}(\bm Y_i, \bm X_i) + O_p(n^{-1}h_t^{-d/2}) \\
                                                                                                  & = O_p(n^{-1/2} + r_{n, d}^2).
\end{align*}
We have shown that,
$ 
  \| \wh{\bm \Delta}_g - \bm \Delta_g \| = O_p(n^{-1/2}  + \sigma_{n, t} + r_{n, d, t}^2).
$ 
Then
\begin{align*}
  \delta_{\wh{\bm B}^{(t)}} & = O_p(n^{-1/2} + \sigma_{n, t} + r_{n, d, t}^2)                                                                                          \\
                            & = O_p(n^{-1/2}   + h_t^3 + h_t^2r_{n, d, t} + h_t\delta_{\bm B^{(t - 1)}} + r_{n, d, t} \delta_{\wh{\bm B}^{(t)}} / h_t + r_{n, d, t}^2) \\
                            & = O_p\{n^{-1/2}   + h_t^3 + r_{n, d, t}^2 +  \delta_{\wh{\bm B}^{(t - 1)}}(h_0 + r_{n, d, \infty} / h_\infty)\},
\end{align*}
which we write as $O_p(a_{n, t} +  \delta_{\wh{\bm B}^{(t - 1)}}b_n)$ for simplicity.
Iteratively substituting this expression for $\delta_{\wh{\bm B}^{(t - 1)}}$ yields
\begin{align*}
  \delta_{\wh{\bm B}^{(t + 1)}} = O_p\biggl\{a_{n, t} + \sum_{k = 1}^{t - 1}a_{n, t - k}b_n^{k} + \delta_{\wh{\bm B}^{(0)}}b_n^t\biggr\}.
\end{align*}
Since $h_t = \max\{\rho h_{t - 1}, h_{\infty}\}$ with $\rho \in (0, 1)$, it holds $a_{n, t - k} = O(a_{n, t})$ for all $k \le k^*$ and fixed $k^* < \infty$. Since also $b_n = o(1)$,
\begin{align*}
  \delta_{\wh{\bm B}^{(t + 1)}} = O_p\biggl\{a_{n, t} + \sum_{k = k^*}^{t - 1}a_{n, t - k}b_n^{k} + \delta_{\wh{\bm B}^{(0)}}b_n^t\biggr\}.
\end{align*}
Furthermore, there is a finite $k^*$ such that $b_n^{k^*} = O(a_{n, t})$. Hence, for  $t \ge k^*$, $\delta_{\wh{\bm B}^{(t + 1)}} = O_p(a_{n, t}\bigr) = O_p(n^{-1/2}   + h_t^4  + r_{n, d, t}^2).$ Iterating until $h_t = h_\infty$ proves our claim. \qed

	\end{appendices}

	\bibliography{References}

\end{document}